\tikzstyle{observation}=[text=blue]
\tikzstyle{morphism}=[fill=white, draw=black, shape=rectangle]
\tikzstyle{medium box}=[fill=white, draw=black, shape=rectangle, minimum width=0.8cm, minimum height=0.9cm]
\tikzstyle{large morphism}=[fill=white, draw=black, shape=rectangle, minimum width=1.7cm, minimum height=1cm]
\tikzstyle{bn}=[fill=black, draw=black, shape=circle, inner sep=1.5pt]
\tikzstyle{state}=[fill=white, draw=black, regular polygon, regular polygon sides=3, minimum width=0.8cm, shape border rotate=180, inner sep=0pt]
\tikzstyle{medium state}=[fill=white, draw=black, regular polygon, regular polygon sides=3, minimum width=1.3cm, inner sep=0pt, shape border rotate=180]
\tikzstyle{medium effect}=[fill=white, draw=black, regular polygon, regular polygon sides=3, minimum width=1.3cm, inner sep=0pt, shape border rotate=0]
\tikzstyle{large state}=[fill=white, draw=black, regular polygon, regular polygon sides=3, minimum width=2.2cm, shape border rotate=180, inner sep=0pt]
\tikzstyle{wn}=[fill=white, draw=black, shape=circle, inner sep=1.5pt]
\tikzstyle{arrow}=[->]
\tikzstyle{dashed box}=[-, dashed]
\tikzstyle{condition}=[draw=blue, dashed]
\tikzstyle{none}=[]
\tikzset{baseline=(current  bounding  box.center)}
\newcommand{\C}{\mathbb C}
\newcommand{\R}{\mathbb R}
\newcommand{\A}{\mathbb A}
\newcommand{\N}{\mathbb N}
\renewcommand{\det}{\mathrm{det}}
\newcommand{\cond}{\mathbf{Cond}}
\newcommand{\obs}{\mathbf{Obs}}
\newcommand{\del}{\mathrm{del}}
\newcommand{\cpy}{\mathrm{copy}}
\newcommand{\col}{\mathrm{col}}
\newcommand{\term}[1]{\mathrm{#1}}
\newcommand{\id}{\mathrm{id}}
\newcommand{\Id}{\mathrm{Id}}
\newcommand{\swap}{\mathrm{swap}}
\newcommand{\letin}[2]{\term{let}\,{#1}\,={#2}\,\term{in}\,}
\newcommand{\s}{\,|\,}
\newcommand{\obsto}{\leadsto}
\newcommand{\red}{\vartriangleright}
\newcommand{\rv}{\mathrm{R}}
\newcommand{\unit}{\mathrm{I}}
\newcommand{\normal}{\mathrm{normal}}
\newcommand{\sem}[1]{[\![{#1}]\!]}
\newcommand{\gauss}{\mathbf{Gauss}}
\newcommand{\finstoch}{\mathbf{FinStoch}}
\newcommand{\borelstoch}{\mathbf{BorelStoch}}
\newcommand{\eq}{\mathrel{\scalebox{0.4}[1]{${=}$}{:}{\scalebox{0.4}[1]{${=}$}}}}
\newcommand{\eqo}{{:}\scalebox{0.5}[1]{${=}$}\,}
\newcommand{\eqs}{\eqo}
\definecolor{deepblue}{rgb}{0,0,0.5}
\definecolor{deepred}{rgb}{0.6,0,0}
\definecolor{deepgreen}{rgb}{0,0.5,0}
\DeclareFixedFont{\ttb}{T1}{txtt}{bx}{n}{10} 
\DeclareFixedFont{\ttm}{T1}{txtt}{m}{n}{10}  
\ttb\color{deepblue},
\ttb\color{deepred},    
\newcommand*{\mlstinline}[1]{\text{\lstinline|#1|}}
\theoremstyle{plain}
\newtheorem{theorem}{Theorem}[section]
\newtheorem*{theorem*}{Theorem}
\newtheorem{corollary}[theorem]{Corollary}
\newtheorem{lemma}[theorem]{Lemma}
\newtheorem{proposition}[theorem]{Proposition}
\theoremstyle{definition}
\newtheorem{definition}[theorem]{Definition}
\newtheorem{example}[theorem]{Example}
\begin{document}

\title{Compositional Semantics for Probabilistic Programs with Exact Conditioning}

\author{\IEEEauthorblockN{Dario Stein}
\IEEEauthorblockA{University of Oxford, UK}
\and
\IEEEauthorblockN{Sam Staton}
\IEEEauthorblockA{University of Oxford, UK}}


%


\maketitle

\thispagestyle{plain}
\pagestyle{plain}

\begin{abstract}
We define a probabilistic programming language for Gaussian random variables with a first-class exact conditioning construct. We give operational, denotational and equational semantics for this language, establishing convenient properties like exchangeability of conditions. Conditioning on equality of continuous random variables is nontrivial, as the exact observation may have probability zero; this is \emph{Borel's paradox}. Using categorical formulations of conditional probability, we show that the good properties of our language are not particular to Gaussians, but can be derived from universal properties, thus generalizing to wider settings. We define the Cond construction, which internalizes conditioning as a morphism, providing general compositional semantics for probabilistic programming with exact conditioning.
\end{abstract}


%
\IEEEpeerreviewmaketitle

\section{Introduction}

Probabilistic programming is the paradigm of specifying complex statistical models as programs, and performing inference on them. There are two ways of expressing dependence on observed data, thus learning from them: \emph{soft constraints} and \emph{exact conditioning}. Languages like Stan \cite{stan} or WebPPL \cite{dippl} use a scoring construct for soft constraints, re-weighting program traces by observed likelihoods. Other frameworks like Hakaru \cite{narayanan2016probabilistic} or Infer.NET \cite{InferNET18} allow exact conditioning on data. In this paper we provide two semantic analyses of exact conditioning in a simple Gaussian language: a denotational semantics, and a equational axiomatic semantics, which we prove to coincide for closed programs. Our denotational semantics is based on a new and general construction on Markov categories, which, we argue, serve as a good framework for exact conditioning in probabilistic programs. 

\subsection{Case study: Reasoning about a Gaussian Programming Language with Exact Conditions}
\label{sec:introA}

Exact conditioning decouples the generative model from the data observations. Consider the following example for Gaussian process regression (a.k.a. \emph{kriging}): The prior \mlstinline{ys} is a $100$-dimensional multivariate normal (Gaussian) vector; we perform inference by fixing four observed datapoints via exact conditioning $(\eq)$.

\begin{lstlisting}
ys = gp_sample(n=100,kernel=rbf)
for (i,c) in observations:
  ys[i] =:= c
\end{lstlisting}

\begin{figure}[h]
	\centering
	\includegraphics[width=0.5\textwidth]{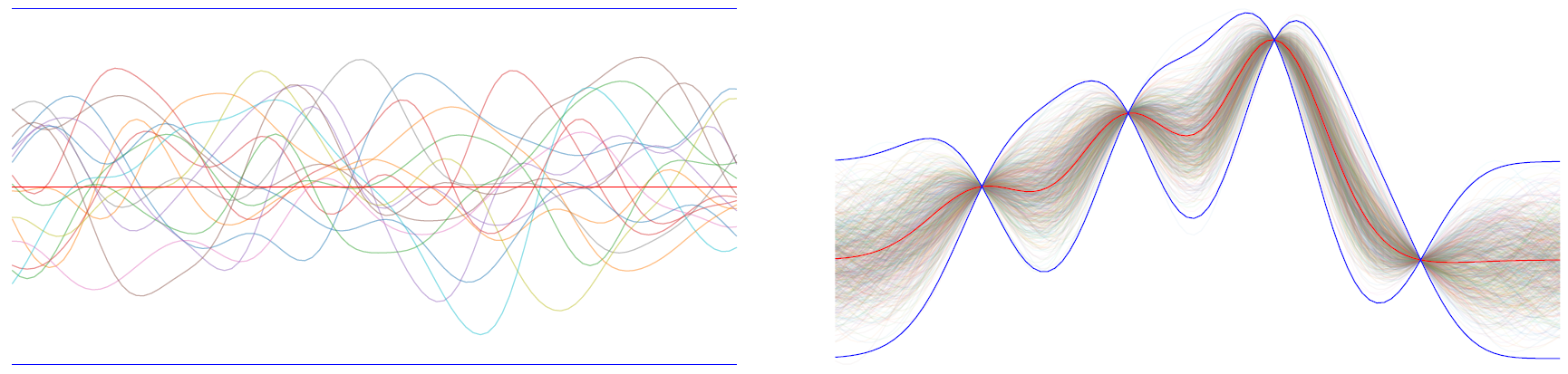}
	\caption{GP prior and posterior with 4 exact observations\label{fig:kriging}}
\end{figure}
The same program is difficult to express compositionally without exact conditioning.

No style of probabilistic modelling is immune to fallacies and paradoxes. Exact conditioning is indeed sensitive in this regard in general (\S\ref{sec:paradoxes}), and so it is important to show that where it is used, it is consistent in a compositional way. That is the contribution of this paper. 

The kriging example (Fig.~\ref{fig:kriging}) uses a smooth kernel, as is common, but to discuss the situation further we consider the following concrete variation with a Gaussian random walk. Suppose that the observation points are at $(0,20,40,60,80,100)$. 
\begin{lstlisting}
ys[0] = normal(0,1)
for i = 1 to 100:
  ys[i] = ys[i-1] + normal(0,1)
for j = 0 to 5:
  ys[20*j] =:= c[j]
\end{lstlisting}
To illustrate the power of compositional reasoning, we note that exact conditioning here is first-class, and as we will show, it is consistent to reorder programs as long as the dataflow is respected (Prop.~\ref{prop:good_properties}).
So this random walk program is equivalent to:
\begin{lstlisting}
ys[0] = normal(0,1)
ys[0] =:= c[0] 
for i = 1 to 100:
  ys[i] = ys[i-1] + normal(0,1)
  if i % 20 == 0: ys[i] =:= c[i % 20] 
\end{lstlisting}
We can now use a substitution law and initialization principle to simplify the program:
\begin{lstlisting}
ys[0] = c[0]
for i = 1 to 100:
  if i % 20 == 0: 
    ys[i] = c[i % 20] 
    (ys[i] - ys[i-1]) =:= normal(0,1) 
  else: ys[i] = ys[i-1] + normal(0,1)
\end{lstlisting}
The constraints are now all `soft', in that they relate an expression with a distribution, and so this last program could be run with a Monte Carlo simulation in Stan or WebPPL. Indeed, the soft-conditioning primitive \lstinline|observe| can be defined in terms of exact conditioning as 
\[ \mlstinline{observe(D,x)} \equiv \mlstinline{(let y = sample(D) in x =:= y)} \]
Our language is by no means restricted to kriging. For example, we can use similar techniques to implement and verify a simple K\'{a}lm\'{a}n filter (\ref{sec:impl}).  

In Section~\ref{sec:gaussian_language} we provide an operational semantics for this language, in which there are two key commands: drawing from a standard normal distribution ($\normal()$) and exact conditioning $(\eq)$. The operational semantics is defined in terms of configurations $(t,\psi)$ where $t$ is a program and $\psi$ is a state, which here is a Gaussian distribution. Each call to $\normal()$ introduces a new dimension into the state~$\psi$, and conditioning $(\eq)$ alters the state~$\psi$, using a canonical form of conditioning for Gaussian distributions (\S\ref{sec:recap_gauss}).

For the program in Figure~\ref{fig:kriging}, the operational semantics will first build up the prior distribution shown on the left in Figure~\ref{fig:kriging}, and then the second part of the program will condition to yield a distribution as shown on the right. But for the other programs above, the conditioning will be interleaved in the building of the model.

In stateful programming languages, composition of programs is often complicated and local transformations are difficult to reason about. But, as we now explain, we will show that for the Gaussian language, compositionality and local reasoning are straightforward. For example, as we have already  illustrated:
\begin{itemize}
\item Program lines can be reordered as long as dataflow is respected. That is, the following \emph{commutativity} equation \cite{staton:sfinite} remains valid for programs with conditioning
\begin{equation}
\begin{array}{l}
\letin x u \\
\letin y v \\
t
\end{array}
\equiv
\begin{array}{l}
\letin y v \\
\letin x u \\
t
\end{array}
\end{equation}
where $x$ not free in $v$ and $y$ not free in $u$.
\item We have a substitutivity property: if $t\eq u$ appears in a program then all other occurrences of $t$ can be replaced by $u$.
  \begin{equation}(t\eq u);v[^t\!/\!_x]\quad \equiv
    \quad (t\eq u);v[^u\!/\!_x]
    \label{eqn:substlong}
    \end{equation}
\item As a special base case, if we condition a normal variable on a constant, then the variable takes this value:
  \begin{equation}\letin x {\normal()} {(x\eq 0);t} \quad \equiv
    \quad t[^0\!/\!_x]
    \label{eqn:initlong}
    \end{equation}
\end{itemize}

\subsection{Denotational semantics, the Cond construction, and Markov categories}
In Section~\ref{sec:denotational} we show that this compositional reasoning is valid by using a denotational semantics. For a Gaussian language with\emph{out} conditioning, we can easily interpret terms as noisy affine functions, $x\mapsto Ax+c+\mathcal N(\Sigma)$. The exact conditioning requires a new construction for building a semantic model.
In fact this construction is not at all specific to Gaussian probability and works generally.

For this general construction, we start from a class of symmetric monoidal categories called \emph{Markov categories}~\cite{fritz}. These can be understood as the categorical counterpart of a broad class of probabilistic programming languages without conditioning (\S\ref{sec:internal_language}). For example, Gaussian probability forms a Markov category, but there are many other examples, including finite probability (e.g.~coin tosses) and full Borel probability.

Our conditioning construction starts from a Markov category~$\C$, regarded as a probabilistic programming language without conditioning. We build a new symmetric monoidal category $\cond(\C)$ which is conservative over $\C$ but which contains a conditioning construct. This construction builds on an analysis of conditional probabilities from the Markov category literature, which captures conditioning purely in terms of categorical structure: there is no explicit Radon-Nikod\'{y}m theorem, limits, reference measures or in fact any measure theory at all. The good properties of the Gaussian language generalize to this abstract setting, as they follow from universal properties alone.

The category $\cond(\C)$ has the same objects as $\C$, but a morphism is reminiscent of the decomposition of the program in Fig~\ref{fig:kriging}: a pair of a purely probabilistic morphism together with an observation. These morphisms compose by composing the generative parts and accumulating the observations (for a graphical representation, see Figure \ref{fig:obs_composition}). The morphisms are considered up-to a natural contextual equivalence.
We prove some general properties about $\cond(\C)$:\begin{enumerate}
\item Proposition~\ref{prop:faithful}: $\cond(\C)$ is consistent, in that no distinct unconditional distributions from $\C$ are equated in $\cond(\C)$.
\item Proposition~\ref{prop:good_properties}: $\cond(\C)$ allows programs to be reordered according to their dataflow graph, i.e.~it satisfies the interchange law of monoidal categories.
\end{enumerate}
Returning to the specific case study of Gaussian probability, we show that we have a canonical interpretation of the Gaussian language in $\cond(\gauss)$, which is fully abstract (Prop~\ref{prop:full_abstraction}). In consequence, the principles of reordering and consistency hold for the contextual equivalence induced by the operational semantics. 

\subsection{Equational axioms}
Our second semantic analysis (\S\ref{sec:equational_theory}) has a more syntactic and concrete flavor. We leave the generality of Markov categories and focus again on the Gaussian language. We present an equational theory for programs and use this to give normal forms for programs.

Our equational theory is surprisingly simple. The first two equations are
\begin{align*}
  &\letin x {\normal()} ()\quad = \quad ()\\[6pt]
  &\letin {x_1} {\normal(), \dots, x_n = \normal()}{U\vec x}  \\= \quad&  \letin {x_1} {\normal(), \dots, x_n =\normal()}{\vec x}
    \end{align*}
    The first equation is sometimes called discarding.
    In the second equation,~$U$ must be an orthogonal matrix, and we are using shorthand for multiplying a vector by a matrix.
    These two equations are enough to fully axiomatize the fragment of the Gaussian language without conditioning (Prop.~\ref{prop:gauss_complete}).
    
    (In Section~\ref{sec:equational_theory} we use a concise notation, writing the first axiom as $\nu x.\mathrm{r}[] = \mathrm{r}[]$. One instance of the second axiom with a permutation matrix for~$U$ is $\nu x.\nu y.\mathrm{r}[x,y] = \nu y.\nu x.\mathrm{r}[x,y]$, reminiscent of name generation in the $\pi$-calculus~\cite{piI} or $\nu$-calculus~\cite{pitts-stark-nu}.) 

The remaining axioms focus on conditioning. There are commutativity axioms for reordering parts of programs, as well as the two substitutivity axioms considered above, \eqref{eqn:substlong}, \eqref{eqn:initlong}. Finally there are two axioms for eliminating a condition that is tautologous $(a\eq a)$ or impossible $(0 \eq 1)$. 

Together, these axioms are consistent, which we can deduce by showing them to hold in the $\cond$ model.
To moreover illustrate the strength of the axioms, we show two normal form theorems by merely using the axioms. Here $\normal_n()$ describes the $n$-dimensional standard normal distribution.
\begin{itemize}
\item Proposition~\ref{prop:normalization}: any closed program is either derivably impossible $(0 \eq 1)$ or derivably equal to a condition-free program of the form ${A*\normal_n() + \vec c}$.
\item Theorem~\ref{prop:normalform_uniqueness}: any program of unit type (with no return value) is either derivably impossible $(0 \eq 1)$ or derivably equal to a soft constraint, i.e.~a program of the form $A*\vec x \eq  B*\normal_n() +\vec c$. We also give a uniqueness criterion on $A$, $B$ and $\vec c$.
  \end{itemize}

\subsection{Summary}
\begin{itemize}
\item We present a minimalist language with exact conditioning for Gaussian probability, with the purpose of studying the abstract properties of conditioning. Despite its simplicity, the language can express Gaussian processes or K\'{a}lm\'{a}n filters. 

\item In order to make the denotational semantics compositional, we introduce the Cond construction, which extends a Markov category $\C$ to a category $\cond(\C)$ in which conditioning internalizes as a morphism. The Gaussian language is recovered as the internal language of $\cond(\gauss)$.

\item We give three semantics for the language -- operational (\S\ref{sec:gaussian_language}), denotational (\S\ref{sec:denotational}) and axiomatic (\S\ref{sec:equational_theory}). We show that the denotational semantics is fully abstract (Proposition~\ref{prop:full_abstraction}) and that the axiomatic semantics is strong enough to derive normal forms (Theorem~\ref{prop:normalform_uniqueness}). This justifies properties like commutativity and substitutivity for the language. Thus probabilistic programming with exact conditioning can serve as a practical foundation for compositional statistical modelling. 

\end{itemize}
\section{A Language for Gaussian Probability}
\label{sec:gaussian_language}
We introduce a typed language (\S\ref{sec:types}), similar to the one discussed in Section~\ref{sec:introA},
and provide an operational semantics (\S\ref{sec:opsem}). 

\subsection{Recap of Gaussian Probability}
\label{sec:recap_gauss}
We briefly recall \emph{Gaussian probability}, by which we mean the treatment of multivariate Gaussian distributions and affine-linear maps (e.g. \cite{lauritzen}). A \emph{(multivariate) Gaussian distribution} is the law of a random vector $X \in \R^n$ of the form $X=AZ + \mu$ where $A \in \R^{n \times m}$, $\mu \in \R^n$ and the random vector $Z$ has components $Z_1, \ldots, Z_m \sim \mathcal N(0,1)$ which are independent and standard normally distributed with density function
\[ \varphi(x) = \frac{1}{\sqrt{2\pi}} e^{-\frac 1 2 x^2} \] The distribution of $X$ is fully characterized by its \emph{mean} $\mu$ and the positive semidefinite \emph{covariance matrix} $\Sigma$. Conversely, for any $\mu$ and positive semidefinite matrix $\Sigma$ there is a unique Gaussian distribution of that mean and covariance denoted $\mathcal N(\mu, \Sigma)$. The vector $X$ takes values precisely in the affine subspace $S=\mu + \col(\Sigma)$ where $\col(\Sigma)$ denotes the column space of $\Sigma$. We call $S$ the \emph{support} of the distribution.

This defines a small convenient fragment of probability theory: Affine transformations of Gaussians remain Gaussian. Furthermore, conditional distributions of Gaussians are again Gaussian. This is known as self-conjugacy. If $X \sim \mathcal N(\mu, \Sigma)$ with
\begin{equation*}
X = \begin{pmatrix} X_1 \\ X_2 \end{pmatrix}, \mu = \begin{pmatrix} \mu_1 \\ \mu_2 \end{pmatrix}, \Sigma = \begin{pmatrix} 
\Sigma_{11} & \Sigma_{12} \\ \Sigma_{21} & \Sigma_{22}
\end{pmatrix}
\end{equation*}
then the conditional distribution $X_1|(X_2 = a)$ of $X_1$ conditional on $X_2=a$ is $\mathcal N(\mu',\Sigma')$ where
\begin{equation}
\mu' = \mu_1 + \Sigma_{12}\Sigma_{22}^{+}(a-\mu_2), \Sigma' = \Sigma_{11} - \Sigma_{12}\Sigma_{22}^+\Sigma_{21}
\label{eq:conjugacy_formula}
\end{equation}
and $\Sigma_{22}^+$ denotes the Moore-Penrose pseudoinverse.

\begin{example}
\label{ex:og_gaussian_diagonal}
If $X,Y \sim \mathcal N(0,1)$ are independent and $Z=X-Y$, then 
\[ (X,Y)|(Z=0) \sim \mathcal N\left(\begin{pmatrix} 0 \\ 0 \end{pmatrix}, \begin{pmatrix} 0.5 & 0.5 \\ 0.5 & 0.5 \end{pmatrix} \right) \]
The posterior distribution is equivalent to the model
\[ X \sim \mathcal N(0,0.5), Y=X \]
\end{example}

\subsection{Types and terms of the Gaussian language}
\label{sec:types}
We now describe a language for Gaussian probability and conditioning. The core language resembles first-order OCaml with a construct $\normal()$ to sample from a standard Gaussian, and conditioning denoted as $(\eq)$. Types $\tau$ are generated from a basic type $\rv$ denoting \emph{real} or \emph{random variable}, pair types and unit type $I$. 
\[ \tau ::= \rv \s \unit \s \tau \ast \tau \]
Terms of the language are
\begin{align*}
e ::= x &\s e + e \s \alpha \cdot e \s \underline{\beta} \s (e,e) \s () \\
      &\s \letin x e e \s \letin {(x,y)} e e \\
      &\s \normal() \s e \eq e 
\end{align*}
where $\alpha,\beta$ range over real numbers. 

Typing judgements are
\[ \infer{\Gamma, x : \tau, \Gamma' \vdash x : \tau}{}  
\qquad
\infer{\Gamma \vdash () : \unit}{}
\qquad
\infer{\Gamma \vdash (s,t) : \sigma \ast \tau}{\Gamma \vdash s : \sigma \quad \Gamma \vdash t : \tau}  
\]
\[ \infer{\Gamma \vdash s + t : \rv}{\Gamma \vdash s : \rv \quad \Gamma \vdash t : \rv} 
\qquad
\infer{\Gamma \vdash \alpha \cdot t : \rv}{\Gamma \vdash t : \rv}
\qquad
\infer{\Gamma \vdash \underline{\beta} : \rv}{}
\]
\[ \infer{\Gamma \vdash \normal() : \rv}{} 
\qquad
\infer{\Gamma \vdash (s \eq t) : \unit}{\Gamma \vdash s : \rv \quad \Gamma \vdash t : \rv}
\]
\[ \infer{\Gamma \vdash \letin x s t : \tau}{\Gamma \vdash s : \sigma \quad \Gamma, x : \sigma \vdash t : \tau}
\]
\[ 
\infer{\Gamma \vdash \letin {(x,y)} s t : \tau}{\Gamma \vdash s : \sigma \ast \sigma' \quad \Gamma, x : \sigma, y : \sigma' \vdash t : \tau}
\]
We define standard syntactic sugar for sequencing $s;t$, identifying the type $\rv^n = \rv \ast (\rv \ast \ldots )$ with vectors and defining matrix-vector multiplication $A \cdot \vec x$. For $\sigma \in \R$ and $e : \rv$, we define $\normal(x,\sigma^2) \equiv x + \sigma \cdot \normal()$. More generally, for a covariance matrix $\Sigma$, we write $\normal(\vec x, \Sigma) = \vec x + A\cdot (\normal(), \ldots, \normal())$ where $A$ is any matrix such that $\Sigma = AA^T$. We can identify any context and type with $\rv^n$ for suitable $n$.

\subsection{Operational semantics}
\label{sec:opsem}

Our operational semantics is call-by-value. Calling $\normal()$ allocates a latent random variable, and a prior distribution over all latent variables is maintained. Calling $(\eq)$ updates this prior by symbolic inference according to the formula \eqref{eq:conjugacy_formula}.

\emph{Values} $v$ and redexes $\rho$ are defined as 
 \begin{align*}
v &::= x \s (v,v) \s v + v \s \alpha \cdot v \s \underline{\beta} \s () \\
\rho &::= \normal() \s v \eq v \s \letin x v e \s \letin {(x,y)} v e
 \end{align*}
A reduction context $C$ with hole $[-]$ is of the form
 \begin{align*}
 C ::= [-] &\s C + e \s v + C \s r \cdot C \s C \eq e \s v \eq C \\
 &\s \letin x C e \s \letin {(x,y)} C e
\end{align*}
Every term is either a value or decomposes uniquely as $C[\rho]$. We define a reduction relation for terms. During the execution, we will allocate latent variables $z_i$ which we assume distinct from all other variables in the program. A \emph{configuration} is either a pair $(e,\psi)$ where $z_1, \ldots, z_r \vdash e$ and $\psi$ is a Gaussian distribution on $\R^r$, or a failure configuration $\bot$. We first define reduction on redexes 
\begin{enumerate}
\item For $\normal()$, we add an independent latent variable to the prior
\[ (\normal(),\psi) \red (z_{\mathrm{r+1}}, \psi \otimes \mathcal N(0,1)) \]

\item To define conditioning, note that every value $z_1, \ldots, z_r \vdash v : R$ defines an affine function $\R^r \to \R$. In order to reduce $(v \eq w, \psi)$, we consider the joint distribution $X \sim \psi, Z = v(X)-w(X)$. If $0$ lies in the support of $Z$, we denote by $\psi|_{v=w}$ the outcome of conditioning $X$ on $Z=0$ as in \eqref{eq:conjugacy_formula}, and reduce
\[ (v \eq w, \psi) \red ((), \psi|_{v=w}) \]
Otherwise $(v \eq w, \psi) \red \bot$, indicating that the inference problem has no solution.
\item Let bindings are standard
\begin{align*} 
(\letin x v e, \psi) &\red (e[v/x], \psi) \\
(\letin {(x,y)} {(v,w)} e, \psi) &\red (e[v/x,w/y], \psi)
\end{align*}

\item Lastly, under reduction contexts, if $(\rho,\psi) \red (e,\psi')$ we define $(C[\rho],\psi) \red (C[e], \psi')$. If $(\rho,\psi) \red \bot$ then $(C[e], \psi) \red \bot$.
\end{enumerate}

\begin{proposition}
Every closed program $\vdash e : R^n$, together with the empty prior `$!$', deterministically reduces to either a configuration $(v,\psi)$ or $\bot$.
\end{proposition}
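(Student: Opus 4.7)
The plan is to establish three properties in turn: \emph{progress} (every non-value, non-$\bot$ configuration has a reduct), \emph{determinism} (each configuration has at most one reduct), and \emph{termination} (no infinite reduction chains). Together these imply the claim, since starting from the empty prior we follow the unique chain of reductions until it stalls, and stalling can only happen at a value configuration or at $\bot$.

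For \emph{progress}, I use the unique-decomposition statement asserted in \S\ref{sec:opsem}: every closed term is either a value or of the form $C[\rho]$ for a unique context $C$ and redex $\rho$. An easy induction on the term grammar confirms this, the only subtle cases being $e_1 + e_2$ and $e_1 \eq e_2$, which must first evaluate $e_1$ to a value before descending into $e_2$, as dictated by the context grammar. Each redex rule applies to \emph{every} well-typed redex at \emph{every} prior: $\normal()$ always extends the prior; let-bindings always substitute; and for $v \eq w$ with $\vdash v,w : \rv$ in context $z_1,\dots,z_r$, each value $v$ denotes a concrete affine function $\R^r\to\R$, so $Z = v(X)-w(X)$ is a genuine Gaussian random variable under $\psi$, and the dichotomy ``$0 \in \mathrm{supp}(Z)$ or not'' is well-defined; in the first case \eqref{eq:conjugacy_formula} yields $\psi|_{v=w}$, in the second we reduce to $\bot$.

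For \emph{determinism}, the uniqueness of the decomposition $C[\rho]$ leaves only one place where reduction can fire, and inspection of the redex rules shows each is a function, not a relation, of the incoming configuration: $\normal()$ deterministically allocates the next fresh latent $z_{r+1}$; let-bindings perform a unique capture-avoiding substitution; and the support test and conditioning formula \eqref{eq:conjugacy_formula} yield a unique $\psi|_{v=w}$ (independent of the choice of $A$ with $\Sigma = AA^T$, since it is a property of the Gaussian law).

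For \emph{termination}, I define a measure $\mu(e)$ counting the total number of occurrences of $\normal()$, $(\eq)$, and let-bindings in $e$. By inspecting the value grammar, values contain \emph{none} of these constructs, so for any value $v$ and term $e$ one has $\mu(e[v/x]) = \mu(e)$. Each reduction on redexes strictly decreases $\mu$ by exactly one, and reduction under a context $C$ inherits this strict decrease because $\mu(C[\rho]) - \mu(C[e']) = \mu(\rho) - \mu(e')$. Hence every reduction chain has length at most $\mu(e)$ and must terminate. The main obstacle is the termination step, because one might worry that let-substitution blows up the term; the careful choice to count only \emph{redex-forming constructs}, combined with the fact that values are redex-free, is what resolves it. Since reduction to $\bot$ is absorbing, the chain ends either at some $(v,\psi)$ or at $\bot$, as required.
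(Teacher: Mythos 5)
Your proof is correct and is exactly the standard progress/determinism/termination argument that the paper leaves implicit (the proposition is stated there without proof); in particular, the measure counting only redex-forming constructs, combined with the observation that values are redex-free so that value substitution preserves the count, is the right way to discharge termination. One small point that your ``easy induction'' for unique decomposition would actually surface: the paper's reduction-context grammar omits pairing contexts $(C,e)$ and $(v,C)$, which are needed for closed terms such as $(\normal(),\normal())$ of product type, so the progress step requires silently adding these clauses (and, for the pair-destructuring let, a canonical-forms observation that every value of product type in a context of $\rv$-variables is literally a pair).
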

We consider the observable result of this execution either failure, or the pushforward distribution $v_*\psi$ on $\R^n$, as this distribution could be sampled from empirically.

\begin{example}
The program
\[ \letin {(x,y)} {(\normal(),\normal())} x \eq y; x+y \]
reduces to $((z_1,z_2), \psi)$ where 
\[ \psi = \mathcal N\left(\begin{pmatrix} 0 \\ 0 \end{pmatrix}, \begin{pmatrix} 0.5 & 0.5 \\ 0.5 & 0.5 \end{pmatrix} \right) \]
The observable outcome of the run is the pushforward distribution $(1\, 1)_*\psi = \mathcal N(0,2)$ on $\R$.
\end{example}

One goal of this paper is to study properties of this language compositionally, and abstractly, without relying on any specific properties of Gaussians. The crucial notion to investigate is contextual equivalence.

\begin{definition}\label{def:ctxequiv}
We say $\Gamma \vdash e_1, e_2 : \tau$ are \emph{contextually equivalent}, written $e_1 \approx e_2$, if for all closed contexts $K[-]$ and $i,j \in \{1,2\}$
\begin{enumerate}
\item when $(K[e_i], !) \red^* (v_i, \psi_i)$ then $(K[e_j], !) \red^* (v_j, \psi_j)$ and $(v_i)_*\psi_i = (v_j)_*\psi_j$
\item when $(K[e_i]) \red^* \bot$ then $(K[e_j],!) \red^* \bot$
\end{enumerate} 
\end{definition}

We study contextual equivalence by developing a denotational semantics for the Gaussian language (\S\ref{sec:denotational}), and proving it fully abstract (Prop.~\ref{prop:full_abstraction}). We furthermore show that these semantics can be axiomatized completely by a set of program equations (\S\ref{sec:equational_theory}). 

We also note nothing conceptually limits our language to only Gaussians. We are running with this example for concreteness, but any family of distributions which can be sampled and conditioned can be used. So we will take care to establish properties of the semantics in a general setting.
\section{Categorical Foundations of Conditioning}
We will now generalize away from Gaussian probability, recovering its convenient structure in the general categorical framework of Markov categories~(\S\ref{sec:markovcats}). We argue that this is a categorical counterpart of probabilistic programming without conditioning~(\S\ref{sec:internal_language}).

\begin{definition}[{\hspace{1sp}\cite[\S~6]{fritz}}]
The symmetric monoidal category $\gauss$ has objects $n \in \N$, which represent the affine space $\R^n$, and $m \otimes n = m+n$. Morphisms $m \to n$ are tuples $(A,b,\Sigma)$ where $A \in \R^{n \times m}, b \in \R^n$ and $\Sigma \in \R^{n \times n}$ is a positive semidefinite matrix. The tuple represents a stochastic map $f : \R^m \to \R^n$ that is affine-linear, perturbed with multivariate Gaussian noise of covariance $\Sigma$, informally written
\[ f(x) = Ax + b + \mathcal N(\Sigma) \]
Such morphisms compose sequentially and in parallel in the expected way, with noise accumulating independently
\begin{align*} (A,b,\Sigma) \circ (C,d,\Xi) &= (AC, Ad + b, A\Xi A^T + \Sigma) \\
(A,b,\Sigma) \otimes (C,d,\Xi) &= \left( 
\begin{pmatrix} A & 0 \\ 0 & C \end{pmatrix},
\begin{pmatrix} b \\ d \end{pmatrix},
\begin{pmatrix} \Sigma & 0 \\ 0 & \Xi \end{pmatrix} \right)
\end{align*}
\end{definition}
$\gauss$ furthermore has ability to introduce correlations and discard values by means of the affine maps $\cpy : \R^n \to \R^{n+n}, x \mapsto (x,x)$ and $\del : \R^n \to \R^0, x \mapsto ()$. This gives $\gauss$ the structure of a categorical model of probability, namely a Markov category. 

\subsection{Conditioning in Markov and CD categories}
\label{sec:markovcats}
\emph{Markov} and \emph{CD categories} are a formalism that is increasingly widely used (e.g. \cite{representable_markov, infinite_products}). We review their graphical language, and theory of conditioning.

\begin{definition}[{\hspace{1sp}\cite{cho-jacobs}}]
A \emph{copy-delete (CD) category} is a symmetric monoidal category $(\C, \otimes, I)$ in which every object $X$ is equipped with the structure of a commutative comonoid $\cpy_X : X \to X \otimes X$, $\del_X : X \to I$ which is compatible with the monoidal structure.
\end{definition}

In CD categories, morphisms $f : X \to Y$ need not be \emph{discardable}, i.e. satisfy $\del_Y \circ f = \del_X$. If they are, we obtain a Markov category. 
\begin{definition}[{\hspace{1sp}\cite{fritz}}]
A \emph{Markov category} is a CD category in which every morphism is discardable, i.e. $\del$ is natural. Equivalently, the unit $I$ is terminal. 
\end{definition}

Beyond $\gauss$, further examples of Markov categories are the category $\finstoch$ of finite sets and stochastic matrices, and the category $\borelstoch$ of Markov kernels between standard Borel spaces. CD categories generalize \emph{unnormalized} measure kernels.

The interchange law of $\otimes$ encodes exchangeability (Fubini's theorem) while the discardability condition signifies that probability measures are normalized to total mass $1$. We introduce the following terminology: States $\mu : I \to X$ are also called \emph{distributions}, and if $f : A \to X \otimes Y$, we denote its \emph{marginals} by $f_X : A \to X, f_Y : A \to Y$. Copying and discarding allows us to write tupling $\langle f, g\rangle$ and projection $\pi_X$, however note that the monoidal structure is only semicartesian, i.e. $f \neq \langle f_X, f_Y \rangle$ in general. We use string diagram notation for symmetric monoidal categories, and denote the comonoid structure as

\[ \begin{tikzpicture}[scale=0.6]
	\begin{pgfonlayer}{nodelayer}
		\node [style=bn] (0) at (3, 1) {};
		\node [style=bn] (1) at (-1.75, 0) {};
		\node [style=none] (2) at (3, -1) {};
		\node [style=none] (3) at (-1.75, -1) {};
		\node [style=none] (4) at (-2.75, 1) {};
		\node [style=none] (5) at (-0.75, 1) {};
		\node [style=none] (10) at (2, 0) {$=$};
		\node [style=none] (11) at (-3.75, 0) {$=$};
		\node [style=none] (12) at (1, 0) {$\del_X$};
		\node [style=none] (13) at (-5.25, 0) {$\cpy_X$};
	\end{pgfonlayer}
	\begin{pgfonlayer}{edgelayer}
		\draw [bend right=45, looseness=1.25] (4.center) to (1);
		\draw (1) to (3.center);
		\draw [bend right=45] (1) to (5.center);
		\draw (2.center) to (0);
	\end{pgfonlayer}
\end{tikzpicture} \]

\begin{definition}[{\hspace{1sp}\cite[10.1]{fritz}}]
A morphism $f : X \to Y$ is called \emph{deterministic} if it commutes with copying, that is
\[ \cpy_Y \circ f = (f \otimes f) \circ \cpy_X \]
In a Markov category, the wide subcategory $\C_\det$ of deterministic maps \emph{is} cartesian, i.e. $\otimes$ is a product. 
\end{definition}
A morphism $(A,b,\Sigma)$ in $\gauss$ is deterministic iff $\Sigma=0$. The deterministic subcategory $\A = \gauss_\det$ consists of the spaces $\R^n$ and affine maps $x \mapsto Ax+b$ between them. \\

We recall the theory of conditioning for Markov categories. 

\begin{definition}[{\hspace{1sp}\cite[11.1,11.5]{fritz}}]
\label{def:conditional}
A \emph{conditional distribution} for $\psi : I \to X \otimes Y$ is a morphism $\psi|_X : X \to Y$ such that 
\begin{equation}
\label{eq:def_cond}
\begin{tikzpicture}[scale=0.5]
	\begin{pgfonlayer}{nodelayer}
		\node [style=medium state] (0) at (-2.5, -1) {$\psi$};
		\node [style=none] (1) at (0, 0) {$=$};
		\node [style=medium state] (2) at (3.25, -2) {$\psi$};
		\node [style=morphism] (3) at (4, 1) {$\psi_{|X}$};
		\node [style=none] (4) at (1.5, 0.75) {};
		\node [style=none] (5) at (3.75, -1.5) {};
		\node [style=none] (6) at (-1.75, -0.5) {};
		\node [style=none] (7) at (2.75, -1.5) {};
		\node [style=none] (8) at (-3.25, -0.5) {};
		\node [style=none] (9) at (-3.25, 1.5) {};
		\node [style=none] (10) at (-1.75, 1.5) {};
		\node [style=bn] (11) at (3.75, -0.75) {};
		\node [style=bn] (12) at (2.75, -0.5) {};
		\node [style=none] (13) at (4, 2.25) {};
		\node [style=none] (14) at (1.5, 2.25) {};
		\node [style=none] (15) at (-3.25, 2) {$X$};
		\node [style=none] (16) at (-1.75, 2) {$Y$};
		\node [style=none] (17) at (1.5, 2.75) {$X$};
		\node [style=none] (18) at (4, 2.75) {$Y$};
	\end{pgfonlayer}
	\begin{pgfonlayer}{edgelayer}
		\draw (9.center) to (8.center);
		\draw (10.center) to (6.center);
		\draw (11) to (5.center);
		\draw (12) to (7.center);
		\draw [bend left=45] (12) to (4.center);
		\draw [in=-90, out=0, looseness=0.75] (12) to (3);
		\draw (3) to (13.center);
		\draw (14.center) to (4.center);
	\end{pgfonlayer}
\end{tikzpicture}
\end{equation}
A \emph{(parameterized) conditional} for $f : A \to X \otimes Y$ is a morphism $f|_X : X \otimes A \to Y$ such that
\begin{equation*}
\begin{tikzpicture}[scale=0.3]
	\begin{pgfonlayer}{nodelayer}
		\node [style=none] (1) at (5, 2) {};
		\node [style=none] (2) at (4.5, 5.5) {};
		\node [style=none] (3) at (4.5, 4) {};
		\node [style=none] (7) at (2, 6) {$X$};
		\node [style=medium box] (14) at (4.5, 3) {$f_{|X}$};
		\node [style=none] (20) at (3, 0.5) {};
		\node [style=none] (21) at (4, -1) {};
		\node [style=none] (22) at (3.5, -3) {};
		\node [style=none] (23) at (4, 2) {};
		\node [style=bn] (24) at (4, 0) {};
		\node [style=none] (26) at (-1.75, 0.25) {};
		\node [style=none] (29) at (-3.25, 0.25) {};
		\node [style=bn] (33) at (5, -4.5) {};
		\node [style=none] (34) at (6.5, -3.5) {};
		\node [style=none] (35) at (5, -5.5) {};
		\node [style=none] (37) at (0, 0) {$=$};
		\node [style=none] (38) at (-3.25, 3.75) {};
		\node [style=bn] (39) at (3, 0.5) {};
		\node [style=none] (40) at (3, -1) {};
		\node [style=none] (41) at (2, 5.5) {};
		\node [style=none] (42) at (-2.5, -4.25) {};
		\node [style=none] (43) at (-1.75, 3.75) {};
		\node [style=medium box] (48) at (-2.5, -0.25) {$f$};
		\node [style=medium box] (49) at (3.5, -2) {$f$};
		\node [style=none] (50) at (-2.5, -4.75) {$A$};
		\node [style=none] (51) at (-3.25, 4.25) {$X$};
		\node [style=none] (52) at (-1.75, 4.25) {$Y$};
		\node [style=none] (53) at (4.5, 6) {$Y$};
		\node [style=none] (54) at (5, -6) {$A$};
	\end{pgfonlayer}
	\begin{pgfonlayer}{edgelayer}
		\draw [style=none] (3.center) to (2.center);
		\draw [style=none, in=0, out=-90] (23.center) to (20.center);
		\draw [style=none] (24) to (21.center);
		\draw [style=none, in=-90, out=180, looseness=0.75] (33) to (22.center);
		\draw [style=none, in=-90, out=0] (33) to (34.center);
		\draw [style=none, in=-90, out=90, looseness=0.50] (34.center) to (1.center);
		\draw [style=none] (35.center) to (33);
		\draw [style=none, in=180, out=-90, looseness=0.50] (41.center) to (20.center);
		\draw [style=none] (20.center) to (40.center);
		\draw [style=none] (29.center) to (38.center);
		\draw [style=none] (26.center) to (43.center);
		\draw (42.center) to (48);
	\end{pgfonlayer}
\end{tikzpicture}
\end{equation*}
\end{definition}

Parameterized conditionals can be specialized to conditional distributions in the following way
\begin{proposition}
\label{prop:specialization}
If $f : A \to X \otimes Y$ has conditional $f|_X : X \otimes A \to Y$ and $a : I \to X$ is a deterministic state, then $f|_X(\id_X \otimes a)$ is a conditional distribution for $fa$.
\end{proposition}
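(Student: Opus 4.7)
The plan is to verify the defining equation of Definition~\ref{def:conditional} for the distribution $\psi := fa$ with candidate conditional $g := f|_X \circ (\id_X \otimes a)$ (reading $a : I \to A$, which is what makes the composites type-check). Concretely, I will show
$$ fa \;=\; (\id_X \otimes g) \circ \cpy_X \circ (fa)_X. $$

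The first step is to post-compose the parameterized conditional equation for $f$ (the second diagram of Definition~\ref{def:conditional}) with $a : I \to A$. This yields an expression for $fa$ in which $a$ is fed into the copy-delete structure sitting on the $A$-wire of the right-hand diagram: one copy of the $A$-wire goes into $f$ (with $Y$ discarded) to produce the $X$-marginal, while the other copy of the $A$-wire is fed into $f|_X$ alongside a $\cpy_X$-duplicated $X$-wire.

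The decisive step is to exploit the determinism of $a$. Since $\cpy_A \circ a = a \otimes a$ (using that $\cpy_I = \id_I$), the two $A$-wires emerging from $\cpy_A$ become two independent copies of $a$. One copy, composed with $f$ and then the discard on $Y$, is exactly $f_X \circ a = (fa)_X$ by definition of the marginal. The other copy can be absorbed into the block immediately below $f|_X$: by functoriality of $\otimes$ we recognize $f|_X \circ (\id_X \otimes a) = g$, and the remaining structure is precisely $(\id_X \otimes g) \circ \cpy_X$ applied to $(fa)_X$. This is the desired equation.

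The main obstacle is merely the string-diagram bookkeeping; the only genuinely conceptual ingredient is the use of determinism to duplicate $a$. Without this hypothesis, feeding two independent samples of $a$ into $f$ and into $f|_X$ would introduce extra randomness and destroy the correlation between the $X$-output and the value being conditioned on, so the identity would fail — this is why the proposition restricts to deterministic states.
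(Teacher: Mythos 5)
Your proof is correct and is the evident intended argument (the paper states Proposition~\ref{prop:specialization} without giving a proof): post-compose the parameterized-conditional equation for $f$ with $a$, use determinism to rewrite $\cpy_A \circ a$ as $a \otimes a$, and absorb the two copies into $(fa)_X = f_X \circ a$ and into $g = f|_X(\id_X \otimes a)$ respectively, recovering the defining equation \eqref{eq:def_cond} for $fa$. You also rightly read the hypothesis as $a : I \to A$ (the statement's ``$a : I \to X$'' is a typo), without which the composites do not type-check.
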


All of our examples $\finstoch$, $\borelstoch$ and $\gauss$ have conditionals\cite{fritz, bogachev}. For $\gauss$, this captures the self-conjugacy of Gaussians \cite{jacobs:conjugatepriors}. An explicit formula generalizing \eqref{eq:conjugacy_formula} is given in \cite{fritz}, but we shall only require the existence of conditionals and work with their universal property. 

\begin{definition}[{\hspace{1sp}\cite[13.1]{fritz}}]
Let $\mu : I \to X$ be a distribution. Parallel morphisms $f,g : X \to Y$ are called \emph{$\mu$-almost surely equal}, written $f=_\mu g$, if $ \langle \id_X, f \rangle \mu = \langle \id_X, g \rangle \mu$.
\end{definition}

Conditional distributions for a given distribution $\mu : I \to X \otimes Y$ are generally not unique. However, it follows from definition that they are $\mu_X$-almost surely equal. In order to uniquely evaluate conditionals at a point, we need to descend from the global universal property to individual inputs. This is achieved by the absolute continuity relation.

\begin{definition}[{\hspace{1sp}\cite[2.8]{representable_markov}}]
Let $\mu, \nu : I \to X$ be two distributions. We write $\mu \ll \nu$ if for all $f, g : X \to Y$, $f=_\nu g$ implies $f=_\mu g$.
\end{definition}

\begin{lemma}
\label{lemma:llunique}
If $f, g : X \to Y$ are $\mu$-almost surely equal and $x : I \to X$ satisfies $x \ll \mu$ then $fx=gx$.
\end{lemma}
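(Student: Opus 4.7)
The plan is to unfold the two definitions in sequence and then project out a marginal. By the hypothesis $x \ll \mu$, the definition of absolute continuity says that any $\mu$-almost sure equality between parallel morphisms is inherited by $x$. Applied to the assumption $f =_\mu g$, this immediately gives $f =_x g$, which by the definition of almost sure equality means
\[ \langle \id_X, f \rangle \circ x \;=\; \langle \id_X, g \rangle \circ x \]
as states $I \to X \otimes Y$.

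It then remains to pass from this equality of joint distributions to equality of their $Y$-marginals. I would postcompose both sides with $\del_X \otimes \id_Y$. Since $\langle \id_X, f \rangle = (\id_X \otimes f) \circ \cpy_X$, the comonoid counit law $(\del_X \otimes \id_X) \circ \cpy_X = \id_X$ together with bifunctoriality of $\otimes$ simplifies
\[ (\del_X \otimes \id_Y) \circ (\id_X \otimes f) \circ \cpy_X \;=\; f \]
(up to the left unitor), and likewise on the $g$-side. Composing on the right with $x$ then yields $fx = gx$, as required.

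There is no real obstacle here: the argument is essentially just the chain $x \ll \mu \Rightarrow f =_x g \Rightarrow fx = gx$, where the last implication is marginalisation. The only thing to check carefully is the marginalisation step, which is purely an application of the CD-category comonoid axioms and requires no additional Markov-specific machinery beyond what is already built into the definitions of $=_\mu$ and $\ll$.
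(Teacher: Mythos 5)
Your argument is correct and is exactly the intended one: the paper states this lemma without proof precisely because it follows immediately by unfolding $x \ll \mu$ to get $f =_x g$ and then marginalising onto $Y$ via the comonoid counit law, as you do. The marginalisation step is valid in any CD category and needs no discardability of $f$ or $g$, so there is nothing missing.
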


\begin{proposition}
\label{prop:gaussian_supports}
For a distribution $\mu = \mathcal N(b,\Sigma) : 0 \to m$ in $\gauss$, let $S = b+\mathrm{col}(\Sigma)$ be its support as in \S\ref{sec:recap_gauss}. Then
\begin{itemize}
\item If $f, g : m \to n$ are morphisms, then $f =_\mu g$ iff $fx = gx$ for all $x \in S$, seen as deterministic states $x : 0 \to m$.
\item If $\nu : 0 \to m$ then $\mu \ll \nu$ iff the support of $\mu$ is contained in the support of $\nu$
\item In particular for $x : 0 \to m$ deterministic, $x \ll \mu$ iff $x \in S$.
\end{itemize}
\end{proposition}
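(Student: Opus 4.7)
The plan is to establish parts (1), (2), (3) in that order; part (1) carries the linear-algebra content, and parts (2) and (3) then follow cleanly from it.

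For part (1), I would parameterize $f = (A, c, \Xi)$ and $g = (A', c', \Xi')$ and unpack the joint Gaussian distributions $\langle \id_m, f \rangle \mu$ and $\langle \id_m, g\rangle \mu$ on $\R^{m+n}$, computing their mean vectors and covariance matrices explicitly. Equality of two Gaussians reduces to equality of means and covariances, which here amounts to three matrix/vector conditions: $Ab+c = A'b+c'$, $A\Sigma = A'\Sigma$, and $A\Sigma A^T + \Xi = A'\Sigma (A')^T + \Xi'$. The first two jointly say exactly that the affine maps $x \mapsto Ax+c$ and $x \mapsto A'x+c'$ agree on the affine subspace $S = b + \col(\Sigma)$, since $(A-A')\Sigma = 0$ forces $A-A'$ to vanish on $\col(\Sigma)$. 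Substituting $A\Sigma = A'\Sigma$ into the third equation collapses it to $\Xi = \Xi'$. Because $fx = \mathcal N(Ax+c,\Xi)$ and $gx = \mathcal N(A'x+c',\Xi')$, these three conditions are equivalent to $fx = gx$ for every $x \in S$, giving the biconditional.

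Part (2) follows from (1) by a support-wise argument. Let $T$ denote the support of $\nu$. If $S \subseteq T$, then any $f, g$ agreeing on $T$ automatically agree on $S$, so by (1) we get $\mu \ll \nu$. For the converse, if $x \in S \setminus T$, I would exhibit deterministic morphisms $f, g : m \to 1$ with $f =_\nu g$ but $fx \neq gx$: take $g = 0$ and $f = \phi$ for some affine functional $\phi$ vanishing on $T$ but with $\phi(x) \neq 0$, which exists because $T$ is an affine subspace not containing $x$. Part (3) is then immediate: a deterministic state $x : 0 \to m$ has support $\{x\}$, so by (2), $x \ll \mu$ iff $\{x\} \subseteq S$ iff $x \in S$.

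The main obstacle is really the bookkeeping in part (1) --- tracking how $\col(\Sigma)$ simultaneously governs the evaluation of $f$ on $S$ and the collapse of the noise equation. Once the correspondence between the abstract condition $\langle \id_m,f\rangle\mu = \langle \id_m,g\rangle\mu$ and the pointwise data is pinned down, parts (2) and (3) are routine affine-geometry consequences.
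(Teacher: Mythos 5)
Your proof is correct, but part (1) is argued by a genuinely different route than the paper's. The paper proves the first item by embedding $\gauss$ faithfully into $\borelstoch$, invoking the measure-theoretic characterization of $=_\mu$ as almost-everywhere equality of kernels, noting that $\mu$ is equivalent to Lebesgue measure on $S$, and then using continuity of Gaussian kernels to upgrade ``$fx=gx$ for almost all $x\in S$'' to ``for all $x\in S$.'' You instead stay entirely inside $\gauss$: since a state $0\to m+n$ is literally a tuple (mean, covariance), the equation $\langle\id_m,f\rangle\mu=\langle\id_m,g\rangle\mu$ unpacks to the three conditions $Ab+c=A'b+c'$, $A\Sigma=A'\Sigma$, and $A\Sigma A^T+\Xi=A'\Sigma(A')^T+\Xi'$, and your identification of the first two with agreement of the affine parts on $S=b+\col(\Sigma)$ and the collapse of the third to $\Xi=\Xi'$ (using symmetry of $\Sigma$ twice) is exactly right. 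Your version is more elementary and self-contained --- no Giry monad, no a.e.-to-everywhere continuity step --- and it makes visible precisely which matrix identities encode agreement on the support; the paper's version imports more machinery but slots directly into the surrounding Markov-category framework and reuses the same embedding used elsewhere (e.g.\ for $\borelstoch$ in Prop.~\ref{prop:stoch_support}). Your parts (2) and (3) coincide with the paper's argument: the separating affine functional for $x\in S_\mu\setminus S_\nu$ is the same witness the paper uses, and the specialization to deterministic states via support $\{x\}$ is the intended reading of the third item.
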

There is a general notion of support in Markov categories defined in \cite{fritz} which agrees with $S$, but we will formulate our results in terms of the more flexible notion $\ll$.
\begin{proof}
See appendix, where we also characterize $\ll$ for $\finstoch$ and $\borelstoch$.
\end{proof}

We give an example of how to use the categorical conditioning machinery in practice.
\begin{example}
\label{ex:gaussian_diagonal}
The statistical model from Example \ref{ex:og_gaussian_diagonal}
\begin{align*}
X \sim \mathcal N(0,1);\,
Y \sim \mathcal N(0,1);\,
Z = X - Y
\end{align*}
corresponds to the distribution $\mu : 0 \to 3$ with covariance matrix
\[ \Sigma = \begin{pmatrix} 1 & 0 & 1 \\ 0 & 1 & 1 \\ 1 & 1 & 2 \end{pmatrix} \]
A conditional with respect to $Z$ is
\[ \mu|_Z(z) = \begin{pmatrix}0.5 \\ 0.5\end{pmatrix}z + \mathcal N\begin{pmatrix}
0.5 & 0.5 \\ 0.5 & 0.5
\end{pmatrix} \]
which can be verified by calculating \eqref{eq:def_cond}. We wish to condition on $Z=0$. The marginal $\mu_Z = \mathcal N(2)$ is supported on all of $\R$, hence $0 \ll \mu_Z$ and by Lemma \ref{lemma:llunique} the composite 
\[ \mu|_{Z}(0) = \mathcal N\begin{pmatrix}
0.5 & 0.5 \\ 0.5 & 0.5
\end{pmatrix} \]
is uniquely defined and represents the posterior distribution over $(X,Y)$.
\end{example}
\subsection{Internal language of Markov categories}
\label{sec:internal_language}

There is a strong correspondence between first-order probabilistic programming languages and the categorical models of probability, via their internal languages. The internal language of a CD category~$\C$ has types 

\[ \tau ::= X \s I \s \tau \ast \tau \]

where $X$ ranges over objects of $\C$. Any type~$\tau$ can be regarded as an object $\sem \tau$ of $\C$, 
via $\sem{X}=X$, $\sem{I}=I$, and $\sem{\tau_1 \ast\tau_2}=
\sem{\tau_1}\otimes \sem{\tau_2}$. 
The terms of the internal language are like the language of Section~\ref{sec:gaussian_language}, built from $\letin x t u$, free variables and pairing, but 
instead of Gaussian-specific constructs like $\normal()$, $+$, and $\eq$, we have terms for any morphisms in~$\C$:
\[
    \infer[(f:\sem{\tau_1}{\otimes} \dots{\otimes} \sem{\tau_n}\to \sem{\tau'}\text{ in $\C$})]{\Gamma\vdash f(t_1\dots t_n) : \tau'}
    {\Gamma\vdash t_1:\tau_1\ \dots\ \Gamma\vdash t_n:\tau_n}
\]
Taking $\C=\gauss$ we recover the conditioning-free fragment of the language of Section~\ref{sec:gaussian_language} (\ref{ex:gauss_interpretation}), but the syntax makes sense for any CD or Markov category. A core result of this work is that the full language can be recovered as well for a CD category $\C=\cond(\gauss)$ (\S\ref{sec:cond}).

A typing context $\Gamma={(x_1:\tau_1\dots x_n:\tau_n)}$ is interpreted as $\sem{\Gamma}=\sem{\tau_1}\otimes \dots\otimes \sem{\tau_n}$. A term in context $\Gamma\vdash t :\tau$ 
is interpreted as a morphism $\sem\Gamma\to \sem \tau$, 
defined by induction on the structure of typing derivations. 
This is similar to the interpretation of a dual linear $\lambda$-calculus in a monoidal category \cite[\S3.1,\S4]{barberdill}, although because every type supports copying and discarding we do not need to distinguish between linear and non-linear variables. For example, 
\begin{align*}&\sem{\letin x t u}=
\sem\Gamma {\xrightarrow{\cpy}}  \sem\Gamma\otimes \sem\Gamma
{\xrightarrow{\sem t \otimes \id}} \sem A \otimes \sem\Gamma 
{\xrightarrow {\sem u}} \sem B
\\&\sem{\Gamma,x:\tau,\Gamma'\vdash x:\tau}=
\sem\Gamma\otimes \sem\tau\otimes\sem{\Gamma'}
\xrightarrow{\del\otimes\id_{\sem{\tau}}\otimes \del}\sem\tau
\end{align*}
The interpretation always satisfies the following identity, associativity and commutativity equations:
\begin{align}
&\sem{\letin y {(\letin x t u)} v} = \sem{\letin x t \letin y u v} \notag \\
&\sem{\letin x t x} = \sem t\quad\quad \sem{\letin x x u} = \sem u \label{eqn:commutativity} \\
&\sem{\letin x t {\letin y u v}} = \sem{\letin y u {\letin x t v}}\notag 
\end{align}
where $x$ not free in $u$ and $y$ not free in $t$. There are also standard equations for tensors~\cite[\S3.1]{statonlevy}, which always hold. 

We can always substitute terms for free variables: if we have $\Gamma,x:A\vdash t : B$ and $\Gamma\vdash u : A$ 
then $\Gamma\vdash t[^u\!/\!_x]:B$. 
In any CD category we have
\[\sem{\letin x t u}=\sem{u[^t\!/\!_x]}\text{ if $x$ occurs exactly once in $u$.}\]
In a Markov category, moreover, every term is discardable: 
\[\sem{\letin x t u}=\sem{u[^t\!/\!_x]}\text{ if $x$ occurs at most once in $u$.}\]
(It is common to also define a term to be \emph{copyable} if a version of the substitution condition holds when $x$ occurs \emph{at least once} (e.g.~\cite{fuhrmann,kammarplotkin}), but we will not need that in what follows.)

\begin{example}
\label{ex:gauss_interpretation}
The fragment of the Gaussian language without conditioning ($\eq$) is a subset of the internal language of the category $\gauss$.
That is to say, there is a canonical denotational semantics of the Gaussian language where we interpret types and contexts as objects of $\gauss$, e.g. $\sem{\rv}=1$ and $\sem{(x:\rv,y:\rv \otimes \rv)} = 3$. Terms $\Gamma\vdash t:A$ are interpreted as stochastic maps $Ax+b+ \mathcal N(\Sigma)$. This is all automatic once we recognize that addition $(+) : 2 \to 1$, scaling $\alpha \cdot (-) : 1 \to 1$, constants $\underline{\beta} : 0 \to 1$ and sampling $\mathcal N(1) : 0 \to 1$ are morphisms in $\gauss$.
\end{example}

\begin{example} In Section~\ref{sec:cond}, we will show that the full Gaussian language with conditioning ($\eq$) is the internal language of a CD category. 
 The fact that commutativity~\eqref{eqn:commutativity} holds is non-trivial. It cannot reasonably be the internal language of a Markov category, because conditions $(\eq)$ cannot be discardable. For example there is no non-trivial morphism $(\eq) : 2 \to 0$ in $\gauss$. \end{example}
\section{Cond -- Compositional Conditioning}
\label{sec:cond}

Let $\C$ be a Markov category with conditionals (\S\ref{sec:markovcats}). For simplicity of notation, we assume $\C$ to be \emph{strictly monoidal}.

We construct a new category $\cond(\C)$ by adding to this category the ability to condition on fixed observations. By \emph{observation} we mean a deterministic state $o : I \to X$, and we seek to add for each of those a conditioning effect $(\eqo o) : X \to I$.

Our constructions proceed in two stages. We first (\S\ref{sec:obscat}) form a category $\obs(\C)$ on the same objects as $\C$ where $(\eqo o)$ is added purely formally. A morphism $X \obsto Y$ in $\obs(C)$ represents an intensional open program of the form
\begin{equation}
\label{eq:obs_normalform}
x : X \vdash \letin {(y,k) : Y \otimes K} {f(x)} {(k \eqo o); y}
\end{equation}
We think of $K$ as an additional hidden output wire, to which we attach the observation $o$. Such programs compose the obvious way, by aggregating observations (see Fig.~\ref{fig:obs_composition}).

In the second stage (\S\ref{sec:condcat}) -- this is the core of the paper -- we relate such open programs to the conditionals present in~$\C$, that is we quotient by contextual equivalence. The resulting quotient is called $\cond(\C)$. Under sufficient assumptions, this will have the good properties of a CD category. 

\subsection{Step 1 (Obs): Adding conditioning}\label{sec:obscat}
\begin{definition}
\label{def:obs}
The following data define a symmetric premonoidal category $\obs(\C)$:
\begin{itemize}
	\item the object part of $\obs(\C)$ is the same as $\C$ 
	\item morphisms $X \obsto Y$ are tuples $(K,f,o)$ where $K \in \mathrm{ob}(\C)$, $f \in \C(X,Y\otimes K)$ and $o \in \C_{\mathrm{det}}(I,K)$ 
	\item The identity on $X$ is $\Id_X = (I,\id_X,!)$ where $!=\id_I$.
	\item Composition is defined by
	\[ (K',f',o') \bullet (K,f,o) = (K' \otimes K, (f' \otimes \id_K)f, o' \otimes o). \]
	\item if $(K,f,o) : X \obsto Y$ and $(K',f',o') : X' \obsto Y'$, their tensor product is defined as
	\[ (K' \otimes K, (\id_{Y'} \otimes \swap_{K',Y} \otimes \id_K)(f' \otimes f), o' \otimes o) \]
	\item There is an identity-on-objects functor $J : \C \to \obs(\C)$ that sends $f : X \to Y$ to $(I,f,!) : X \obsto Y$. This functor is strict premonoidal and its image central
	\item $\obs(\C)$ inherits symmetry and comonoid structure
\end{itemize}
\end{definition}

A premonoidal category (due to \cite{power:premonoidal}) is like a monoidal category where the interchange law need not hold. This is the case because $\obs(\C)$ does not yet identify observations arriving in different order. This will be remedied automatically later when passing to the quotient $\cond(\C)$.

Composition and tensor can be depicted graphically as in Figure \ref{fig:obs_composition}, where dashed wires indicate condition wires $K$ and their attached observations $o$. For an observation $o : I \to K$, the conditioning effect $(\eqo o) : K \obsto I$ is given by $(I,\id_K,o)$.

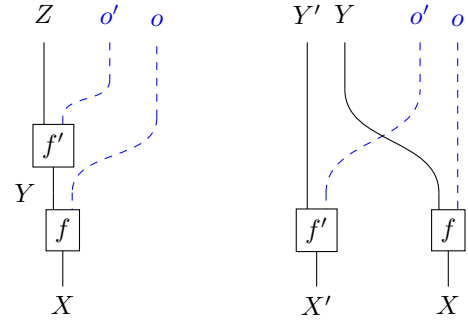
\begin{figure}[h]
	\centering
	\begin{tikzpicture}[scale=0.5]
	\begin{pgfonlayer}{nodelayer}
		\node [style=morphism] (19) at (-2.75, 0) {$f$};
		\node [style=none] (20) at (-2.5, 1) {};
		\node [style=none] (21) at (-3, 1) {};
		\node [style=none] (22) at (-3, 2) {};
		\node [style=none] (23) at (-0.25, 3) {};
		\node [style=none] (24) at (-2.75, -1.5) {};
		\node [style=none] (25) at (-2.75, -0.5) {};
		\node [style=morphism] (26) at (-3, 2.25) {$f'$};
		\node [style=none] (28) at (-0.25, 5) {};
		\node [style=none] (29) at (-2.75, 3) {};
		\node [style=none] (30) at (-3.25, 3) {};
		\node [style=none] (31) at (-3.25, 5) {};
		\node [style=none] (32) at (-1.5, 4) {};
		\node [style=none] (35) at (-1.5, 5) {};
		\node [style=none] (36) at (-3.25, 5.75) {$Z$};
		\node [style=observation] (37) at (-1.5, 5.75) {$o'$};
		\node [style=observation] (38) at (-0.25, 5.6) {$o$};
		\node [style=none] (39) at (-3.75, 1) {$Y$};
		\node [style=none] (40) at (-2.75, -2) {$X$};
		\node [style=none] (41) at (-2.5, 0.25) {};
		\node [style=none] (42) at (-3, 0.25) {};
		\node [style=none] (43) at (-2.75, 2.5) {};
		\node [style=none] (44) at (-3.25, 2.5) {};
		\node [style=morphism] (45) at (4, 0) {$f'$};
		\node [style=none] (46) at (4.25, 1) {};
		\node [style=none] (47) at (3.75, 5) {};
		\node [style=none] (48) at (4, -1.5) {};
		\node [style=none] (49) at (4, -0.5) {};
		\node [style=none] (50) at (4, -2) {$X'$};
		\node [style=none] (51) at (4.25, 0.25) {};
		\node [style=none] (52) at (3.75, 0.25) {};
		\node [style=morphism] (53) at (7.5, 0) {$f$};
		\node [style=none] (54) at (7.75, 5) {};
		\node [style=none] (55) at (7.25, 1) {};
		\node [style=none] (56) at (7.5, -1.5) {};
		\node [style=none] (57) at (7.5, -0.5) {};
		\node [style=none] (58) at (7.5, -2) {$X$};
		\node [style=none] (59) at (7.75, 0.25) {};
		\node [style=none] (60) at (7.25, 0.25) {};
		\node [style=none] (61) at (4.25, 1) {};
		\node [style=none] (62) at (7.25, 1) {};
		\node [style=none] (63) at (6.75, 3.75) {};
		\node [style=none] (64) at (4.75, 3.75) {};
		\node [style=observation] (65) at (6.75, 5.75) {$o'$};
		\node [style=observation] (66) at (7.75, 5.6) {$o$};
		\node [style=none] (68) at (4.75, 5) {};
		\node [style=none] (69) at (6.75, 5) {};
		\node [style=none] (70) at (3.75, 5.75) {$Y'$};
		\node [style=none] (71) at (4.75, 5.75) {$Y$};
	\end{pgfonlayer}
	\begin{pgfonlayer}{edgelayer}
		\draw (22.center) to (21.center);
		\draw [style=condition, in=90, out=-90] (23.center) to (20.center);
		\draw (25.center) to (24.center);
		\draw (31.center) to (30.center);
		\draw [style=condition, in=90, out=-90] (32.center) to (29.center);
		\draw [style=condition] (23.center) to (28.center);
		\draw (21.center) to (42.center);
		\draw [style=condition] (20.center) to (41.center);
		\draw (30.center) to (44.center);
		\draw [style=condition] (29.center) to (43.center);
		\draw [style=condition] (35.center) to (32.center);
		\draw (49.center) to (48.center);
		\draw (47.center) to (52.center);
		\draw [style=condition] (46.center) to (51.center);
		\draw (57.center) to (56.center);
		\draw (55.center) to (60.center);
		\draw [style=condition] (54.center) to (59.center);
		\draw [style=condition, in=-90, out=90] (61.center) to (63.center);
		\draw [in=90, out=-90] (64.center) to (62.center);
		\draw (68.center) to (64.center);
		\draw [style=condition] (69.center) to (63.center);
	\end{pgfonlayer}
\end{tikzpicture}
	\caption{Composition and tensoring of morphisms in $\obs$}
	\label{fig:obs_composition}
\end{figure}

\subsection{Step 2 (Cond): Equivalence of open programs}
\label{sec:condcat}

We now quotient $\obs$-morphisms, tying them to the conditionals which can be computed in $\C$. We know how to compute conditionals for closed programs. Given a state $(K,\psi,o) : I \obsto m$, we follow the procedure of Example \ref{ex:gaussian_diagonal}: If $o \not \ll \psi_K$, the observation does not lie in the support of the model and conditioning fails. If not, we form the conditional $\psi|_K$ in $\C$ and obtain a well-defined posterior $\mu|_K \circ o$. 

This notion defines an equivalence relation on states $I \leadsto n$ in $\cond(\C)$. We will then extend this notion to a congruence on arbitrary morphisms $X \leadsto Y$ by a general categorical construction. 

\begin{definition}
\label{def:condeq}
Given two states $I \leadsto X$ we define $(K,\psi,o) \sim (K',\psi',o')$ if either
\begin{enumerate}
\item $o \ll \psi_K$ and $o' \ll \psi'_{K'}$ and $\psi|_K(o)= \psi'|_{K'}(o')$.
\item $o \not \ll \psi_K$ and $o' \not \ll \psi'_{K'}$
\end{enumerate}
That is, both conditioning problems either fail, or both succeed with equal posterior. 
\end{definition}
Figure~\ref{fig:obs} formulates Example \ref{ex:gaussian_diagonal} in $\obs(\gauss)$:

\begin{figure}[h]
	\vspace{-12pt}
	\centering
	\begin{tikzpicture}[scale=0.4]
	\begin{pgfonlayer}{nodelayer}
		\node [style=none] (7) at (-5.25, -1.5) {};
		\node [style=none] (9) at (-2.75, -1.5) {};
		\node [style=none] (13) at (-6.25, 2) {};
		\node [style=bn] (14) at (-5.25, 0) {};
		\node [style=bn] (21) at (-2.75, 0) {};
		\node [style=none] (22) at (-1.75, 2) {};
		\node [style=none] (23) at (-2.25, 2) {};
		\node [style=none] (24) at (-5.75, 2) {};
		\node [style=morphism] (25) at (-2, 2.5) {$-$};
		\node [style=none] (28) at (-2, 4) {};
		\node [style=none] (29) at (-6.25, 3.75) {};
		\node [style=none] (30) at (-5.75, 3.75) {};
		\node [style=observation] (31) at (-2, 4.75) {$0$};
		\node [style=none] (33) at (3, -1.5) {};
		\node [style=none] (47) at (0, 0) {};
		\node [style=none] (48) at (0, 0) {$\sim$};
		\node [style=none] (49) at (-5.25, -2) {$\mathcal N(1)$};
		\node [style=none] (50) at (-2.75, -2) {$\mathcal N(1)$};
		\node [style=none] (51) at (3, -2) {$\mathcal N(0.5)$};
		\node [style=bn] (52) at (3, 0) {};
		\node [style=none] (53) at (1.75, 2.25) {};
		\node [style=none] (54) at (4.25, 2.25) {};
	\end{pgfonlayer}
	\begin{pgfonlayer}{edgelayer}
		\draw [bend left] (14) to (13.center);
		\draw [bend left] (22.center) to (21);
		\draw (7.center) to (14);
		\draw [bend left=45] (21) to (24.center);
		\draw [bend right=45] (14) to (23.center);
		\draw (21) to (9.center);
		\draw [style=condition] (28.center) to (25);
		\draw (13.center) to (29.center);
		\draw (24.center) to (30.center);
		\draw (33.center) to (52);
		\draw [bend left] (52) to (53.center);
		\draw [bend right=45, looseness=0.75] (52) to (54.center);
	\end{pgfonlayer}
\end{tikzpicture}
	\caption{Example \ref{ex:gaussian_diagonal} describes related states $0 \obsto 2$}
	\label{fig:obs}
\end{figure}

\begin{definition}
\label{def:observational_quotient}
Let $\mathbb X$ be a symmetric premonoidal category. An equivalence relation $\sim$ on states $\mathbb X(I,-)$ is called \emph{functorial} if $\psi \sim \psi'$ implies $f\psi \sim f\psi'$. We can extend such a relation to a congruence $\approx$ on all morphisms $X \to Y$ via
\[ f \approx g \Leftrightarrow \forall A,\psi : I \to A \otimes X, (\id_A \otimes f)\psi \sim (\id_A \otimes g)\psi. \]
The quotient category $\mathbb X/\!\!\approx$ is symmetric premonoidal. 
\end{definition}

We show now that under good assumptions, the quotient by conditioning \ref{def:condeq} on $\mathbb X = \obs(\C)$ is functorial, and induces a quotient category $\cond(\C)$. The technical condition is that supports interact well with dataflow
\begin{definition}
A Markov category $\C$ has \emph{precise supports} if the following are equivalent for all deterministic $x : I \to X$, $y : I \to Y$, and arbitrary $f : X \to Y$ and $\mu : I \to X$.
\begin{enumerate}
\item $x \otimes y \ll \langle \id_X, f\rangle \mu$
\item $x \ll \mu$ and $y \ll fx$
\end{enumerate}
\end{definition}

\begin{proposition}
\label{prop:have_precise_supports}
$\gauss$, $\finstoch$ and $\borelstoch$ have precise supports. 
\end{proposition}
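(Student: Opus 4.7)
The plan is to verify the equivalence separately in each of the three categories, reducing $\ll$ to concrete measure-theoretic support conditions in every case.

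For $\gauss$, I would use Proposition~\ref{prop:gaussian_supports}, which identifies $\mu \ll \nu$ with support containment. Writing $\mu = \mathcal N(m,S)$ and $f = (A,b,\Sigma)$, the joint $\langle \id_X, f \rangle \mu$ is the law of $(X, AX + b + Z)$ with $X \sim \mu$ and $Z \sim \mathcal N(\Sigma)$ independent. A direct computation on the block covariance matrix $\begin{pmatrix} S & SA^T \\ AS & ASA^T + \Sigma \end{pmatrix}$ shows its support equals
\[ \{(u,v) : u \in \mathrm{supp}(\mu),\ v \in \mathrm{supp}(fu)\}. \]
The key step is checking that for any $s \in \mathrm{col}(S)$ and $z \in \mathrm{col}(\Sigma)$, the vector $(s, As + z)$ lies in the column space of the block matrix; choosing $c$ with $\Sigma c = z$ and $\tilde a$ with $S\tilde a = s - SA^T c$ realizes it as the image of $(\tilde a, c)$. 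Deterministic $x : 0 \to m$, $y : 0 \to n$ are points of $\R^m$, $\R^n$, so $x \otimes y \ll \langle \id, f\rangle \mu$ reduces to $(x, y)$ lying in this joint support, which is exactly $x \ll \mu$ together with $y \ll fx$.

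For $\finstoch$ and $\borelstoch$, I would first establish in the appendix that $\mu \ll \nu$ iff every $\nu$-null set is $\mu$-null (the standard notion of absolute continuity), and that for a deterministic state at a point $x$, $x \ll \mu$ iff $\mu(\{x\}) > 0$. The joint measure's mass at a point factors as
\[ (\langle \id, f\rangle \mu)(\{(x, y)\}) = \mu(\{x\}) \cdot fx(\{y\}), \]
so $x \otimes y \ll \langle \id, f\rangle \mu$ is equivalent to both factors being strictly positive, which is exactly condition~(2).

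The main obstacle will be the $\gauss$ computation: verifying the support formula for the joint requires the short pseudoinverse/linear-algebra argument sketched above to get the column-space equality in both directions. For $\borelstoch$ a subtle point is that deterministic states are in principle measurable functions $I \to X$ rather than honest points, but since $I$ is terminal in the deterministic subcategory they correspond to points of the underlying standard Borel space, after which the product-measure factorization closes the argument uniformly with $\finstoch$.
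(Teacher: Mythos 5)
Your proposal is correct and follows essentially the same route as the paper's proof: for $\gauss$ it reduces to the support characterization of Proposition~\ref{prop:gaussian_supports} and identifies the support of $\langle \id, f\rangle\mu$ with $\{(u,v) : u \in \mathrm{supp}(\mu),\ v \in \mathrm{supp}(fu)\}$ (you supply the block-covariance column-space computation that the paper leaves implicit), and for $\finstoch$ and $\borelstoch$ it uses the same point-mass factorization $(\langle \id, f\rangle\mu)(\{(x,y)\}) = \mu(\{x\})\, fx(\{y\})$ together with the characterization $x \ll \mu \Leftrightarrow \mu(\{x\})>0$.
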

\begin{proof}
See appendix.
\end{proof}

\begin{theorem}
\label{thm:functoriality}
Let $\C$ be a Markov category that has conditionals and precise supports. Then $\sim$ is a functorial equivalence relation on $\obs(\C)$.
\end{theorem}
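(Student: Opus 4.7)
The plan has two parts: first verify that $\sim$ is an equivalence relation on states, and then verify functoriality, namely that $\psi \sim \psi'$ implies $f \bullet \psi \sim f \bullet \psi'$ for every $f : X \obsto Y$ in $\obs(\C)$. Equivalence is almost immediate from Definition~\ref{def:condeq}: reflexivity and symmetry are built in, and transitivity follows because each state lies in exactly one of the two clauses (determined by whether $o \ll \psi_K$), so any chain $\psi \sim \psi' \sim \psi''$ stays uniformly in one clause, and in the success clause equality of posteriors in $\C$ is transitive.

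For functoriality, write $f = (L,g,p)$ and $\psi = (K,\psi_0,o)$. The composite $f \bullet \psi$ is $(L \otimes K,\ (g \otimes \id_K)\psi_0,\ p \otimes o)$, so its hidden-wire marginal is $(g_L \otimes \id_K)\psi_0$. Using the defining equation of the conditional $\psi_0|_K : K \to X$ to rewrite $\psi_0$, and rearranging with a swap (which is deterministic and so preserves $\ll$), this marginal becomes $\langle \id_K, g_L \circ \psi_0|_K\rangle \circ (\psi_0)_K$. Precise supports now factors the joint $\ll$-statement:
\[
  p \otimes o \ll (g_L \otimes \id_K)\psi_0 \iff o \ll (\psi_0)_K \text{ and } p \ll g_L\bigl(\psi_0|_K(o)\bigr).
\]
A symmetric statement holds for the primed data. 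Hence if $\psi \sim \psi'$ with common posterior $\mu = \psi_0|_K(o) = \psi'_0|_{K'}(o')$ in the success clause, the secondary condition $p \ll g_L(\mu)$ depends only on $\mu$ and forces the two composites to succeed or fail together; in the failure clause the initial factor $o \ll (\psi_0)_K$ fails and the composites fail together.

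It remains, in the success case, to identify the posterior of the composite and show that both composites produce the same one. This is the main technical step: conditionals in $\C$ are only almost-surely unique, so one must invoke Lemma~\ref{lemma:llunique} together with the $\ll$-bounds just established to evaluate them unambiguously. I expect the posterior of the composite on $Y$ to reduce to $(g \circ \mu)|_L \circ p$, an expression depending only on $f$ and the common $\mu$, and therefore agreeing on $\psi$ and $\psi'$. The symmetric-monoidal bookkeeping (swap, $\cpy$, $\del$) surrounding these rewrites is routine, but the reliance on precise supports is essential: it is precisely what lets us decouple the hidden wires $K$ and $L$ so that the $\mu$-level argument can isolate the $f$-dependent condition and feed it into the universal property of conditionals in $\C$.
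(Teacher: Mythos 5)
Your overall strategy matches the paper's exactly: check that the success/failure dichotomy is preserved by composition using the precise-supports axiom applied to $\mu = (\psi_0)_K$ and the morphism $g_L \circ \psi_0|_K$, and then argue that in the success case the composite's posterior depends only on the common intermediate posterior $\mu = \psi_0|_K(o) = \psi'_0|_{K'}(o')$ together with the data of $f$. Your rewriting of the hidden-wire marginal as $\langle \id_K, g_L \circ \psi_0|_K\rangle \circ (\psi_0)_K$ is precisely how the paper brings the axiom to bear, and your predicted formula $(g\mu)|_L \circ p$ for the composite posterior is the correct one.

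However, the step you flag as ``the main technical step'' and then only assert (``I expect the posterior \dots to reduce to $(g \circ \mu)|_L \circ p$'') is exactly where the paper's proof does its real work, and Lemma~\ref{lemma:llunique} alone does not deliver it. What is needed is the identity of double and iterated conditionals: one forms the parameterized conditional $\beta = (g \circ \psi_0|_K)|_L : L \otimes K \to Y$ and verifies by string-diagram manipulation that $\beta$ satisfies the universal property of $[(g \otimes \id_K)\psi_0]|_{L \otimes K}$ (a variant of \cite[11.11]{fritz}); then Proposition~\ref{prop:specialization} gives $\beta(\id_L \otimes o) = (g \circ \psi_0|_K o)|_L = (g\mu)|_L$, and evaluating at $p$ yields an expression depending only on $\mu$ and $(L,g,p)$, which is what closes the argument. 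Without establishing that $\beta$ is a legitimate conditional for the composite, the claim that the composite's posterior equals $(g\mu)|_L \circ p$ is unsupported, since a priori $[(g \otimes \id_K)\psi_0]|_{L\otimes K}(p \otimes o)$ need not decompose through the intermediate posterior. Everything else in your proposal (the equivalence-relation check, the failure-propagation argument, the use of precise supports to match success conditions on both sides) is sound and coincides with the paper.
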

\begin{proof}
Let $(K,\psi,o) \sim (K',\psi',o') : I \obsto X$ and $(H,f,v) : X \obsto Y$ be any morphism. We need to show that
\begin{equation}
\label{eq:cond_fun_equiv}
(H \otimes K, (f \otimes \id_K)\psi, v \otimes o) \sim (H \otimes K', (f \otimes \id_{K'})\psi', v \otimes o')
\end{equation}
If $(K,f,o)$ fails, i.e. $o \not \ll \psi_K$, then by marginalization any composite must fail. But then the RHS fails too. 

Now assume that $(K,\psi,o)$ succeeds and $\psi|_{K}o = \psi'|_{K}o'$. We show that the success conditions on both sides are equivalent. That is because the following are equivalent
\begin{enumerate}
\item $v \otimes o \ll (f_H \otimes \id_K)\psi$
\item $o \ll \psi_K$ and $v \ll f_H\psi|_Ko$
\end{enumerate}
This is exactly the `precise supports' axiom, applied to $\mu = \psi_K$ and $g=f_H \circ \psi|_K$. Because 2) agrees on both sides of \eqref{eq:cond_fun_equiv}, so does 1). We are left with the case that \eqref{eq:cond_fun_equiv} succeeds, and need to show that
\[ [(f \otimes \id_K)\psi]|_{H \otimes K}(v \otimes o) = [(f \otimes \id_{K'})\psi']|_{H \otimes K'}(v \otimes o'). \]
We use a variant of the argument from \cite[11.11]{fritz} that double conditionals can be replaced by iterated conditionals. Consider the parameterized conditional 
\[ \beta = (f \circ \psi|_K)|_H : H \otimes K \to Y \]
then string diagram manipulation shows that $\beta$ has the universal property $\beta = [(f \otimes \id_K)\psi]|_{H \otimes K}$. By specialization \ref{prop:specialization}, it also has the property $\beta(\id_H \otimes o) = (f \circ \psi|_Ko)|_H$. Hence
\begin{align*}
[(f \otimes \id_K)\psi]|_{H \otimes K}(v \otimes o) &= \beta(\id_H \otimes o) \circ v \\
&= (f \circ \psi|_Ko)|_H \circ v \\
&= (f \circ \psi'|_{K'}o')|_H \circ v \\
&= [(f \otimes \id_{K'})\psi']|_{H \otimes K'}(v \otimes o)
\end{align*}
\end{proof}

We can spell out the equivalence $\approx$ as follows:
\begin{proposition}
\label{prop:simplified_cond}
We have $(K,f,o) \approx (K',f',o') : X \obsto Y$ if for all $\psi : I \to A \otimes X$, either
\begin{enumerate}
\item $o \ll f_K\psi_X$ and $o' \ll f'_{K'}\psi'_X$ and $[(\id_A \otimes f)\psi]|_K(o) = [(\id_A \otimes f')\psi']|_{K'}(o')$
\item $o \not \ll f_K\psi_X$ and $o' \not \ll f'_{K'}\psi'_X$
\end{enumerate}
\end{proposition}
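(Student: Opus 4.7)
The plan is a direct unfolding: by Theorem~\ref{thm:functoriality}, $\sim$ is a functorial equivalence on $\obs(\C)$, so the general construction of Definition~\ref{def:observational_quotient} applies and gives $(K,f,o)\approx(K',f',o')$ iff for every object $A$ and every $\C$-morphism $\psi:I\to A\otimes X$, the two composite states
$(\id_A\otimes(K,f,o))\bullet(I,\psi,!)$ and $(\id_A\otimes(K',f',o'))\bullet(I,\psi,!)$
are $\sim$-related in $\obs(\C)$. So the proof reduces to computing these composites and specializing Definition~\ref{def:condeq} to them.

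First I would compute the composite state. Using the tensor formula from Definition~\ref{def:obs} together with strict monoidality (so $\swap_{I,Y}=\id_Y$), one gets $\id_A\otimes(K,f,o)=(K,\id_A\otimes f,o):A\otimes X\obsto A\otimes Y$. Composing with $\psi=(I,\psi,!)$ via the $\bullet$-rule then yields the state $(K,(\id_A\otimes f)\psi,o):I\obsto A\otimes Y$ in $\obs(\C)$, whose underlying $\C$-morphism is $(\id_A\otimes f)\psi:I\to A\otimes Y\otimes K$. The same computation works on the primed side.

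Second, I would read off the marginal on $K$ needed to apply Definition~\ref{def:condeq}. Since $\C$ is a Markov category, $\del$ is natural, and discarding the $A$ and $Y$ components of $(\id_A\otimes f)\psi$ gives
\[(\del_A\otimes\del_Y\otimes\id_K)(\id_A\otimes f)\psi \;=\; f_K\,\psi_X,\]
where $\psi_X=(\del_A\otimes\id_X)\psi$ is the $X$-marginal of $\psi$. Now Definition~\ref{def:condeq} applied to the two states $(K,(\id_A\otimes f)\psi,o)$ and $(K',(\id_A\otimes f')\psi,o')$ becomes exactly the dichotomy in the statement: either both fail (with the failure conditions $o\not\ll f_K\psi_X$ and $o'\not\ll f'_{K'}\psi_X$), or both succeed with equal posteriors $[(\id_A\otimes f)\psi]|_K(o)=[(\id_A\otimes f')\psi]|_{K'}(o')$.

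There is no genuine obstacle here; the proof is bookkeeping that translates the abstract quotient $\approx$ into a concrete statement about conditionals and absolute continuity in~$\C$. The only point worth flagging is that appealing to Definition~\ref{def:observational_quotient} in the first place requires functoriality of $\sim$, which is precisely the content of Theorem~\ref{thm:functoriality} and relies on the precise-supports assumption; everything after that is routine.
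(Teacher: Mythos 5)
Your computations are all correct and this is clearly the intended reading of the proposition: the paper offers no explicit proof and treats it as a direct unfolding of Definitions~\ref{def:observational_quotient} and~\ref{def:condeq}, exactly as you do. The identification $\id_A\otimes(K,f,o)=(K,\id_A\otimes f,o)$, the composite state $(K,(\id_A\otimes f)\psi,o)$, and the marginal $(\del_A\otimes\del_Y\otimes\id_K)(\id_A\otimes f)\psi=f_K\psi_X$ are all right.

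There is, however, one genuine gap in your opening reduction. Definition~\ref{def:observational_quotient} defines $\approx$ by quantifying over \emph{all} states $\psi:I\to A\otimes X$ of the ambient premonoidal category $\mathbb X=\obs(\C)$, i.e.\ over triples $(L,\Psi,p)$ which may themselves carry conditions, whereas Proposition~\ref{prop:simplified_cond} quantifies only over $\C$-states (pure, condition-free contexts). Your ``iff'' silently identifies these two quantifications. The direction you actually need --- that checking pure contexts suffices to conclude $(K,f,o)\approx(K',f',o')$ --- is precisely the content of the proposition and is not immediate from the definition. To close it, write a general $\obs(\C)$-state as a pure state followed by a conditioning effect, $(L,\Psi,p)=(\id_{A\otimes X}\otimes(\eqo p))\bullet J(\Psi)$; apply your hypothesis to the pure state $J(\Psi)$ with enlarged ambient object $A\otimes L$ (after a wire swap); and then use functoriality of $\sim$ (Theorem~\ref{thm:functoriality}) to post-compose with the effect $\id_A\otimes(\eqo p)\otimes\id_Y$, finally accounting for the reordering of the condition wires $K\otimes L$ versus $L\otimes K$ (a swap of observations, which preserves $\sim$). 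Note that you cannot simply cite Proposition~\ref{prop:isomorphic_conditions} for this last step, since its proof in the paper already relies on the characterization you are trying to establish. Everything after your first paragraph is routine and correct.
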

The universal property of the conditional in question is
\[ \begin{tikzpicture}[scale=0.5]
	\begin{pgfonlayer}{nodelayer}
		\node [style=medium state] (19) at (-2, -1.75) {$\psi$};
		\node [style=none] (22) at (-2.5, -1.25) {};
		\node [style=none] (23) at (-1.5, -1.25) {};
		\node [style=morphism] (25) at (-1.5, 0) {$f$};
		\node [style=none] (26) at (-2.5, 3.25) {};
		\node [style=none] (27) at (-1.75, 0.5) {};
		\node [style=none] (28) at (-1.25, 0.5) {};
		\node [style=none] (29) at (-1.75, 3.25) {};
		\node [style=none] (30) at (-1.25, 3.25) {};
		\node [style=none] (31) at (0, 0) {$=$};
		\node [style=medium state] (32) at (1.75, -1.75) {$\psi$};
		\node [style=none] (33) at (1.25, -1.25) {};
		\node [style=none] (34) at (2.25, -1.25) {};
		\node [style=morphism] (35) at (2.25, 0) {$f$};
		\node [style=bn] (36) at (1.25, -0.5) {};
		\node [style=none] (37) at (2, 0.5) {};
		\node [style=none] (38) at (2.5, 0.5) {};
		\node [style=bn] (39) at (2, 1.25) {};
		\node [style=bn] (40) at (2.5, 2.25) {};
		\node [style=none] (41) at (1.5, 3.5) {};
		\node [style=none] (42) at (3.5, 3.5) {};
		\node [style=morphism] (43) at (1.5, 3.5) {$\phantom{Foo}$};
		\node [style=none] (44) at (1, 4.75) {};
		\node [style=none] (45) at (2, 4.75) {};
		\node [style=none] (46) at (3.5, 4.75) {};
		\node [style=none] (47) at (1, 4) {};
		\node [style=none] (48) at (2, 4) {};
		\node [style=none] (49) at (1, 5.5) {};
		\node [style=none] (50) at (1, 5.5) {$A$};
		\node [style=none] (51) at (2, 5.5) {$Y$};
		\node [style=observation] (52) at (3.5, 5.5) {$K$};
		\node [style=none] (55) at (-2.5, 4) {$A$};
		\node [style=none] (56) at (-1.75, 4) {$Y$};
		\node [style=observation] (57) at (-1, 4) {$K$};
	\end{pgfonlayer}
	\begin{pgfonlayer}{edgelayer}
		\draw (25) to (23.center);
		\draw (26.center) to (22.center);
		\draw (29.center) to (27.center);
		\draw [style=condition, in=270, out=90] (28.center) to (30.center);
		\draw (35) to (34.center);
		\draw (36) to (33.center);
		\draw (39) to (37.center);
		\draw [style=condition, in=270, out=90] (38.center) to (40);
		\draw [style=condition, bend left, looseness=1.25] (40) to (41.center);
		\draw [style=condition, bend right=45] (40) to (42.center);
		\draw [style=condition] (42.center) to (46.center);
		\draw (47.center) to (44.center);
		\draw (48.center) to (45.center);
	\end{pgfonlayer}
\end{tikzpicture} \]

We can show that isomorphic conditions are equivalent under the relation $\approx$.

\begin{proposition}[Isomorphic conditions]
\label{prop:isomorphic_conditions}
Let $(K,f,o) : X \obsto Y$ and $\alpha : K \cong K'$ be an isomorphism. Then
\[ (K,f,o) \approx (K',(\id_Y \otimes \alpha)f, \alpha o). \]
In programming terms $(k \eqo o) \approx (\alpha k \eqo \alpha o)$.
\end{proposition}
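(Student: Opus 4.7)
The plan is to apply Proposition~\ref{prop:simplified_cond}, which reduces the claim to verifying, for every $\psi : I \to A \otimes X$, that the two success conditions agree and that, in the successful case, the two posteriors coincide. I note up front that any isomorphism in a Markov category is automatically deterministic, so $\alpha$ commutes with copying and admits a two-sided inverse in $\C_\det$.

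\textbf{Step 1 (Success conditions agree).} The $K$-marginal of $(\id_A \otimes f)\psi$ is $f_K\psi_X$, while the $K'$-marginal of $(\id_A \otimes (\id_Y \otimes \alpha)f)\psi$ is $\alpha f_K\psi_X$. The success conditions therefore coincide once I establish the auxiliary lemma: for any state $\mu : I \to K$ and any deterministic iso $\alpha : K \cong K'$, one has $o \ll \mu$ iff $\alpha o \ll \alpha\mu$. For the forward direction, given $u,v : K' \to Z$ with $u =_{\alpha\mu} v$, I use $\cpy_{K'}\alpha = (\alpha \otimes \alpha)\cpy_K$ to rewrite $\langle \id_{K'},u\rangle \alpha\mu = \langle \alpha, u\alpha\rangle\mu$, and similarly for $v$; applying $\alpha^{-1} \otimes \id$ then yields $u\alpha =_\mu v\alpha$, so $o \ll \mu$ gives $u(\alpha o) = v(\alpha o)$. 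The reverse direction is symmetric via $\alpha^{-1}$.

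\textbf{Step 2 (Posteriors agree).} Write $\phi = (\id_A \otimes f)\psi : I \to (A \otimes Y) \otimes K$, so the RHS posterior equals $[(\id_{A\otimes Y}\otimes \alpha)\phi]|_{K'}(\alpha o)$. I claim that $\phi|_K \circ \alpha^{-1}$ serves as a conditional for $(\id \otimes \alpha)\phi$ with respect to $K'$. Checking the universal property~\eqref{eq:def_cond} is a short string-diagram calculation: starting from $\bigl(\phi|_K\alpha^{-1}\otimes \id_{K'}\bigr)\cpy_{K'}(\alpha\phi_K)$, use $\cpy_{K'}\alpha = (\alpha\otimes\alpha)\cpy_K$ to slide the $\alpha$'s through, cancel $\alpha^{-1}\alpha$, and then invoke the universal property of $\phi|_K$. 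By a.s.-uniqueness of conditionals, $[(\id \otimes \alpha)\phi]|_{K'} =_{\alpha\phi_K} \phi|_K \circ \alpha^{-1}$. In the successful case $o \ll \phi_K$, so Step~1 gives $\alpha o \ll \alpha\phi_K$, and Lemma~\ref{lemma:llunique} lets me evaluate at $\alpha o$ to conclude
\[ [(\id \otimes \alpha)\phi]|_{K'}(\alpha o) = \phi|_K(\alpha^{-1}\alpha o) = \phi|_K(o), \]
which matches the LHS posterior.

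The main obstacle is the auxiliary lemma of Step~1 regarding the behaviour of $\ll$ under pushforward by a deterministic iso; once that is in place, the rest is routine bookkeeping with the universal property of conditionals plus the naturality of copying for deterministic maps. No new properties of $\C$ beyond those already assumed (conditionals and, implicitly, that isos are deterministic, which holds in any Markov category) are needed.
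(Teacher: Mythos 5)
Your proof is correct and follows essentially the same route as the paper's: first showing that $\ll$ is invariant under pushforward by the (automatically deterministic) isomorphism $\alpha$ so the success conditions coincide, and then transporting the conditional along $\alpha$ via its universal property. The paper states both steps more tersely (asserting the universal-property identity $(\id_A \otimes f)\psi|_K = (\id_A \otimes ((\id_X \otimes \alpha)f))\psi|_{K'} \circ \alpha$ directly), but your Steps 1 and 2 are exactly the details it leaves as "straightforward."
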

\begin{proof}
Let $\psi : I \to A \otimes X$. We first notice that $o \ll \psi_K$ if and only if $\alpha o \ll \alpha \psi_K$, so the success conditions coincide. It is now straightforward to check the universal property
\[ (\id_A \otimes f)\psi|_K = (\id_A \otimes ((\id_X \otimes \alpha)f))\psi|_{K'} \circ \alpha. \] 
This requires the fact that isomorphisms are deterministic, which holds in every Markov category with conditionals \cite[11.28]{fritz}. The proof works more generally if $\alpha$ is deterministic and split monic.
\end{proof}

We can now give the Cond construction:
\begin{definition}
Let $\C$ be a Markov category that has conditionals and precise supports. We define $\cond(\C)$ as the quotient \[ \cond(\C) = \obs(\C)/\!\!\approx \] This quotient is a CD category, and the functor $J : \C \to \cond(\C)$ preserves CD structure.
\end{definition}
\begin{proof}
We have checked functoriality of $\sim$ in \ref{thm:functoriality}, so by \ref{def:observational_quotient}, the quotient is symmetric premonoidal. It remains to show that the interchange laws holds, i.e. observations can be reordered. This follows from \ref{prop:isomorphic_conditions} because swap morphisms are iso.
\end{proof}

\begin{proposition}
\label{prop:good_properties}
By virtue of being a well-defined CD category, the program equations \eqref{eqn:commutativity} hold in the internal language of $\cond(\C)$. In particular, conditioning satisfies commutativity.
\end{proposition}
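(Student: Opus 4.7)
The plan is to treat Proposition~\ref{prop:good_properties} as essentially a corollary of the preceding definition of $\cond(\C)$. The equations in~\eqref{eqn:commutativity} are the identity, associativity and commutativity laws that hold in the internal language of \emph{any} CD category, as recalled in Section~\ref{sec:internal_language}. Since the definition immediately preceding the proposition establishes that $\cond(\C)$ is a CD category (with the strict comonoid structure and symmetric monoidal structure inherited from $\C$ via $J$), soundness of the internal-language interpretation gives the equations for free. So the proof really only needs to point out what has already been proved and why it suffices.

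The only non-trivial content is commutativity, which corresponds to the interchange law $(f \otimes \id) \circ (\id \otimes g) = (\id \otimes g) \circ (f \otimes \id)$. In $\obs(\C)$ this law fails by construction: tensoring $(K_1,f_1,o_1)$ and $(K_2,f_2,o_2)$ attaches observations to condition wires in the fixed order specified in Definition~\ref{def:obs}, and changing that order produces a syntactically different morphism. However, the two orderings differ only by the swap isomorphism $\swap_{K_1,K_2}$ on the condition wires, so by Proposition~\ref{prop:isomorphic_conditions} (applied with $\alpha = \swap$) they are $\approx$-equivalent, and hence equal in the quotient $\cond(\C)$. This is the sole point where passing from premonoidal to monoidal is crucial. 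For the special case of conditioning, a condition $(\eqo o) : K \obsto I$ is the $\obs$-morphism $(K,\id_K,o)$; two parallel conditions $(\eqo o_1)\otimes(\eqo o_2)$ and $(\eqo o_2)\otimes(\eqo o_1)$ are built from the same underlying data up to a swap on $K_1\otimes K_2$, and are therefore identified by the same argument.

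The main obstacle — establishing that $\approx$ is a congruence and that the interchange law descends to the quotient — has already been cleared by Theorem~\ref{thm:functoriality} together with Proposition~\ref{prop:isomorphic_conditions}. What remains is routine bookkeeping: one unwinds the interpretation $\sem{-}$ of Section~\ref{sec:internal_language} applied to $\cond(\C)$ and reads off that each syntactic rewrite in~\eqref{eqn:commutativity} corresponds to an instance of the identity, associativity, unitality or interchange law of the ambient CD category. The commutativity statement for conditioning is then obtained as the instance where $t$ and $u$ contain conditioning effects of the form $e_1 \eq e_2$; since these are modelled as morphisms into $I$ in $\cond(\C)$, the general commutativity equation applies without any further hypothesis, which is what makes the construction worth the effort.
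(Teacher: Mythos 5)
Your proposal is correct and follows essentially the same route as the paper: the proposition is an immediate corollary of the fact that $\cond(\C)$ is a well-defined CD category, whose only non-trivial ingredient is the interchange law, recovered in the quotient by applying Proposition~\ref{prop:isomorphic_conditions} with $\alpha = \swap$ exactly as the paper does in the proof accompanying the definition of $\cond(\C)$. The remaining equations of~\eqref{eqn:commutativity} hold in the internal language of any CD category, as both you and Section~\ref{sec:internal_language} observe.
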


\subsection{Laws for Conditioning}

We derive some properties of $\cond(\C)$. We firstly notice that $J$ is faithful for common Markov categories. 
\begin{proposition}
\label{prop:faithful}
For $f, g : m \to n$, $J(f) \approx J(g)$ iff
\[ \forall \psi : I \to a \otimes m, (\id_a \otimes f)\psi = (\id_a \otimes g)\psi \]
In particular, $J$ is faithful for $\gauss$, $\finstoch$ and $\borelstoch$.
\end{proposition}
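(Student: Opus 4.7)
The plan is to unfold $J(f) \approx J(g)$ through Definitions~\ref{def:obs}, \ref{def:condeq} and~\ref{def:observational_quotient}, and observe that the $\sim$-quotient trivializes whenever the condition wire is the monoidal unit. Concretely, for any $\psi : I \to a \otimes m$, the tensor and composition in $\obs(\C)$ give
\[(\id_a \otimes J(f)) \bullet J(\psi) = (I,\,(\id_a \otimes f)\psi,\,!),\]
and likewise for $g$. Applying Definition~\ref{def:condeq}, the absolute continuity premises $! \ll !$ hold trivially because $I$ is terminal; moreover the conditional $\mu|_I$ of any state $\mu : I \to (a \otimes n) \otimes I$ coincides with $\mu$ itself, since $\del$ satisfies the universal property of Definition~\ref{def:conditional} when $Y = I$, and composing with the trivial observation $o = !$ leaves $\mu$ unchanged modulo the unitor $X \otimes I \cong X$. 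Hence $\sim$ collapses to ordinary equality of morphisms in $\C$, establishing the stated iff.

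For the faithfulness claim in $\gauss$, $\finstoch$ and $\borelstoch$, I specialize to $a = I$ and let $\psi = x$ range over deterministic states $x : I \to m$. The hypothesis then reads $fx = gx$ for every such $x$. In $\finstoch$ and $\borelstoch$ the deterministic states are exactly the points of $m$, and $fx$ is the kernel evaluated at $x$, so pointwise agreement forces $f = g$ as stochastic matrices resp.\ Markov kernels. In $\gauss$, writing $f = (M, b, \Sigma)$ and $g = (M', b', \Sigma')$, the equality $\mathcal N(Mx+b, \Sigma) = \mathcal N(M'x+b', \Sigma')$ for every $x \in \R^m$ matches means to give $M = M'$ and $b = b'$, and matches covariances to give $\Sigma = \Sigma'$.

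The main point requiring care is the identification $\mu|_I = \mu$ up to the unitor: this is a short diagram chase against Definition~\ref{def:conditional}, and I expect it to be the only nontrivial step. Once dispatched, the rest of the argument is a direct specialization of the definitions, since our three running examples have enough deterministic states to separate morphisms.
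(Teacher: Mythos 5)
Your computation of $(\id_a \otimes J(f)) \bullet J(\psi)$ and the collapse of $\sim$ on states with trivial condition wire is correct, and it establishes the forward direction of the iff together with the faithfulness claims (the paper's own proof is equally brief on those, and your point that deterministic states separate morphisms in $\gauss$, $\finstoch$ and $\borelstoch$ is exactly the intended argument). But as written you have not proved the reverse direction. Definition~\ref{def:observational_quotient} is applied to $\mathbb X = \obs(\C)$, so the congruence $\approx$ quantifies over \emph{all} states $\psi : I \obsto a \otimes m$ of $\obs(\C)$, i.e.\ triples $(K,\psi_0,o)$ with possibly nontrivial condition wire $K$ --- not only those of the form $J(\psi)$. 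You only ever probe $J(f)$ and $J(g)$ against probes in the image of $J$, so you have shown that $J(f)\approx J(g)$ \emph{implies} the stated condition, but not that the stated condition implies $J(f)\approx J(g)$. The missing step is not hard: composing with a general probe gives $(K, ((\id_a\otimes f)\otimes\id_K)\psi_0, o)$ versus the same with $g$; the failure conditions agree because $f,g$ are discardable, so both $K$-marginals equal $\psi_{0,K}$; and in the success case, applying your hypothesis with the enlarged context $a := A\otimes K$ shows that $(\id_A\otimes f\otimes\id_K)\psi_0 = (\id_A\otimes g\otimes\id_K)\psi_0$ already in $\C$, so the two conditionals are conditionals of the \emph{same} distribution and agree at $o$ by Lemma~\ref{lemma:llunique}. (Alternatively, you could invoke Proposition~\ref{prop:simplified_cond}, which reduces $\approx$ to probing with $\C$-states, but you would then still need to check its hypotheses for nontrivial $A$.)

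A smaller point: your justification of $\mu|_I = \mu$ swaps the roles of the two factors. In the Cond construction one conditions on the hidden wire $K$, so the relevant instance of Definition~\ref{def:conditional} is $X = I$ (giving $\psi|_I : I \to Y$ equal to $\psi$ up to the unitor, because copying on $I$ is trivial), not $Y = I$ (which is the statement that $\del$ is the unique conditional onto the terminal object). The conclusion you need is the former; the claim is true but the reason you give proves a different, irrelevant fact.
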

\begin{proof}
The proof is straightforward. This condition is stronger than equality on points: It implies that $f,g$ are almost surely equal with respect to all distributions. 
\end{proof}

\begin{proposition}[Closed terms]
\label{prop:closedterms}
There is a unique state $\bot_X : I \obsto X$ in $\cond(\C)$ that always fails, given by any $(K,\psi,o)$ with $o \not\ll \psi_K$. Any other state is equal to a conditioning-free posterior, namely $(K,\psi,o) \approx J(\psi|_K \circ o)$.
\end{proposition}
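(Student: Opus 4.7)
My plan is to reduce everything to computing the equivalence relation $\sim$ from Definition~\ref{def:condeq} directly on states, and then verify the two assertions pointwise.

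First I would show that for states $\phi, \phi' : I \obsto X$, the congruence $\approx$ restricts to $\sim$. The forward direction is immediate by taking $A = I$ and $\psi = \id_I$ in Definition~\ref{def:observational_quotient}. For the converse, any composite $(\id_A \otimes \phi) \bullet J(\psi)$ can be rewritten as $(J(\psi) \otimes \id_X) \bullet \phi$ in $\obs(\C)$ using centrality of the image of $J$ (Definition~\ref{def:obs}); applying functoriality of $\sim$ (Theorem~\ref{thm:functoriality}) to the morphism $J(\psi) \otimes \id_X$ and the $\sim$-related states $\phi, \phi'$ then yields $\sim$-equivalence of the composites. Hence it suffices to verify the assertions under $\sim$.

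For the first assertion, any two failing states $(K,\psi,o)$ and $(K',\psi',o')$ --- those with $o \not\ll \psi_K$ and $o' \not\ll \psi'_{K'}$ --- are directly $\sim$-related by clause (2) of Definition~\ref{def:condeq}, so all failing states collapse to a single equivalence class $\bot_X$ in $\cond(\C)$.

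For the second assertion, suppose $(K,\psi,o)$ succeeds, so that $\psi|_K \circ o$ is uniquely determined by Lemma~\ref{lemma:llunique}. The candidate $J(\psi|_K \circ o) = (I, \psi|_K \circ o, !)$ has condition wire $I$; its $I$-marginal is forced to be $!$, and $! \ll !$ holds trivially, so this state also succeeds. Since $I$ is the monoidal unit, conditioning on an $I$-output is vacuous --- the defining equation~\eqref{eq:def_cond} is satisfied by the morphism itself --- so the posterior of $J(\psi|_K \circ o)$ is just $\psi|_K \circ o$, which coincides with the posterior of $(K,\psi,o)$. Both states thus satisfy clause (1) of Definition~\ref{def:condeq} with equal posteriors, giving $(K,\psi,o) \sim J(\psi|_K \circ o)$. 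The mildly subtle step is the first paragraph; the remainder is direct unfolding, and the main conceptual point is simply that a condition wire of type $I$ carries no information, which is immediate from the semicartesianness of the monoidal structure.
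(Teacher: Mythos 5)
The paper states this proposition without an explicit proof, so there is no official argument to compare against; your proposal supplies what is evidently the intended one (reduce the congruence $\approx$ to the relation $\sim$ of Definition~\ref{def:condeq} on states, then check the two clauses directly), and the second and third paragraphs are correct as written: failing states are identified by clause~(2), and for a succeeding state the candidate $J(\psi|_K\circ o)=(I,\psi|_K\circ o,!)$ trivially succeeds with itself as posterior, since a condition wire of type $I$ carries no information and $\psi|_K\circ o$ is well defined by Lemma~\ref{lemma:llunique}.

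The one soft spot is the converse half of your first paragraph. Definition~\ref{def:observational_quotient} quantifies over \emph{all} test states $\psi: I\obsto A$ of the premonoidal category $\obs(\C)$, not only those of the form $J(\psi_0)$ with trivial condition wire, and centrality of the image of $J$ only gives you the on-the-nose interchange $(\id_A\otimes\phi)\bullet J(\psi_0)=(J(\psi_0)\otimes\id_X)\bullet\phi$ in that special case. For a general test state $\rho=(L,\rho_0,p)$ the two composites $(\id_A\otimes\phi)\bullet\rho$ and $(\rho\otimes\id_X)\bullet\phi$ differ by a swap of the condition wires $K\otimes L\cong L\otimes K$ (this failure of strict interchange is exactly why $\obs(\C)$ is only premonoidal), so you need one extra step: either invoke Proposition~\ref{prop:isomorphic_conditions} to identify the two composites up to the swap isomorphism before applying functoriality (Theorem~\ref{thm:functoriality}) to $\rho\otimes\id_X$, or first argue that it suffices to test against states in the image of $J$ (as the paper implicitly does in Proposition~\ref{prop:simplified_cond}). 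This is a genuinely missing line rather than a wrong idea; with it inserted, the argument is complete.
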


\begin{proposition}[Enforcing conditions]
\label{prop:exactly}
We have
\[ (X,\cpy_X,o) \approx (X,o \otimes \id_X,o) \]
\end{proposition}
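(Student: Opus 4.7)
The plan is to apply Proposition~\ref{prop:simplified_cond}: fix an arbitrary parameter state $\psi : I \to A \otimes X$ and verify that the two sides have the same success condition and, when successful, the same posterior. Both tuples share the same condition object $K=X$ and the same observation $o$; the only difference is the choice of $f$. So the equivalence reduces to a calculation about the two joint states $\phi_L = (\id_A \otimes \cpy_X)\psi$ and $\phi_R = (\id_A \otimes (o \otimes \id_X))\psi$, each regarded as a morphism $I \to (A \otimes X) \otimes X$ with $K$ the rightmost wire.

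First I would show that the success conditions coincide. In both cases the $K$-marginal of $f : X \to X \otimes X$ is the identity: for $\cpy_X$ this is the comonoid counit law $(\del_X \otimes \id_X)\cpy_X = \id_X$, and for $o \otimes \id_X$ it is immediate. Hence $f_K \psi_X = \psi_X$ on both sides, so $o \ll f_K \psi_X$ is the same condition on the left and on the right.

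Next, assuming that condition holds, I would compute the two parameterized conditionals $\phi_L|_X, \phi_R|_X : X \to A \otimes X$ using Definition~\ref{def:conditional}. Let $\psi|_X : X \to A$ denote the conditional of $A$ given $X$ for $\psi$. A short string-diagram argument shows that the candidate $\phi_L|_X = \langle \psi|_X, \id_X\rangle$ (i.e.\ $x \mapsto (\psi|_X(x), x)$) satisfies the universal property of a conditional for $\phi_L$: plugging it into the diagram in \eqref{eq:def_cond} reduces, via the comonoid axioms on $\cpy_X$ and the defining property of $\psi|_X$, to the expression for $\phi_L$ itself. Symmetrically, because in $\phi_R$ the $Y$-wire is the constant deterministic state $o$ independent of the $K$-wire, the candidate $\phi_R|_X = \langle \psi|_X, o \circ \del_X\rangle$ (i.e.\ $x \mapsto (\psi|_X(x), o)$) verifies the universal property; here I use that $o$ is deterministic, so it commutes with copy.

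Finally, specializing both conditionals at $o$ via Proposition~\ref{prop:specialization} yields $\phi_L|_X(o) = \langle \psi|_X(o), o\rangle = \phi_R|_X(o)$, so the posteriors agree. The main obstacle is the graphical verification that the two candidates really are conditionals; once those two diagrammatic equalities are in hand, the rest is bookkeeping.
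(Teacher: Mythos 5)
Your proposal is correct and follows essentially the same route as the paper's proof: both reduce to an arbitrary $\psi : I \to A\otimes X$, observe that the success condition is $o \ll \psi_X$ on each side, identify the two conditionals as $\langle \psi|_X, \id_X\rangle$ and $\psi|_X \otimes o$, and conclude by evaluating at $o$. One small remark: the determinism of $o$ is really needed at the final evaluation step, to turn $\langle \psi|_X, \id_X\rangle \circ o$ into $\psi|_X(o)\otimes o$ (i.e.\ $\cpy_X\circ o = o\otimes o$), rather than in verifying the universal property for the right-hand conditional, where discarding alone suffices.
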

This means conditions actually hold after we condition on them. In programming notation
\[ x \vdash (x \eqo o); x \approx (x \eqo o); o \] 
\begin{proof}
Let $\psi : I \to A \otimes X$; the success condition reads $o \ll \psi_X$ both cases. Now let $o \ll \psi_X$. We verify the properties
\begin{align*}
[(\id_A \otimes \cpy_X)\psi]|_X &= \langle \psi|_X, \id_X \rangle \\
[(\id_A \otimes o \otimes \id_X)\psi]|_X &= \psi|_X \otimes o
\end{align*}
and obtain
$ \langle \psi|_X, \id_X \rangle o = \psi|_X(o) \otimes o = (\psi|_X \otimes o)(o)$
from determinism of $o$.
\end{proof}
\section{Denotational semantics}
\label{sec:denotational}
We apply $\cond$ (\S\ref{sec:cond}) to give denotational semantics to our Gaussian language (\S\ref{sec:gaussian_language}), which we show to be fully abstract (Prop.~\ref{prop:full_abstraction}). One convenient feature is that we can use subtraction in $\gauss$ to condition on arbitrary expressions by observing a vanishing difference:

\begin{definition}
The Gaussian language embeds into the internal language of $\cond(\gauss)$, where $x \eq y$ is translated as $(x - y) \eqo 0$. A term $\vec x : R^m \vdash e : R^n$ denotes a morphism $\sem{e} : m \obsto n$.
\end{definition}

\begin{proposition}[Correctness]
If $(e,\psi) \red (e',\psi')$ then $\sem{e}\psi = \sem{e'}\psi'$. If $(e,\psi) \red \bot$ then $\sem{e} = \bot$.
\end{proposition}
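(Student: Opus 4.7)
Proceed by induction on the derivation of the reduction step. Either $(e,\psi)$ is a redex or $e = C[\rho]$ for some reduction context $C$ and redex $\rho$. The inductive case for reduction contexts follows from compositionality of the denotational semantics in the CD category $\cond(\gauss)$: $\sem{C[-]}$ is a functorial operation on morphisms, so preservation of denotation under the redex step lifts through the context, and $\bot$ is absorbed in the same way. The one subtlety is the sampling redex, which introduces a fresh latent $z_{r+1}$ not appearing in $C$; the interpretation of $C$ extends along the inclusion $r\hookrightarrow r+1$ by tensoring with an unused wire, and equality is preserved.

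\textbf{Sampling and let-bindings.} For $(\normal(),\psi)\red (z_{r+1},\psi\otimes\mathcal N(0,1))$, a direct calculation suffices: $\sem{\normal()}$ is the composite of discard with a fresh $\mathcal N(0,1)$, so the LHS evaluates to the state $\mathcal N(0,1)$, while the RHS projects onto the last coordinate of $\psi\otimes\mathcal N(0,1)$, which marginalises to the same $\mathcal N(0,1)$ by standard computation in $\gauss$. The let-reductions reduce to the substitution principle $\sem{\letin x v e}=\sem{e[^v\!/\!_x]}$ of Section~\ref{sec:internal_language}; this applies because values of the language are affine expressions, interpreted by deterministic (hence copyable) morphisms in $\gauss$, so substitution is unconditional.

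\textbf{Conditioning.} By the translation $v\eq w \equiv (v-w)\eqo 0$, the denotation $\sem{v\eq w}\psi$ is represented in $\obs(\gauss)$ by the triple $(K,\,(v-w)\circ\psi,\,0)$ with $K=1$, which lives in $\cond(\gauss)$ as a state $I\obsto I$. Proposition~\ref{prop:closedterms} gives that this state equals $\bot$ precisely when $0\not\ll ((v-w)\circ\psi)_K$; by Proposition~\ref{prop:gaussian_supports} this is equivalent to $0\notin\mathrm{supp}(v(X)-w(X))$ for $X\sim\psi$, exactly the operational failure condition. In the successful case, Proposition~\ref{prop:closedterms} identifies the state with $J$ applied to the posterior determined by the universal property of conditionals in $\gauss$, which by the conjugacy formula \eqref{eq:conjugacy_formula} coincides with $\psi|_{v=w}$. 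Post-composing with the canonical discard $\sem{()}: r\to I$ then yields the desired equality $\sem{v\eq w}\psi = \sem{()}\psi|_{v=w}$.

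\textbf{Main obstacle.} The delicate step is the conditioning redex, where we must match the abstract posterior supplied by the Cond construction against the concrete Moore--Penrose-based update used operationally. This rests on three things: the universal property of conditionals (unique up to $\ll$-almost sure equality by Lemma~\ref{lemma:llunique}), the explicit identification of supports in $\gauss$ (Proposition~\ref{prop:gaussian_supports}), and the characterisation of states in $\cond(\gauss)$ via Proposition~\ref{prop:closedterms}. No new ingredients are required beyond careful bookkeeping of the $\obs(\gauss)$-triples and the marginalisation identities they satisfy.
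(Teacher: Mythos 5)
Your overall plan (induction on the reduction step, direct verification of each redex, lifting through reduction contexts by compositionality) matches the paper's strategy, but the invariant you verify at the redexes is too weak to lift through contexts, and this is a genuine gap. A reduction context $C$ shares the latent variables $z_1,\dots,z_r$ with the redex $\rho$, so $\sem{C[\rho]}\psi$ is not of the form $\sem{C}\bigl(\sem{\rho}\psi\bigr)$; up to bookkeeping it is $\sem{C}\circ\langle \id_r,\sem{\rho}\rangle\circ\psi$. What a redex step must therefore preserve is the \emph{joint} state of the ambient latents together with the redex's output, i.e.
\[ \letin{x}{\psi}(x,\sem{\rho}) \;=\; \letin{(x,x')}{\psi'}(x,\sem{e'}), \]
which is exactly the strengthened induction hypothesis the paper proves before treating contexts. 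Your redex cases only verify the marginal of the redex's own output. For $\normal()$ this happens to be harmless because the fresh coordinate is independent of the rest, but for conditioning it is fatal: $\sem{v\eq w}\circ J(\psi)$ and $\sem{()}\circ J(\psi|_{v=w})$ are states $I\obsto I$ in $\cond(\gauss)$, and by Proposition~\ref{prop:closedterms} there are only two equivalence classes of such states (success and $\bot$). So your argument for the conditioning redex establishes only that success and failure agree; it says nothing about the fact that the step has updated the distribution of $z_1,\dots,z_r$ from $\psi$ to $\psi|_{v=w}$, which is the entire content of the rule and the only thing a surrounding context can detect. Your closing claim that the resulting state ``coincides with $\psi|_{v=w}$'' is in fact a type mismatch: $\psi|_{v=w}$ is a distribution on $\R^r$, not a state of $I$.

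The repair is the paper's: prove the displayed joint identity for each redex. For conditioning this means showing that the $\obs(\gauss)$-state $\bigl(1,\langle\id_r,h\rangle\psi,0\bigr)$ with $h=v-w$ is $\approx$-equivalent to $J(\psi|_{v=w})$, which follows from Proposition~\ref{prop:closedterms} applied to the conditional $\bigl(\langle\id,h\rangle\psi\bigr)|_{2}\circ 0$; that composite is precisely how $\psi|_{v=w}$ is defined via \eqref{eq:conjugacy_formula}. The ingredients you cite (Lemma~\ref{lemma:llunique}, Propositions~\ref{prop:gaussian_supports} and~\ref{prop:closedterms}) are the right ones, but they must be applied to the joint state $\langle\id_r,h\rangle\psi$ rather than to the pushforward $(v-w)_*\psi$ alone.
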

\begin{proof}
We can faithfully interpret $\psi$ as a state in both $\gauss$ and $\cond(\gauss)$. If $x \vdash e$ and $(e,\psi) \red (e',\psi')$ then $e'$ has potentially allocated some fresh latent variables $x'$. We show that
\begin{equation}
\letin{x}{\psi} (x,\sem{e}) = \letin{(x,x')}{\psi'} (x,\sem{e'}).
\end{equation}
This notion is stable under reduction contexts. 

Let $C$ be a reduction context. 
Then
\begin{align*}
&\letin x \psi (x,\sem{C[e]}(x)) \\
&= \letin x \psi \letin y {\sem{e}(x)} (x,\sem{C}(x,y)) \\
&= \letin {(x,x')} {\psi'} \letin y {\sem{e'}(x,x')} (x,\sem{C}(x,y)) \\
&= \letin {(x,x')} {\psi'} (x,\sem{C[e']})
\end{align*}
Now for the redexes
\begin{enumerate}
\item The rules for $\mathrm{let}$ follow from the general axioms of value substitution in the internal language
\item For $\normal()$ we have $(\normal(), \psi) \red (x',\psi \otimes \mathcal N(0,1))$ and verify
\begin{align*}
&\letin x \psi (x,\sem{\normal()}) \\
&= \psi \otimes \mathcal N(0,1) \\
&= \letin {(x,x')} {\psi \otimes \mathcal N(0,1)} (x,\sem{x'})
\end{align*}
\item For conditioning, we have $(v\eq w,\psi) \red ((), \psi|_{v=w})$. We need to show
\begin{align*}
&\letin x \psi (x, \sem{v \eq w}) = \letin x {\psi|_{v=w}} (x,())
\end{align*}
Let $h=v-w$, then we need to the following morphisms are equivalent in $\cond(\gauss)$:
\[ \begin{tikzpicture}[scale=0.4]
	\begin{pgfonlayer}{nodelayer}
		\node [style=medium state] (0) at (4, -0.5) {$\psi|_{h=0}$};
		\node [style=none] (1) at (0, 0) {$\approx$};
		\node [style=medium state] (2) at (-4, -2) {$\psi$};
		\node [style=morphism] (3) at (-2.75, 1) {$h$};
		\node [style=none] (4) at (-5.25, 0.75) {};
		\node [style=none] (7) at (-4, -1.5) {};
		\node [style=none] (8) at (4, 0) {};
		\node [style=none] (9) at (4, 4.5) {};
		\node [style=bn] (12) at (-4, -0.5) {};
		\node [style=none] (13) at (-2.75, 3.25) {};
		\node [style=none] (14) at (-5.25, 4.5) {};
		\node [style=none] (15) at (-2.75, 4) {};
		\node [style=observation] (16) at (-2.75, 4) {$0$};
	\end{pgfonlayer}
	\begin{pgfonlayer}{edgelayer}
		\draw (9.center) to (8.center);
		\draw (12) to (7.center);
		\draw [bend left=45] (12) to (4.center);
		\draw [in=-90, out=0, looseness=0.75] (12) to (3);
		\draw [style=condition] (3) to (13.center);
		\draw (14.center) to (4.center);
	\end{pgfonlayer}
\end{tikzpicture} \]
Applying \ref{prop:closedterms} to the left-hand side requires us to compute the conditional $\langle \id, h\rangle\psi|_2 \circ 0$, which is exactly how $\psi|_{h=0}$ is defined.
\end{enumerate}
\vspace{-12pt}
\end{proof}

\begin{proposition}[Full abstraction]
\label{prop:full_abstraction}
$\sem{e_1} = \sem{e_2}$ if and only if $e_1 \approx e_2$ (where $\approx$ is contextual equivalence, Def.~\ref{def:ctxequiv}).
\end{proposition}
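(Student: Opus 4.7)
The argument splits into soundness ($\sem{e_1} = \sem{e_2} \Rightarrow e_1 \approx e_2$) and completeness (the converse), and both rest on the preceding correctness proposition together with Prop.~\ref{prop:closedterms}.

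For soundness, I would first observe that $\sem{-}$ is compositional on the internal language of $\cond(\gauss)$, so $\sem{e_1}=\sem{e_2}$ implies $\sem{K[e_1]}=\sem{K[e_2]}$ for every closed context $K[-]$. By the correctness proposition, if $(K[e_i],!)\red^* (v_i,\psi_i)$ then $\sem{K[e_i]}=\sem{v_i}\,\psi_i=(v_i)_*\psi_i$ as a state in $\cond(\gauss)$, and if $(K[e_i],!)\red^*\bot$ then $\sem{K[e_i]}=\bot$. Since the operational semantics is deterministic (and terminating on closed programs), Prop.~\ref{prop:closedterms} then tells us that the denotation is either the canonical failure state $\bot$ or a unique Gaussian distribution, and the two cases are mutually exclusive. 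So agreement of denotations forces agreement of operational outcomes in both clauses of Def.~\ref{def:ctxequiv}.

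For completeness I would argue by contrapositive: suppose $\sem{e_1}\neq\sem{e_2}$ in $\cond(\gauss)$, where $\vec x:R^m \vdash e_i:R^n$. By Prop.~\ref{prop:simplified_cond}, there is some state $\psi:I\to A\otimes m$ in $\gauss$ that witnesses the failure of $\sem{e_1}\approx\sem{e_2}$: either exactly one of the composites $(\id_A\otimes\sem{e_i})\psi$ fails, or both succeed with distinct posteriors on $A\otimes n$. The key observation is that \emph{every} such $\psi$ is of the form $\vec c+B\,\mathcal N(I_k)$ for some $\vec c$ and $B$, so it is realised as the denotation of a closed conditioning-free program $\vdash p:R^{a+m}$ in the Gaussian language (namely $\vec c+B\cdot(\normal(),\dots,\normal())$). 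I would then form the distinguishing context
\[
K[-] \;\equiv\; \letin{(\vec a,\vec x)}{p}{(\vec a,[-])},
\]
so that $\sem{K[e_i]}=(\id_A\otimes\sem{e_i})\circ\psi$ as states $I\obsto a+n$. Again by Prop.~\ref{prop:closedterms} these two states either give distinct Gaussian distributions or differ in whether they are $\bot$; and by the correctness proposition these denotational outcomes coincide with the operational ones, so $K$ witnesses $e_1\not\approx e_2$.

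The step I expect to be most delicate is the completeness direction: I need the fact that the equivalence relation $\approx$ defining morphisms of $\cond(\gauss)$ (Def.~\ref{def:observational_quotient} applied with $\sim$ from Def.~\ref{def:condeq}) really is tested by states $\psi$ whose $A$ component can be realised in the language. Fortunately in $\gauss$ every object is $\R^k$, and every state is realisable by a standard sampling-plus-affine program, so there is no gap between ``categorical'' test distributions and ``syntactically available'' ones; this is what makes $\sem{-}$ fully abstract rather than merely sound. The rest of the argument is then bookkeeping, combining compositionality, the correctness proposition, and the closed-term normal form of Prop.~\ref{prop:closedterms}.
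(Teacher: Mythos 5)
Your proposal is correct and follows essentially the same route as the paper: soundness via compositionality of $\sem{-}$ plus the correctness proposition, and completeness via the observation that the quotient defining $\cond(\gauss)$ is itself a contextual equivalence whose test states are all definable as (conditioning-free) Gaussian programs. You simply spell out in more detail the step the paper compresses into ``all Gaussian contexts are definable in the language,'' including the explicit distinguishing context and the appeal to Prop.~\ref{prop:closedterms} to separate failure from success.
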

\begin{proof}
For $\Rightarrow$, let $K[-]$ be a closed context. Because $\sem{-}$ is compositional, we obtain $\sem{K[e_1]} = \sem{K[e_2]}$. If both succeed, we have reductions $(K[e_i], !) \red^* (v_i,\psi_i)$ and by correctness $v_1\psi_1 = \sem{K[e_1]} = \sem{K[e_2]} = v_2\psi_2$ as desired. If $\sem{K[e_1]} = \sem{K[e_2]} = \bot$ then both $(K[e_i], !) \red^* \bot$. 

For $\Leftarrow$, we note that $\cond$ quotients by contextual equivalence, but all Gaussian contexts are definable in the language. 
\end{proof}
\section{Equational theory}
\label{sec:equational_theory}

\newcommand{\ajname}{\mathrel{\vdash\!\!\!_{\mathsf{a}}}}
\newcommand{\cjname}{\mathrel{\vdash\!\!\!_{\mathsf{c}}}}
\newcommand{\aj}[3]{#1\ajname #2:#3}
\newcommand{\cj}[3]{#1\cjname #2:#3}
\newcommand{\realtype}{\rv}
\newcommand{\abort}\bot
\newcommand{\returnname}{\mathrm{return}}
\newcommand{\return}[1]{\returnname(#1)}
\newcommand{\nor}[2]{\nu #1.#2}
We now give an explicit presentation of the equality between programs in the Gaussian language (\S\ref{sec:equational_theory_axioms}). We demonstrate the strength of the axioms by using them to characterize normal forms for various fragments of the language (\S\ref{sec:nf}). Besides an axiomatization of program equality, this can also be regarded in other equivalent ways, such as a presentation of a PROP by generators and relations, or as a presentation of a strong monad by algebraic effects, or as a presentation of a Freyd category. But we approach from the programming language perspective. 

\subsection{Presentation}\label{sec:equational_theory_axioms}
We use the following fragment of the language from \S\ref{sec:gaussian_language}. The reader may find it helpful to think of this as a normal form for the language modulo associativity of `let'. This fragment has the following modifications: only variables of type $\realtype$ are allowed in the typing context~$\Gamma$; we have an explicit command for failure~($\bot$); we separate the typing judgement in two: judgements for expressions of affine algebra~$\ajname$ and for general computational expressions~$\cjname$; we have an explicit coercion `$\returnname$' between them for clarity. 
\[ \infer{\aj{\Gamma,x : \realtype,\Gamma'} x \realtype}{}  
\quad\ 
\infer{\aj{\Gamma}{s+t}{\realtype}}{\aj\Gamma s\realtype \quad\aj\Gamma t \realtype}
\quad\ 
\infer{\aj\Gamma {\alpha \cdot t} \realtype}{\aj\Gamma t\realtype}
\]\[
\infer{\aj\Gamma {\underline{\beta} }\realtype}{}
\qquad
\infer{\cj\Gamma {\letin x {\normal()} t}\realtype^n }{\Gamma, x : \realtype \vdash t : \realtype^n} 
\]\[
\infer{\cj\Gamma  {(s \eq t);u} {\realtype^n}}{\aj\Gamma s \realtype \quad \aj \Gamma t \realtype\quad \cj\Gamma u {\realtype^n}}
\]
\[
\infer{\cj\Gamma {\return{t_1,\dots,t_n}} {\realtype^n}}{\aj\Gamma {t_1} \realtype\quad\dots\quad \aj\Gamma {t_n} \realtype}
\qquad
\infer{\cj\Gamma \abort {\realtype^n}}{}
\]

There is no general sequencing construct, but we can combine expressions using the following substitution constructions, whose well-typedness is derivable. 
\[
\infer{\cj{\Gamma,\Gamma'} {t[s/x]}{\realtype^n} }{\cj{\Gamma,x:\realtype,\Gamma'}t {\realtype^n}\quad \aj {\Gamma,\Gamma'} s \realtype}
\]\[
    \infer{\cj{\Gamma} {t[u/\returnname]}{\realtype^n} }{\cj{\Gamma}t {\realtype^m}\quad \cj {\Gamma,x_1,\dots,x_m:\realtype} {x_1,\dots,x_n.u} {\realtype^n}}
    \]
    \newcommand{\ceq}[4]{\cj{#1}{#3 \equiv #4}{\realtype^{#2}} }
    \newcommand{\ceqal}[4]{\cj{#1&}{#3 \equiv #4}{\realtype^{#2}} }
    \newcommand{\retname}{\mathrm{r}}
    \newcommand{\ret}[1]{\retname[#1]}
    In the second form we replace the return statement of an expression with another expression, capturing variables appropriately. The precise definition of this hereditary substitution is standard in logical frameworks (e.g.~\cite{adams-tf},\cite{staton:pl}), for example:
\begin{multline*}
    \hspace{-3mm}\big(\letin x {\normal()} {\return{x+3}}
    \big)[^{a.a\eq 4;\return a}\!/\!_\returnname]\\
    =\ \letin x {\normal()} {(x+3)\eq 4;\return{x+3}}\end{multline*}
For brevity we now write $\nor x t$ for $\letin x {\normal()} t$, $\retname$ for $\returnname$ and drop `$;$' when unambiguous. We axiomatize equality by closing the following axioms under the two forms of substitution and also congruence. Now the syntax has the appearance of a second order algebraic theory, similar to the familiar presentations of $\lambda$-calculus or predicate logic. \\

The theory is parameterized over an underlying theory of values, which is affine algebra. The type $\realtype$ has the structure of a \emph{pointed vector space}, which obeys the usual axioms of vector spaces plus constant symbols $(\underline{\beta})_{\beta \in \R}$ subject to 
\[ \alpha \cdot \underline{\beta} = \underline{\alpha\beta}, \underline{\alpha} + \underline{\beta} = \underline{\alpha + \beta} \]
Terms modulo equations are affine functions. The category theorist will recognize the category $\mathbb A=\gauss_\det$ as the Lawvere theory of pointed vector spaces.

The following axioms characterize the conditioning-free fragment of the language, that is, Gaussian probability
\begin{align}
\ceqal {} 0 {\nor x {\ret{}}} {\ret{}} \label{ax:discard} \tag{DISC} \\
\ceqal {} n {\nor{\vec x} {\ret{U \vec x}}} {\nor {\vec x} {\ret {\vec x}}} \text{ if $U$ orthogonal} \label{ax:orth} \tag{ORTH}
\end{align}

The following are commutativity axioms for conditioning
\begin{align}
\ceqal {\hspace{-4mm}a,b,c,d }0 {(a \eq b)(c \eq d)\ret{}}{(c \eq d)(a \eq b)\ret{}} \label{ax:cond_comm} \tag{C1} \ \\ 
\ceqal{a,b }1 {(a \eq b);\nor x{\ret{x}}} {\nor x{(a \eq b)\ret{x}}}  \label{ax:nu_comm_cond} \tag{C2} \\\
\ceqal {a,b} n {(a \eq b)\bot} \bot \label{ax:fail_absorb} \tag{C3} \
\end{align}
while following encode specific properties of $(\eq)$
\begin{align}
\ceqal a 0 {(a \eq a)\ret{}}{\ret{}} \label{ax:cond_tauto} \tag{TAUT} \ \\ 
\ceqal{}0 {(\underline{0} \eq \underline{1})\ret{}} {\bot} \label{ax:fail_incons} \tag{FAIL} \ \\
\ceqal{a,b} 1 {(a \eq b)\ret a}{(a \eq b)\ret b} \label{ax:cond_subst} \tag{SUBS} \ \\
\ceqal {}1{\nor x{(x \eq \underline c)\ret{x}}} {\ret{\underline c}} \label{ax:cond_init}  \tag{INIT}
\end{align}
Lastly, we add the special congruence scheme
\begin{align}
\ceq{\Gamma} 0 {(s \eq t)\ret {}} {(s' \eq t')\ret {}} \label{ax:cong} \tag{CONG} \
\end{align}
whenever $(s=t)$ and $(s'=t')$ are interderivable equations over $\Gamma$ in the theory of pointed vector spaces. \\

Axioms~\eqref{ax:discard} and~\eqref{ax:orth} completely axiomatize the fragment of the language without conditioning (Prop.~\ref{prop:gauss_complete}). Axioms~\eqref{ax:cond_comm}-\eqref{ax:fail_absorb} describe dataflow -- all the operations distribute over each other. The reader should focus on the remaining five axioms \eqref{ax:cond_tauto}-\eqref{ax:cong}, which are specific to conditioning. It is intended that they are straightforward and intuitive. 

\subsection{Normal forms}
\label{sec:nf}

\begin{proposition}
\label{prop:gauss_complete}
Axioms \eqref{ax:discard}-\eqref{ax:orth} are complete for $\gauss$. That is, conditioning-free terms $\vec x : \rv^n \vdash u,v : \rv^n$ denote the same morphism in $\gauss$ if and only if $\vec x \vdash u \equiv v$ is derivable from the axioms.
\end{proposition}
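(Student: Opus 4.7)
The plan is a normal-form argument. Soundness of both axioms in $\gauss$ is routine: \eqref{ax:discard} holds since $\gauss$ is a Markov category, so a sampled-and-discarded variable marginalizes away; \eqref{ax:orth} holds by rotational invariance of the standard Gaussian, i.e.\ $UZ \sim \mathcal N(0, UU^T) = \mathcal N(0, I)$ whenever $Z \sim \mathcal N(0,I)$ and $U$ is orthogonal.

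For completeness, note that the grammar already forces every conditioning-free term $\vec y : \rv^m \cjname t : \rv^n$ into the shape $\nu z_1.\cdots \nu z_k.\, \ret{t_1,\dots,t_n}$ with each $t_i$ affine in $\vec y, \vec z$. Using the congruence rule together with the equations of pointed vector spaces, one collapses the returned tuple into a single expression $A\vec z + M\vec y + \vec c$ with $A \in \R^{n \times k}$, $M \in \R^{n \times m}$, $\vec c \in \R^n$. By Example~\ref{ex:gauss_interpretation}, such a normal form denotes the $\gauss$ morphism $(M, \vec c, AA^T) : m \to n$, so two normal forms are semantically equal iff their $(M, \vec c, AA^T)$ agree (although the chosen $k$ and $A$ may differ).

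It remains to derive the equation between any two such normal forms $(A_1, M, \vec c)$ and $(A_2, M, \vec c)$ sharing $A_1 A_1^T = A_2 A_2^T$, using only \eqref{ax:discard} and \eqref{ax:orth}. If the latent dimensions $k_1, k_2$ differ, say $k_1 < k_2$, I would invoke \eqref{ax:discard} to pad the first form with $k_2 - k_1$ fresh samples that do not appear in the return, corresponding to appending zero columns to $A_1$; this reduces to $k_1 = k_2 = k$. The linear-algebra fact I rely on, standard via SVD, is that $BB^T = CC^T$ for $B, C \in \R^{n \times k}$ implies $C = BU$ for some $U \in O(k)$. Applying this to $A_1, A_2$ and invoking \eqref{ax:orth} in the form $\nu \vec z.\, \ret{U\vec z} \equiv \nu \vec z.\, \ret{\vec z}$, substituted hereditarily into the continuation $\vec w \mapsto \ret{A_1 \vec w + M\vec y + \vec c}$, yields the desired equation (modulo affine algebra).

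The main delicate point is the padding step: axiom \eqref{ax:discard} only asserts $\nu x.\, \ret{} \equiv \ret{}$, so inserting an unused $\nu z$ above an arbitrary return requires the structural equations (commutativity of $\nu$ with other operations and hereditary substitution) that come ``for free'' in the internal language of any CD category. Once that bookkeeping is in place, the axioms \eqref{ax:discard} and \eqref{ax:orth} exactly absorb the two indeterminacies in the normal form -- the choice of latent dimension $k$ and the choice of matrix square root of $\Sigma = AA^T$ -- and completeness follows.
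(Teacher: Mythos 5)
Your proposal is correct and follows essentially the same route as the paper: normalize conditioning-free terms to $\nu\vec z.\,\mathrm{r}[A\vec z + M\vec y + \vec c]$ denoting $(M,\vec c,AA^T)$, pad latent dimensions using \eqref{ax:discard}, and then equate the two square roots of the covariance via the SVD fact (the paper's Proposition~\ref{prop:orth_equivalence}) together with \eqref{ax:orth}. Your explicit remark about the bookkeeping needed to commute an unused $\nu$ past the return is a point the paper glosses over, but it is the same argument.
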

\begin{proof}
The axioms are clearly validated in $\gauss$; probability is discardable and independent standard normal Gaussians are invariant under orthogonal transformations. Note that $\nu$ commutes with itself because permutation matrices are orthogonal. 

It is curious that these laws completely characterize Gaussians: Any term normalizes to the form $\nu \vec z.\ret{A \vec x + B \vec z + \vec c}$, denoting the map $(A,\vec c,BB^T)$ in $\gauss$. Consider some other term $\nu \vec w.\varphi[A'\vec x + B'\vec w + \vec c']$ that has the same denotation. By \eqref{ax:discard}, we can without loss of generality assume that $\vec z$ and $\vec w$ have the same dimension. The condition $(A,c,BB^T)=(A',c',B'(B')^T)$ implies $A=A',\vec c= \vec c'$. By \ref{prop:orth_equivalence} there is an orthogonal matrix $U$ such that $B' = BU$. So the two terms are equated under \eqref{ax:orth}.
\end{proof}

\begin{example}
\label{ex:alg_diagonal}
\[ \nu x.\nu y. \ret{x+y} \equiv \nu y.\ret{\sqrt 2 \cdot y} \]
\end{example}
\begin{proof}
Let $s=1/\sqrt 2$, then the matrix $U = \begin{pmatrix} s & s \\ -s & s \end{pmatrix}$ is orthogonal. Thus
\begin{align*}
\nu x.\nu y. \ret{x+y} 
&\equiv \nu x.\nu y. \ret{(sx + sy) + (-sx + sy)} \\
&\equiv \nu x.\nu y. \ret{\sqrt 2 y} \\
&\equiv \nu y. \ret{\sqrt 2 y} \
\end{align*} 
where we apply \eqref{ax:orth}, affine algebra and \eqref{ax:discard}.
\end{proof}

We proceed to showing the consistency of the axioms for conditioning.

\begin{proposition}
Axioms \eqref{ax:discard}-\eqref{ax:cong} are valid in $\cond(\gauss)$
\end{proposition}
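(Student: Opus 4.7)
The plan is to verify each axiom in turn, leveraging the categorical infrastructure already established for $\cond(\gauss)$ rather than computing explicit Gaussian formulas wherever possible. First, axioms~\eqref{ax:discard} and~\eqref{ax:orth} live entirely in the conditioning-free fragment, so their validity in $\gauss$ (as discussed in the proof sketch of Proposition~\ref{prop:gauss_complete}) lifts to $\cond(\gauss)$ through the strict CD-preserving functor $J:\gauss\to\cond(\gauss)$, which is faithful by Proposition~\ref{prop:faithful}.

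For the dataflow axioms~\eqref{ax:cond_comm} and~\eqref{ax:nu_comm_cond}, the key observation is that $\cond(\gauss)$ is a genuine CD category, so the interchange law holds between any two morphisms whose dataflow permits it; this is exactly Proposition~\ref{prop:good_properties}. Axiom~\eqref{ax:fail_absorb} follows from Proposition~\ref{prop:closedterms}: any composite involving a failing state is again failing, since adding a further observation $(a\eq b)$ only introduces an additional condition wire and cannot rescue a condition that was already unsatisfiable. For the congruence scheme~\eqref{ax:cong}, I would invoke Proposition~\ref{prop:isomorphic_conditions}: if $(s=t)$ and $(s'=t')$ are interderivable in affine algebra, then the deterministic affine maps $s-t$ and $s'-t'$ have the same kernel, so there is an affine isomorphism of condition wires carrying observation $0$ to $0$.

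The remaining four axioms are the genuinely conditioning-specific content. For~\eqref{ax:cond_tauto}, the translation is $(a-a)\eqo 0 = \underline 0 \eqo 0$; the marginal on the condition wire is the Dirac at $0$, so $0 \ll \delta_0$ and the conditional is trivial, yielding $\ret{}$. For~\eqref{ax:fail_incons}, the condition wire carries the constant state $\underline{-1}$ with support $\{-1\}$, and $0\not\ll \delta_{-1}$, so the morphism equals $\bot$ by Definition~\ref{def:condeq}. For~\eqref{ax:cond_subst}, I would apply Proposition~\ref{prop:exactly}: after conditioning on $a=b$ the two wires are enforced to coincide, so returning $a$ agrees with returning $b$ as states in $\cond(\gauss)$; this reduces to checking that the joint of $(a-b,a)$ conditioned on the first coordinate being $0$ equals the joint of $(a-b,b)$ conditioned on the first coordinate being $0$, which is Proposition~\ref{prop:specialization} combined with the identity $a=b$ on the conditioned event.

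The main obstacle is~\eqref{ax:cond_init}, which is the only axiom requiring an actual Gaussian conditional computation. Here $\nu x.(x\eq \underline c)\ret{x}$ denotes the state $(K,\psi,0):I\obsto 1$ where $\psi=\langle \id,\,\id - \underline c\rangle\circ \mathcal N(0,1)$ on $1\otimes 1$ with $K=1$; the marginal on $K$ is $\mathcal N(-c,1)$, which has full support, so $0\ll\psi_K$ and the conditioning succeeds. Plugging into formula~\eqref{eq:conjugacy_formula} with $\Sigma_{11}=\Sigma_{12}=\Sigma_{22}=1$ and $\mu_1=0,\mu_2=-c,a=0$ gives posterior mean $c$ and posterior variance $0$, i.e.\ the Dirac state $\underline c$, which matches $\sem{\ret{\underline c}}$. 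By Proposition~\ref{prop:closedterms} the two sides are equal in $\cond(\gauss)$. This completes the verification.
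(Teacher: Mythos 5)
Your verification is correct and matches the paper's own (sketched) proof essentially point for point: the commutation axioms via the CD structure of $\cond(\gauss)$, \eqref{ax:cond_subst} via Proposition~\ref{prop:exactly}, \eqref{ax:cond_init} and \eqref{ax:fail_incons} via Proposition~\ref{prop:closedterms} together with the support checks $c \ll \mathcal N(0,1)$ and $0 \not\ll 1$, and \eqref{ax:cong} via Proposition~\ref{prop:isomorphic_conditions} (where the isomorphism of scalar conditions is a nonzero multiple). The only differences are that you supply more detail than the paper — the explicit conjugacy computation for \eqref{ax:cond_init} and the explicit treatment of \eqref{ax:discard}, \eqref{ax:orth} and \eqref{ax:cond_tauto}, which the paper leaves implicit — and these additions are all sound.
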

\begin{proof}
Sketch. The commutation properties are straightforward from string diagram manipulation. 
\begin{description}
	\item[{\eqref{ax:cond_subst}}]~Write $a=b+(a-b)$; by \ref{prop:exactly}, once we condition $a-b \eqs 0$, we have $a=b$.
	\item[{\eqref{ax:cond_init}}]~By \ref{prop:closedterms}, noting that $c \ll \mathcal N(0,1)$
	\item[{\eqref{ax:fail_incons}}]~By \ref{prop:closedterms}, because $0 \not \ll 1$
	\item[{\eqref{ax:cong}}]~This follows from \ref{prop:isomorphic_conditions}, because over $\A$, equivalent scalar equations are nonzero multiples of each other. Still, this is very surprising axiom scheme, which is substantially generalized in \ref{prop:equivalent_equations}.
\end{description} 
\end{proof}

For the remainder of this section, we will show how to use the theory to derive normal forms for conditioning programs.

\begin{proposition}
\label{prop:elementary}
Elementary row operations are valid on systems of conditions. In particular, if $S$ is an invertible matrix then
\[ (A\vec x \eq \vec b)\ret{} \equiv (SA\vec x \eq S\vec b)\ret{} \]
\end{proposition}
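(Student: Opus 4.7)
The plan is to reduce the statement to verifying each of the three types of elementary row operation: since any invertible $S$ factors as a product of elementary matrices (Gauss--Jordan), and congruence lets one chain equivalences between condition blocks, it suffices to treat each type in isolation.

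A row swap just permutes two adjacent scalar conditions, which is a direct application of axiom \eqref{ax:cond_comm}. Scaling row $i$ by a nonzero $\lambda$ replaces the single scalar condition $((A\vec x)_i \eq b_i)$ by $(\lambda (A\vec x)_i \eq \lambda b_i)$; these two equations are interderivable in the theory of pointed vector spaces (multiply or divide by $\lambda$), so \eqref{ax:cong} settles that case.

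The real work is adding $\lambda$ times row $j$ to row $i$. Writing $(\alpha \eq \beta)$ for the $j$-th condition and $(\gamma \eq \delta)$ for the $i$-th, I need
\[ (\alpha \eq \beta)(\gamma \eq \delta)\ret{} \equiv (\alpha \eq \beta)(\gamma + \lambda\alpha \eq \delta + \lambda\beta)\ret{}. \]
The main tool is a derived substitution-under-condition principle: for any computational continuation $G[a']$ in a fresh variable $a'$,
\[ (a \eq b);\ G[a/a'] \ \equiv\ (a \eq b);\ G[b/a']. \]
This follows from axiom \eqref{ax:cond_subst} by closing under the return-substitution form with continuation $a'.G[a']$, and then value-substituting $a,b$ by the desired affine expressions $\alpha,\beta$. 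Instantiating with $G[a'] = (\gamma + \lambda a' \eq \delta + \lambda\beta)\ret{}$ yields
\[ (\alpha \eq \beta)(\gamma + \lambda\alpha \eq \delta + \lambda\beta)\ret{} \equiv (\alpha \eq \beta)(\gamma + \lambda\beta \eq \delta + \lambda\beta)\ret{}, \]
after which \eqref{ax:cong} rewrites the right-hand condition to $(\gamma \eq \delta)$, since the two equations differ by the common translate $\lambda\beta$ and so are interderivable in pointed vector spaces. Reversing these steps and combining with the swap and scale cases delivers the row-addition operation, and hence the full statement for arbitrary invertible $S$.

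The main subtlety is the derivation of the substitution-under-condition principle itself: the restricted fragment has no $\letin{}{}{}$ binder, so one cannot thread a value through a local binding, and instead must exploit the return-substitution form under which the axioms are closed. Care is needed to keep contexts straight while dropping $G$ into the slot vacated by the return, and to verify that the hereditary substitution produces exactly the intended target term.
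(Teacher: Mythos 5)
Your proposal is correct and follows essentially the same route as the paper: decompose $S$ into elementary operations, handle swaps by \eqref{ax:cond_comm} and scalings by \eqref{ax:cong}, and derive row addition by closing \eqref{ax:cond_subst} under return-substitution with a condition as the continuation, then cleaning up with \eqref{ax:cong}. Your explicit ``substitution-under-condition'' lemma is exactly the paper's instantiation of \eqref{ax:cond_subst} with $(u + x \eq v + t)\ret{}$ for $\ret{x}$, just spelled out in more detail.
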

\begin{proof}
Reordering and scaling of equations is \eqref{ax:cond_comm}, \eqref{ax:cong}. For summation, i.e.
\[ (s \eq t)(u \eq v)\ret{} \equiv (s \eq t)(u + s \eq v + t)\ret{} \]
instantiate \eqref{ax:cond_subst} with $^{(u + x \eq v + t)\ret{}}/_{\ret{x}}$. Now use the fact that applying any invertible matrix on the left can be decomposed into elementary row operations.
\end{proof}

\begin{corollary}
\label{prop:equivalent_equations}
If $A\vec x = \vec c$ and $B\vec x = \vec d$ are linear systems of equations with the same solution space, then 
\[ (A\vec x \eq \vec c)\ret{} \equiv (B \vec x \eq \vec d)\ret{} \]
is derivable. 
\end{corollary}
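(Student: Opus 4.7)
My plan is to reduce both conditioning terms to a canonical form determined uniquely by the shared solution space. By Proposition~\ref{prop:elementary} any elementary row operation on a system of conditions is derivable, so I can put the augmented matrices $[A\,|\,\vec c]$ and $[B\,|\,\vec d]$ into reduced row echelon form. I would then split according to whether the common solution space is empty.

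In the inconsistent case, each RREF contains a row of the form $\underline 0 \eq \underline c$ with $c\neq 0$. Axiom \eqref{ax:cong} identifies this row with $(\underline 0 \eq \underline 1)$, since $0=c$ and $0=1$ are interderivable in the theory of pointed vector spaces by scaling by $1/c$ and $c$. Then \eqref{ax:fail_incons} rewrites $(\underline 0 \eq \underline 1)\ret{}$ to $\bot$. I would commute the offending row past the remaining conditions using \eqref{ax:cond_comm}, replace the suffix $(\underline 0 \eq \underline 1)\ret{}$ by $\bot$ by congruence, and absorb the surrounding conditions by iterated application of \eqref{ax:fail_absorb}. Both systems thus reduce to $\bot$.

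In the consistent case, I rely on the standard linear-algebra fact that the RREF of an augmented matrix $[A\,|\,\vec c]$ is determined uniquely by its nonempty affine solution set: the RREF of $A$ is fixed by its row space, which is the orthogonal complement of the direction subspace, and the last column is then pinned down by evaluating $R$ on the canonical particular solution obtained by setting free variables to zero. Hence after RREF the two augmented matrices agree modulo extra rows of zeros. Zero rows correspond to tautological conditions $(\underline 0 \eq \underline 0)$, which I would delete using \eqref{ax:cond_tauto}, extended via substitution for $\ret{}$ to act over an arbitrary trailing segment of conditions, and using \eqref{ax:cond_comm} to bring each tautology to the front. After stripping zero rows from both sides, the two terms are syntactically identical.

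The main obstacle is not linear-algebraic but syntactic bookkeeping: the axioms \eqref{ax:cond_tauto} and \eqref{ax:fail_incons} are stated only with an empty return $\ret{}$ as the tail, so I must observe explicitly that the substitution rule for $\ret{}$, together with congruence and \eqref{ax:cond_comm}, promotes them to rewrites that act through any sequence of trailing conditions. Once this extended rewriting is in place, applying the RREF normalization to both sides and comparing gives the claimed equation.
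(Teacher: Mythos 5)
Your proof is correct and follows essentially the same route as the paper: the inconsistent case is dispatched by deriving $(\underline 0\eq\underline 1)$ and applying \eqref{ax:fail_incons} and \eqref{ax:fail_absorb}, and the consistent case reduces to row equivalence of the two systems via Proposition~\ref{prop:elementary}. The only cosmetic difference is that you normalize both augmented matrices to their (unique) reduced row echelon form, whereas the paper invokes Corollary~\ref{prop:affine_equivalence} to produce directly an invertible $S$ with $B=SA$ and $\vec d=S\vec c$ and applies Proposition~\ref{prop:elementary} once; your explicit remarks about promoting \eqref{ax:cond_tauto} and \eqref{ax:fail_incons} through trailing conditions via substitution for $\retname$ and \eqref{ax:cond_comm} are a welcome precision the paper leaves implicit.
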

This generalizes \eqref{ax:cong} to systems of conditions.
\begin{proof}
If the systems are consistent, then they are isomorphic by \ref{prop:affine_equivalence} and we use the previous proposition. If they are inconsistent, we can derive $(0\eq 1)$ and use \eqref{ax:fail_incons},\eqref{ax:fail_absorb} to equate them to $\bot$.
\end{proof}

We give a normal form for closed terms.
\begin{theorem}
\label{prop:normalization}
Any closed term can be brought into the form $\nu \vec z. \ret{A\vec z + \vec c}$ or $\bot$. The matrix $AA^T$ is uniquely determined.
\end{theorem}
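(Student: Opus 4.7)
The plan is to give a constructive normalization procedure that uses the axioms of Section~\ref{sec:equational_theory_axioms} to rewrite every closed term into one of the two stated forms, followed by an appeal to Proposition~\ref{prop:gauss_complete} for uniqueness of $AA^T$.

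First, I prenex-normalize. A closed term in the fragment is a linear sequence of bindings $\nu x.$ and conditions $(s\eq t);$ ending in either $\ret{\vec e}$ or $\bot$. The derived rule $\nor{x}{v}\equiv v$ for $x\notin v$ (obtained from~\eqref{ax:discard} by hereditary substitution into $\ret{}$), together with~\eqref{ax:fail_absorb}, shows that any occurrence of $\bot$ propagates upward to reduce the whole term to $\bot$. Otherwise I use~\eqref{ax:cond_comm} and~\eqref{ax:nu_comm_cond} to float all $\nu$-bindings to the outside and gather the scalar conditions into a single matrix equation, obtaining
\[ \nor{\vec z}{(A\vec z\eq \vec c);\ret{M\vec z+\vec d}} \]
with $\vec c,\vec d$ constants, since the program is closed and every subexpression is affine in $\vec z$ alone.

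Second, I diagonalize the conditioning subspace. Let $r=\mathrm{rank}(A)$ and pick an orthogonal $U$ whose last $n-r$ rows form an orthonormal basis of $\ker A$, so that $AU^T=(B~0)$ with $B$ of full column rank. Applying hereditary substitution to~\eqref{ax:orth} yields $\nor{\vec z}{v(\vec z)}\equiv \nor{\vec z}{v(U^T\vec z)}$ for every body $v$, so writing $MU^T=(M_1~M_2)$ the term becomes
\[ \nor{\vec w_1}{\nor{\vec w_2}{(B\vec w_1\eq \vec c);\ret{M_1\vec w_1+M_2\vec w_2+\vec d}}}. \]
Because $B$ has full column rank, the system $B\vec w_1=\vec c$ has at most one solution. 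If it is inconsistent, Proposition~\ref{prop:equivalent_equations} together with~\eqref{ax:fail_incons} and~\eqref{ax:fail_absorb} reduces the whole term to $\bot$. Otherwise the unique solution $\vec w_1^\ast$ is a constant vector and Proposition~\ref{prop:equivalent_equations} replaces the conditioning block by $\vec w_1\eq \vec w_1^\ast$.

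Third, I eliminate and conclude. Using~\eqref{ax:cond_subst} componentwise I substitute the constant $\vec w_1^\ast$ for $\vec w_1$ throughout the body, producing $\ret{M_2\vec w_2+(M_1\vec w_1^\ast+\vec d)}$ in which $\vec w_1$ no longer appears. The hereditary-substitution variant of~\eqref{ax:cond_init}, namely $\nor{x}{(x\eq \underline c);v}\equiv v$ whenever $x\notin v$ and $c$ is a constant, then applies to each coordinate of $\vec w_1$ in turn, removing all remaining pivot samplings and their conditions and leaving $\nor{\vec w_2}{\ret{M_2\vec w_2+\vec c'}}$, the desired normal form. Uniqueness of $AA^T$ follows from Proposition~\ref{prop:gauss_complete}: the conditioning-free axioms are sound for $\gauss$ and interpret $\nor{\vec z}{\ret{A\vec z+\vec c}}$ as $\mathcal N(\vec c,\,AA^T)$, so any two normal forms of the same closed term denote the same Gaussian and must share their covariance. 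The main obstacle is the orthogonal diagonalization step: only after that change of variables does the conditioning system take the ``variable $\eq$ constant'' shape that~\eqref{ax:cond_init} requires, since a naive row-reduction of $A$ alone would leave the right-hand sides depending on non-pivot samples, for which no elimination axiom is available.
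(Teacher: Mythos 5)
Your proof is correct and follows essentially the same route as the paper's: commute to prenex form $\nu\vec z.(A\vec z\eq\vec b)\ret{D\vec z+\vec d}$, make an orthogonal change of the latent coordinates so that the conditions become ``pivot coordinate $\eq$ constant'', eliminate them with \eqref{ax:cond_init}, \eqref{ax:cond_tauto} and \eqref{ax:fail_incons}, and obtain uniqueness of $AA^T$ from Proposition~\ref{prop:gauss_complete}. The only differences are cosmetic: the paper packages the linear algebra as the two-sided decomposition $SAT^{-1}=\bigl(\begin{smallmatrix}I_r&0\\0&0\end{smallmatrix}\bigr)$ of Proposition~\ref{prop:basechange} where you use $AU^T=(B\;0)$ followed by row reduction via Corollary~\ref{prop:equivalent_equations}, and you make explicit the uniqueness argument that the paper leaves implicit.
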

This is the algebraic analogue of \ref{prop:closedterms}.
\begin{proof}
By commutativity, we bring the term into the form
\[ \nu \vec z. (A \vec z \eq \vec b)\ret{D\vec z+\vec d} \]
By \ref{prop:basechange}, we can find invertible matrices $S,T$ such that
\[ SAT^{-1} = \begin{pmatrix} I_r & 0 \\ 0 & 0 \end{pmatrix} \]
and $T$ is orthogonal. Using the orthogonal coordinate change $\vec w=T\vec z$ and \ref{prop:equivalent_equations}, the equations take the form
\[ \nu \vec w.(SAT^{-1}\vec w \eq S\vec b)\ret{DT^{-1}\vec w + \vec d} \]
This simplifies to
\begin{align*}
\nu \vec w. (\vec w_{1:r} \eq \vec c_{1:r})(0 \eq \vec c_{r+1:n}) \ret{DT^{-1}\vec w + \vec d}
\end{align*}
where $\vec c = S\vec b$. We can process the first block of conditions with \eqref{ax:cond_init}. The conditions $(\underline{0} \eq c_i)$ can either be discarded by \eqref{ax:cond_tauto} if $c_i = \underline{0}$ for all $i=r+1,\ldots,n$, or fail by \eqref{ax:fail_incons} otherwise. We arrive at a conditioning-free term.
\end{proof}

\begin{example}
\[ \nu x.\nu y.(x \eq y) \ret{x,y} \equiv \nu x.\ret{sx,sx} \]
where $s=1/\sqrt 2$.
\end{example}
\begin{proof}
We use again the unitary matrix $U$ from Example \ref{ex:alg_diagonal}
\begin{align*}
\nu x.\nu y.(x \eq y);\ret{x,y}
&\equiv \nu x.\nu y.(sx + sy \eq -\!sx + sy);\\
        &\quad\quad \ret{sx + sy, -sx + sy} \\
&\equiv \nu x.\nu y.(x \eq 0)\ret{sx + sy, -sx + sy} \\
&\equiv \nu y.\ret{sy,sy}
\end{align*}
where we apply \eqref{ax:orth}, affine algebra and \eqref{ax:cond_init}.
\end{proof}

Lastly, we give a normal form for conditioning effects.

\begin{theorem}[Normal forms]
\label{prop:normalform_uniqueness}
Every term $\vec x : \rv^n \vdash u : \rv^0$ can either be brought into the form $\bot$ or
\begin{equation} \nu \vec z.A \vec x =:= B\vec z + \vec c \label{eq:normalform} \end{equation}
where $A \in \R^{r \times n}$ is in reduced echelon form with no zero rows. The values of $A$, $\vec c$ and $BB^T$ are uniquely determined.
\end{theorem}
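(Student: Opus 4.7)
The claim splits into existence of a normal form and uniqueness of the invariants $(A,\vec c,BB^T)$.

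\emph{Existence.} My strategy mirrors the argument of Theorem~\ref{prop:normalization}. Using the commutativity axioms \eqref{ax:cond_comm}, \eqref{ax:nu_comm_cond} I first gather all $\nu$-bindings to the front, then collect all conditions into a single linear system via Proposition~\ref{prop:elementary}, reducing $u$ to either $\bot$ or $\nu\vec z.(M\vec x + N\vec z \eq \vec e)\ret{}$. A second application of Proposition~\ref{prop:elementary} performs elementary row operations on $[M\mid N\mid \vec e]$ to bring $M$ into reduced row echelon form. Each zero row of $M$ post-reduction is a pure-noise condition $0 \eq \vec\alpha^T\vec z + d$: when $\vec\alpha=\vec 0$ it is either discarded by \eqref{ax:cond_tauto} or collapsed to $\bot$ via \eqref{ax:fail_incons} together with \eqref{ax:fail_absorb}; when $\vec\alpha \neq \vec 0$, I extend $\vec\alpha/\|\vec\alpha\|$ to an orthonormal basis and apply \eqref{ax:orth} to rewrite the noise variables so that the condition becomes $\|\vec\alpha\|\,w_1 \eq -d$, then \eqref{ax:cong} rescales to $w_1 \eq -d/\|\vec\alpha\|$, and the derived rule $\nu w_1.(w_1\eq e);t \equiv t[e/w_1]$ (obtained from \eqref{ax:cond_init}, \eqref{ax:cond_subst}, \eqref{ax:discard}) eliminates the binding. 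What remains is the claimed normal form, with $r$ conditions corresponding to the nonzero rows of $M$.

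\emph{Uniqueness.} I invoke full abstraction (Proposition~\ref{prop:full_abstraction}) and Proposition~\ref{prop:simplified_cond} to reduce to a semantic statement: two normal forms denote equal morphisms in $\cond(\gauss)$ iff they induce identical posteriors on every Gaussian prior $\mathcal N(\vec\mu,\Sigma)$ of $\vec x$. Because $A$ has full row rank, $M := A\Sigma A^T + BB^T$ is invertible for every $\Sigma > 0$, and the posterior is given by the standard formula
\[ \vec\mu' = \vec\mu + \Sigma A^T M^{-1}(\vec c - A\vec\mu), \quad \Sigma' = \Sigma - \Sigma A^T M^{-1}A\Sigma. \]
Substituting $\vec\mu = 0$, $\Sigma = tI$ and letting $t\to\infty$ yields $t^{-1}\Sigma' \to I - A^T(AA^T)^{-1}A$, the orthogonal projection onto $\ker A$. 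This recovers $\ker A$, hence the row space of $A$, and therefore $A$ itself, since a reduced row echelon matrix with no zero rows is the unique representative of its row space. The asymptotic posterior mean under the same scaling converges to $A^+\vec c$, from which $\vec c = AA^+\vec c$ is recovered. Finally, evaluating at $\Sigma = I$ gives $A\Sigma' A^T = AA^T - AA^T M^{-1} AA^T$, and solving this for $M^{-1}$ (using that $AA^T$ is invertible) yields $BB^T = M - AA^T$.

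\emph{Main obstacle.} The most delicate step in existence is establishing the derived rule $\nu x.(x\eq e); t \equiv t[e/x]$ used to eliminate noise-only conditions: one first applies \eqref{ax:cond_subst} to replace $x$ by $e$ in $t$, then inserts a dummy binding $\letin{y}{\ret{x}}{s}$ to expose an occurrence of $x$ compatible with \eqref{ax:cond_init}, pulls $\nu x$ past the let by commutativity, and finishes by \eqref{ax:cond_init} and \eqref{ax:discard}. On the uniqueness side, the crucial insight is the asymptotic scaling $\Sigma = tI \to \infty$, which decouples the recovery of $A$ from that of $BB^T$; once $A$ is in hand, both $\vec c$ and $BB^T$ drop out of routine Gaussian algebra.
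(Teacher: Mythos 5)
Your proof is correct, and the existence half is essentially the paper's argument: commute everything into a single affine system, row-reduce, and dispose of the rows not involving $\vec x$ (the paper delegates this last step to Theorem~\ref{prop:normalization} applied to the closed sub-program in $\vec z$, whereas you inline it via \eqref{ax:orth} and a derived instance of \eqref{ax:cond_init}; both are fine, though your appeal to ``full abstraction'' for uniqueness should really be an appeal to soundness of the axioms in $\cond(\gauss)$ together with Proposition~\ref{prop:simplified_cond}). The uniqueness half, however, takes a genuinely different route. The paper (Proposition~\ref{prop:uniqueness_proof}) probes the denotation with \emph{degenerate} priors: deterministic points detect the failure set $W$, maximal ``vacuous'' affine subspaces recover $\ker A$ (using Corollary~\ref{prop:gauss_1d} to rule out larger vacuous subspaces), and only then does it send a flat prior $\lambda I\to\infty$ on the pivot coordinates to recover $\eta$. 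You instead use only nondegenerate isotropic priors $\mathcal N(0,tI)$ and read $\ker A$ off the limit $t^{-1}\Sigma' \to I - A^T(AA^T)^{-1}A$, then recover $\vec c = A A^+\vec c$ from the limiting mean and $BB^T$ by solving the $\Sigma=I$ covariance equation; I checked these computations and they are right. Your version buys a cleaner argument --- it never touches the failure case or the somewhat delicate maximal-vacuous-subspace analysis, since $A\Sigma A^T + BB^T$ is automatically invertible for $\Sigma>0$ --- at the cost of leaning on the explicit Gaussian conditioning formulas and an asymptotic expansion, where the paper's probing is more structural and closer in spirit to the support-based reasoning used elsewhere in the development.
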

\begin{proof}
Through the commutativity axioms, we can bring $u$ into the form
\[ \nu \vec z.A \vec x =:= B\vec z + \vec c \]
for general $A$. Let $S$ be an invertible matrix that turns $A$ into reduced row echelon form, and apply it to the condition via \ref{prop:elementary}. The zero columns don't involve $\vec x$, so we use \ref{prop:normalization} to evaluate the condition involving $\vec z$ separately. We either obtain $\bot$ or the desired form \eqref{eq:normalform}

For uniqueness, we consider the term's denotation $(A\vec x \eq \eta) : n \obsto 0$ in $\cond(\gauss)$, where $\eta = \mathcal N(\vec c,BB^T)$. We must show that $A$ and $\eta$ can be reconstructed from the observational behavior of the denotation. The proof given in the appendix \ref{prop:uniqueness_proof}.
\end{proof}

\section{Context, related work, and outlook}
\subsection{Symbolic disintegration and paradoxes}
\label{sec:paradoxes}
Our line of work can be regarded as a synthetic and axiomatic counterpart of the symbolic disintegration of Shan and Ramsey~\cite{shan:symbolic-disintegration}. (See also~\cite{psi,delayed_sampling,shan:disint,shan}.) 
That work provides in particular verified program transformations to convert an arbitrary probabilistic program of type $\realtype\otimes \tau$ 
to an equivalent one that is of the form $\letin x {\mathrm{lebesgue}()} {\letin y M {(x,y)}}$. 
So now exact conditioning $x\eq o$ can be carried out by substituting $o$ for $x$ in $M$. 
We emphasize the similarity with the definition of conditionals in Markov categories, as well as the role that coordinate transformations play in both our work (\S\ref{sec:equational_theory}) and \cite{shan:symbolic-disintegration}. One language novelty in our work is that exact conditioning is a first-class construct, as opposed to a whole-program transformation, in our language, which makes the consistency of exact conditioning  more apparent.

Consistency is a fundamental concern for exact conditioning. \emph{Borel's paradox} is an example of an inconsistency that arises if one is careless with exact conditioning \cite[Ch.~15]{jaynes}, \cite[\S3.3]{jacobs:paradoxes}: It arises when naively substituting equivalent equations within $(\eq)$. For example, the equation $x - y = 0$ is equivalent to $x/y = 1$ over the (nonzero) real numbers. Yet, in an extension of our language with division, the following programs are not contextually equivalent:
\begin{equation*}
\begin{array}{l}
\mlstinline{x = normal(0,1)} \\
\mlstinline{y = normal(0,1)} \\
\mlstinline{x-y =:= 0} \\
\end{array}
\not \equiv
\begin{array}{l}
\mlstinline{x = normal(0,1)} \\
\mlstinline{y = normal(0,1)} \\
\mlstinline{x/y =:= 1}
\end{array}
\end{equation*}
On the left, the resulting variable $x$ has distribution $\mathcal N(0,0.5)$ while on the right, $x$ can be shown to have density $|x|e^{-x^2}$ \cite{expect_the_unexpected, shan:symbolic-disintegration}. In our work, Borel's paradox finds a type-theoretic resolution: Conditioning is presented abstractly as an algebraic effect, so the expressions $(s \eq t) : \unit$ and $(s == t) : \mathrm{bool}$ have a different formal status and can no longer be confused. They are related explicitly through axioms like \eqref{ax:cond_subst}, and special laws for simplifying conditions are given in \eqref{ax:cong}, \ref{prop:equivalent_equations}. By \ref{prop:isomorphic_conditions}, we can always substitute conditions which are formally isomorphic, but $x-y \eq 0$ and $x/y \eq 1$ are not isomorphic conditions in this sense. For the special case of Gaussian probability, we proved that equivalent affine equations are automatically isomorphic, making it very easy to avoid Borel's paradox in this restricted setting (Prop.~\ref{prop:equivalent_equations}). To include the non-example above, our language needs a nonlinear operation like $(/)$. If beyond that we introduced equality testing to the language, difference between equations and conditions would become even more apparent. The \emph{equation} $x-y=0$ is obviously equivalent to the equation $(x == y) = \mathrm{true}$, but the \emph{condition} $(x == y) \eq \mathrm{true}$ would cause the whole program to fail, since measure-theoretically, $(x==y)$ is the same as $\mathrm{false}$. 

This also suggests a tradeoff between expressivity of the language and well-behavedness of conditioning. On this subject, Shan and Ramsey~\cite{shan:symbolic-disintegration} wrote:
\begin{quote}{\it The [measure-theoretic] definition of disintegration allows latitude that our disintegrator does not take: When we disintegrate $\xi = \Lambda \otimes \kappa$, the output $\kappa$ is unique only almost everywhere --- $\kappa x$ may return an arbitrary measure at, for example, any finite set of~$x$'s. But our disintegrator never invents an arbitrary measure at any point. The mathematical definition of disintegration is therefore a bit too loose to describe what our disintegrator actually does. How to describe our disintegrator by a tighter class of ``well-behaved disintegrations'' is a question for future research. In particular, the notion of continuous disintegrations~\cite{afr2016} is too tight, because depending on the input term, our disintegrator does not always return a continuous disintegration, even if one exists.}\end{quote}
In this paper we have tackled this research problem: a notion of ``well-behaved disintegrations'' is given by a Markov category with precise supports. The most comprehensive category $\borelstoch$ admits conditioning only on events of positive probability (\ref{prop:stoch_support}). The smaller category $\gauss$ features a better notion of support and an interesting theory of conditioning. Studying Markov categories of different degrees of specialization helps navigating the tradeoff. 
Once in the synthetic setting of a Markov category $\C$ with precise supports, the program transformations of \cite{shan:symbolic-disintegration} are all valid in $\cond(\C)$, and the Markov conditioning property (Def.~\ref{def:conditional}) exactly matches the correctness criterion for symbolic disintegration. 

\subsection{Other directions}
Once a foundation is in algebraic or categorical form, it is easy to make connections to and draw inspiration from a variety of other work. 

The $\obs$ construction (\S\ref{def:obs}) that we considered here is reminiscent of the lens construction~\cite{lenses} and the Oles construction~\cite{hermida-tennent}. These have recently been applied to probability theory \cite{smithe:bayesian} and quantum theory \cite{huotstatonqpl}. The details and intuitions are different, but a deeper connection or generalization may be profitable in the future. 

Algebraic presentations of probability theories and conjugate-prior relationships have been explored in \cite{staton:betabernoulli}. Furthermore, the concept of exact conditioning is reminiscent of unification in Prolog-style logic programming. Our presentation in Section~\ref{sec:equational_theory} is partly inspired by the algebraic presentation of predicate logic in~\cite{staton:pl}, which has a similar signature and axioms. One technical difference is that in logic programming, $\exists a.\ret a\ \equiv\ \exists a.\exists b.(a\eq b)\ret{a}$ holds whereas 
here we have $\nu a.\ret {a}\ \equiv \ \nu a.\nu b.(a\eq b)\ret {(1/\sqrt 2) a }$, so things are more quantitative here. By collapsing Gaussians to their supports (forgetting mean and covariance), we do in fact obtain a model of unification.

Logic programming is also closely related to relational programming, and we note that our presentation is reminiscent of presentations of categories of linear relations~\cite{baez:control,bsz,pbsz}. 

On the semantic side, we recall that presheaf categories have been used as a foundation for logic programming~\cite{kinoshita-power}. Our axiomatization can be regarded as the presentation of a monad on the category $[\A^{\mathrm{op}}, \mathbf{Set}]$, via~\cite{staton:pl}, where $\A$ is the category of finite dimensional affine spaces discussed in \S\ref{sec:equational_theory}.

\emph{Probabilistic logic programming}~\cite{problog} supports both logic variables as well as random variables within a common formalism. We have not considered logic variables in this article, but a challenge for future work is to bring the ideas of exact conditioning closer to the ideas of unification, both practically and in terms of the semantics. We wonder if it is possible to think of $\exists$ as an idealized ``flat'' prior.

\section*{Acknowledgment}
It has been helpful to discuss this work with many people, including Tobias Fritz, Tom\'{a}\v{s} Gonda, Mathieu Huot, Ohad Kammar and Paolo Perrone. Research supported by a Royal Society University Research Fellowship and the ERC BLAST grant.

\bibliographystyle{IEEEtranS}

\bibliography{main}

\begin{thebibliography}{10}
\providecommand{\url}[1]{#1}
\csname url@samestyle\endcsname
\providecommand{\newblock}{\relax}
\providecommand{\bibinfo}[2]{#2}
\providecommand{\BIBentrySTDinterwordspacing}{\spaceskip=0pt\relax}
\providecommand{\BIBentryALTinterwordstretchfactor}{4}
\providecommand{\BIBentryALTinterwordspacing}{\spaceskip=\fontdimen2\font plus
\BIBentryALTinterwordstretchfactor\fontdimen3\font minus
  \fontdimen4\font\relax}
\providecommand{\BIBforeignlanguage}[2]{{%
\expandafter\ifx\csname l@#1\endcsname\relax
\typeout{** WARNING: IEEEtranS.bst: No hyphenation pattern has been}%
\typeout{** loaded for the language `#1'. Using the pattern for}%
\typeout{** the default language instead.}%
\else
\language=\csname l@#1\endcsname
\fi
#2}}
\providecommand{\BIBdecl}{\relax}
\BIBdecl

\bibitem{afr2016}
N.~L. Ackerman, C.~E. Freer, and D.~M. Roy, ``On computability and
  disintegration,'' \emph{Math.~Struct.~Comput.~Sci.}, vol.~27, no.~8, 2016.

\bibitem{adams-tf}
R.~Adams, ``Lambda free logical frameworks,'' \emph{Ann.~Pure.~Appl.~Logic},
  2009, to appear.

\bibitem{baez:control}
J.~C. Baez and J.~Erbele, ``Categories in control,'' \emph{Theory
  Appl.~Categ.}, vol.~30, pp. 836--881, 2015.

\bibitem{barberdill}
A.~Barber, ``Dual intuitionistic linear logic,'' University of Edinburgh, Tech.
  Rep., 1996.

\bibitem{bogachev}
V.~I. Bogachev and I.~I. Malofeev, ``Kantorovich problems and conditional
  measures depending on a parameter,'' \emph{Journal of Mathematical Analysis
  and Applications}, 2020.

\bibitem{pbsz}
F.~Bonchi, R.~Piedeleu, P.~Sobocinski, and F.~Zanasi, ``Graphical affine
  algebra,'' in \emph{Proc.~LICS 2019}, 2019.

\bibitem{bsz}
F.~Bonchi, P.~Sobocinski, and F.~Zanasi, ``The calculus of signal flow diagrams
  {I}: linear relations on streams,'' \emph{Inform.~Comput.}, vol. 252, 2017.

\bibitem{stan}
B.~Carpenter, A.~Gelman, M.~D. Hoffman, D.~Lee, B.~Goodrich, M.~Betancourt,
  M.~Brubaker, J.~Guo, P.~Li, and A.~Riddell, ``Stan: A probabilistic
  programming language,'' \emph{Journal of statistical software}, vol.~76,
  no.~1, 2017.

\bibitem{cho-jacobs}
K.~Cho and B.~Jacobs, ``Disintegration and {B}ayesian inversion via string
  diagrams,'' \emph{Math. Struct. Comput. Sci.}, vol.~29, pp. 938--971, 2019.

\bibitem{lenses}
B.~Clarke, D.~Elkins, J.~Gibbons, F.~Loregian, B.~Milewski, E.~Pillmore, and
  M.~Roman, ``Profunctor optics, a categorical update,'' 2020,
  arxiv:2001.07488.

\bibitem{fritz}
T.~Fritz, ``A synthetic approach to {M}arkov kernels, conditional independence
  and theorems on sufficient statistics,'' \emph{Adv.~Math.}, vol. 370, 2020.

\bibitem{representable_markov}
\BIBentryALTinterwordspacing
T.~Fritz, T.~Gonda, P.~Perrone, and E.~F. Rischel, ``Representable markov
  categories and comparison of statistical experiments in categorical
  probability,'' 2020. [Online]. Available:
  \url{https://arxiv.org/abs/2010.07416}
\BIBentrySTDinterwordspacing

\bibitem{infinite_products}
\BIBentryALTinterwordspacing
T.~Fritz and E.~F. Rischel, ``Infinite products and zero-one laws in
  categorical probability,'' \emph{{Compositionality}}, vol.~2, Aug. 2020.
  [Online]. Available: \url{https://doi.org/10.32408/compositionality-2-3}
\BIBentrySTDinterwordspacing

\bibitem{fuhrmann}
C.~F\"uhrmann, ``Varieties of effects,'' in \emph{Proc.~FOSSACS 2002}, 2002.

\bibitem{psi}
T.~Gehr, S.~Misailovic, and M.~Vechev, ``P{S}{I}: Exact symbolic inference for
  probabilistic programs,'' in \emph{Proc.~CAV 2016}, 2016.

\bibitem{dippl}
N.~D. Goodman and A.~Stuhlm\"{u}ller, ``{The Design and Implementation of
  Probabilistic Programming Languages},'' \url{http://dippl.org}, 2014,
  accessed: 2020-10-15.

\bibitem{hermida-tennent}
C.~Hermida and R.~D. Tennent, ``Monoidal indeterminates and categories of
  possible worlds,'' \emph{Theoret.~Comput.~Sci.}, vol. 430, 2012.

\bibitem{huotstatonqpl}
M.~Huot and S.~Staton, ``Universal properties in quantum theory,'' in
  \emph{Proc.~QPL 2018}, 2018.

\bibitem{jacobs:conjugatepriors}
B.~Jacobs, ``A channel-based perspective on conjugate priors,''
  \emph{Mathematical Structures in Computer Science}, vol.~30, no.~1, p.
  44–61, 2020.

\bibitem{jacobs:paradoxes}
J.~Jacobs, ``Paradoxes of probabilistic programming,'' in \emph{Proc.~POPL
  2021}, 2021.

\bibitem{jaynes}
E.~T. Jaynes, \emph{Probability Theory: The Logic of Science}.\hskip 1em plus
  0.5em minus 0.4em\relax CUP, 2003.

\bibitem{kammarplotkin}
O.~Kammar and G.~D. Plotkin, ``Algebraic foundations for effect-dependent
  optimisations,'' in \emph{Proc.~POPL 2012}, 2012.

\bibitem{kinoshita-power}
Y.~Kinoshita and J.~Power, ``A fibrational semantics for logic programs,'' in
  \emph{Proc.~ELP 1996}, 1996.

\bibitem{lauritzen}
S.~Lauritzen and F.~Jensen, ``Stable local computation with conditional
  gaussian distributions,'' \emph{Statistics and Computing}, vol.~11, 11 1999.

\bibitem{piI}
R.~Milner, J.~Parrow, and D.~Walker, ``A calculus of mobile processes, {I},''
  \emph{Inform.~Comput.}, vol. 100, no.~1, 1992.

\bibitem{InferNET18}
\BIBentryALTinterwordspacing
T.~Minka, J.~Winn, J.~Guiver, Y.~Zaykov, D.~Fabian, and J.~Bronskill,
  ``{Infer.NET 0.3},'' 2018, {M}icrosoft Research Cambridge. [Online].
  Available: \url{http://dotnet.github.io/infer}
\BIBentrySTDinterwordspacing

\bibitem{delayed_sampling}
L.~Murray, D.~Lund\'{e}n, J.~Kudlicka, D.~Broman, and T.~Sch\"{o}n, ``Delayed
  sampling and automatic rao-blackwellization of probabilistic programs,'' in
  \emph{Proceedings of the Twenty-First International Conference on Artificial
  Intelligence and Statistics}, 2018, pp. 1037--1046.

\bibitem{narayanan2016probabilistic}
P.~Narayanan, J.~Carette, W.~Romano, C.~Shan, and R.~Zinkov, ``Probabilistic
  inference by program transformation in {H}akaru (system description),'' in
  \emph{Proc.~FLOPS 2016}, 2016, pp. 62--79.

\bibitem{shan:disint}
P.~Narayanan and C.-c. Shan, ``Applications of a disintegration
  transformation,'' in \emph{Workshop on program transformations for machine
  learning}, 2019.

\bibitem{pitts-stark-nu}
A.~Pitts and I.~Stark, ``Observable properties of higher order functions that
  dynamically create local names, or: What's new?'' in \emph{Proc.~MFCS 1993},
  1993.

\bibitem{power:premonoidal}
J.~Power and E.~Robinson, ``Premonoidal categories and notions of
  computation,'' \emph{Math. Struct. Comput. Sci.}, vol.~7, pp. 453--468, 1997.

\bibitem{expect_the_unexpected}
M.~A. Proschan and B.~Presnell, ``Expect the unexpected from conditional
  expectation,'' \emph{The American Statistician}, vol.~52, no.~3, 1998.

\bibitem{problog}
L.~D. Raedt and A.~Kimmig, ``Probabilistic (logic) programming concepts,''
  \emph{Mach.~Learn.}, vol. 100, 2015.

\bibitem{shan:symbolic-disintegration}
C.-c. Shan and N.~Ramsey, ``Exact {B}ayesian inference by symbolic
  disintegration,'' in \emph{Proc.~POPL 2017}, 2017, pp. 130--144.

\bibitem{smithe:bayesian}
\BIBentryALTinterwordspacing
T.~S.~C. Smithe, ``Bayesian updates compose optically,'' 2020. [Online].
  Available: \url{https://arxiv.org/abs/2006.01631}
\BIBentrySTDinterwordspacing

\bibitem{staton:betabernoulli}
S.~Staton, D.~Stein, H.~Yang, L.~Ackerman, C.~E. Freer, and D.~M. Roy, ``The
  {B}eta-{B}ernoulli process and algebraic effects,'' 2018.

\bibitem{staton:pl}
S.~Staton, ``An algebraic presentation of predicate logic,'' in
  \emph{Proc.~FOSSACS 2013}, 2013, pp. 401--417.

\bibitem{staton:sfinite}
------, ``Commutative semantics for probabilistic programming,'' in
  \emph{Proc.~ESOP 2017}, 2017.

\bibitem{statonlevy}
S.~Staton and P.~B. Levy, ``Universal properties of impure programming
  languages,'' in \emph{Proc.~POPL 2013}, 2013.

\bibitem{shan}
R.~Walia, P.~Narayanan, J.~Carette, S.~Tobin-Hochstadt, and C.-c. Shan, ``From
  high-level inference algorithms to efficient code,'' in \emph{Proc.~ICFP
  2019}, 2019.

\end{thebibliography}

\section*{Appendix}

\subsection{Markov categories}

Here, we spell out some details for the notions of $\ll$ and precise supports.

\begin{proof}[Proof of Proposition \ref{prop:gaussian_supports}]
$\gauss$ faithfully embeds into $\borelstoch$, that is any Gaussian morphism $m \to n$ can be seen as a measurable map $\R^m \to \mathcal G(\R^n)$, where $\mathcal G$ denotes the Giry monad. By \cite[13.3]{fritz}, we have $f=_\mu g$ if $f(x)=g(x)$ in $\mathcal G(\R^n)$ for $\mu$-almost all $x$. If $\mu$ is a Gaussian distribution with support $S$, then $\mu$ is equivalent to the Lebesgue measure on $S$. Because $f,g$ are continuous as maps into $\mathcal G(\R^m)$, $f(x) = g(x)$ on almost all $x \in S$ implies $f=g$ everywhere on $S$. 

For the second part, let $S_\mu,S_\nu$ denote the supports of $\mu$ and $\nu$, and let $x \in S_\mu \setminus S_\nu$. Then we can find two affine functions $f,g$ which agree on $S_\nu$ but $f(x) \neq g(x)$. Then $f=_\nu g$ but not $f=_\mu g$, hence $\mu \not\ll \nu$.
\end{proof}

\begin{proposition}
\label{prop:stoch_support}
In $\finstoch$, we have $x \ll \mu$ if $\mu(x) > 0$. In $\borelstoch$, we have $x \ll \mu$ if $\mu(\{x\}) > 0$.
\end{proposition}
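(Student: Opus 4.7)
The plan is to unpack the definition of $\ll$ and reduce it to a concrete characterization of $\mu$-almost sure equality in each category. Recall that $x \ll \mu$ means: for any parallel pair $f, g$, if $f =_\mu g$ then $fx = gx$. So I need to describe $=_\mu$ concretely in $\finstoch$ and $\borelstoch$ and then read off when a deterministic state $x : I \to X$ is forced to preserve $=_\mu$.

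For $\finstoch$, I would first observe that $f =_\mu g$ amounts to saying the stochastic matrices $f$ and $g$ agree on every column $y$ with $\mu(y) > 0$. This is immediate from the joint distribution $\langle \id, f\rangle \mu$ assigning mass $\mu(y)\,f(z\,|\,y)$ to $(y,z)$, so the equation pins down $f({-}\,|\,y) = g({-}\,|\,y)$ exactly on $\mathrm{supp}(\mu)$. A deterministic state $x : I \to X$ is just a point. Then if $\mu(x) > 0$ and $f =_\mu g$, we have $x \in \mathrm{supp}(\mu)$, so $fx = gx$; this yields the forward direction. For $\borelstoch$, by \cite[13.3]{fritz} $f =_\mu g$ is the usual notion of $\mu$-almost everywhere equality of Markov kernels, i.e.\ the set $N = \{y : f(y) \neq g(y) \text{ in } \mathcal G(Y)\}$ is $\mu$-null. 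If $\mu(\{x\}) > 0$ then $x$ cannot lie in $N$ (else $\mu(N) \geq \mu(\{x\}) > 0$), so $fx = gx$, giving $x \ll \mu$.

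Should the statement be read as an iff (as the parallel with Proposition~\ref{prop:gaussian_supports} suggests), I would additionally exhibit a counterexample when $\mu(\{x\}) = 0$. Pick $Y$ with two distinct points $a \neq b$, let $f$ be the constant kernel at $\delta_a$, and let $g$ agree with $f$ off $\{x\}$ and satisfy $g(x) = \delta_b$. Then $fx \neq gx$ while $f =_\mu g$, because $f$ and $g$ differ only on $\{x\}$, which is $\mu$-null. This witnesses $x \not\ll \mu$.

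The only technical wrinkle is in $\borelstoch$: I need $g$ to be a genuine Markov kernel. This works because singletons are measurable in any standard Borel space, so $g$ is the piecewise combination of two measurable functions on the measurable partition $\{X \setminus \{x\}, \{x\}\}$ and hence itself measurable $X \to \mathcal G(Y)$. That is the only mild subtlety; otherwise the argument is a direct unwinding of the definitions of $=_\mu$ and $\ll$.
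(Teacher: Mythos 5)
Your proof is correct and follows essentially the same route as the paper's: unwind $=_\mu$ to pointwise agreement on the support (citing \cite[13.2,13.3]{fritz} in the finite and Borel cases respectively), and for the converse exhibit two kernels differing only on the $\mu$-null singleton $\{x\}$ — the paper uses the deterministic kernels induced by $\mathbb I_{\{x_0\}}$ and $0$, which is the same construction as your $\delta_a$/$\delta_b$ modification. Your remark on measurability of the piecewise-defined kernel is a reasonable extra check but not a departure from the paper's argument.
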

This gives the correct intuition of support for $\finstoch$. For $\borelstoch$, this is an overly rigid notion of support which may contradict our intuition. For example the standard-normal distribution $\mathcal N(0,1)$ has support $\R$ in $\gauss$, but $\emptyset$ in $\borelstoch$.
\begin{proof}
The arguments follow from \cite[13.2,13.3]{fritz}. For $\borelstoch$, let $\mu(\{x_0\})=0$ and consider the measurable functions $f(x)=\mathbb I_{\{x_0\}}(x), g(x)=0$. Then $f=_\mu g$, yet $f(x_0) \neq g(x_0)$ showing $x_0 \not\ll \mu$.
\end{proof}

\begin{proof}[Proof of Proposition \ref{prop:have_precise_supports}]
For Gauss, this follows from the characterization of $\ll$ in \ref{prop:gaussian_supports}. Let $\mu$ have support $S$ and $f(x)=Ax+\mathcal N(b,\Sigma)$. Let $T$ be the support of $\mathcal N(b,\Sigma)$. The support of $\langle \id, f \rangle\mu$ is the image space $\{ (x,Ax+c) : x \in S, c \in T \}$. Hence $(x,y) \ll \langle \id, f \rangle\mu$ iff $x \ll \mu$ and $y \ll fx$. 
	
For $\finstoch$, an outcome $(x,y)$ has positive probability under $\langle \id, f \rangle\mu$ iff $x$ has positive probability under $\mu$, and $y$ has positive probability under $f(-|x)$.
	
For $\borelstoch$, the measure $\psi = \langle \id, f \rangle \mu$ is given by
	\[ \psi(A \times B) = \int_{x \in A} f(B|x) \mu(\mathrm dx) \]
Hence $\psi(\{(x_0,y_0)\}) = f(\{y_0\}|x)\mu(\{x\})$, which is positive exactly if $\mu(\{x_0\}) > 0$ and $f(\{y_0\}|x)>0$.
\end{proof}

A note on the definition of `precise supports': The expression $\langle \id,f \rangle\mu$ is an analogue of the graph of $f$. We wonder about its single-valuedness. If $x \otimes y \ll \langle \id,f \rangle\mu$ then we always have $x \ll \mu$ and $y \ll f\mu$. We ask that $y$ doesn't lie in the pushforward of any old sample of $\mu$, but precisely of $fx$. 
This is certainly a natural property to demand, but also very specifically tailored towards the application in \ref{thm:functoriality}. We expect `precise supports' to arise as an instance of some more encompassing axiom. \\

\subsection{Linear algebra}

The following facts from linear algebra are useful to recall and get used throughout.

\begin{proposition}
\label{prop:basechange}
Let $A \in \R^{m \times n}$, then there are invertible matrices $S,T$ such that
\[
SAT^{-1} = \begin{pmatrix} I_r & 0 \\ 0 & 0 \end{pmatrix}
\]
where $r = \mathrm{rank}(A)$. Furthermore, $T$ can taken to be orthogonal.
\end{proposition}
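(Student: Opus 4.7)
\smallskip
\noindent\textbf{Proof proposal.} The plan is to pick $T$ first via an orthonormal adapted basis of the domain, and then choose $S$ by elementary column-to-basis-completion on what remains. Concretely, let $r = \mathrm{rank}(A)$ and choose an orthonormal basis $v_1,\dots,v_n$ of $\R^n$ such that $v_{r+1},\dots,v_n$ is an orthonormal basis of $\ker(A)$ (this exists by Gram-Schmidt applied inside $\ker(A)$ and extended to an orthonormal basis of $\R^n$). Let $T$ be the orthogonal matrix with $T^{-1}e_i = v_i$, i.e.\ whose columns are the $v_i$. Then the $i$-th column of $AT^{-1}$ is $Av_i$, so $AT^{-1}$ has the form $(B \mid 0)$ where $B \in \R^{m \times r}$ has columns $Av_1,\dots,Av_r$ which span $\mathrm{im}(A)$ and are in particular linearly independent.

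Next, to produce $S$, extend the columns of $B$ to a basis of $\R^m$: pick $b_{r+1},\dots,b_m$ such that together with $Av_1,\dots,Av_r$ they form a basis of $\R^m$, and let $P$ be the $m \times m$ invertible matrix whose columns are this basis (in this order). Then $P^{-1}(Av_i) = e_i$ for $i \le r$, and I set $S = P^{-1}$. A direct computation then gives
\[
SAT^{-1} \;=\; P^{-1}(B \mid 0) \;=\; \begin{pmatrix} I_r & 0 \\ 0 & 0 \end{pmatrix}
\]
as required.

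The main (essentially the only) subtlety is making sure that $T$ can genuinely be chosen orthogonal — this is where we use the fact that $\R^n$ has an inner product and $\ker(A)$ is a subspace that admits an orthonormal basis, which is then extended orthogonally. The remaining construction of $S$ requires no orthogonality, only invertibility, so ordinary basis completion suffices. Alternatively one can package the entire argument via the singular value decomposition $A = U\Sigma V^T$ with $U,V$ orthogonal: take $T = V^T$ and $S = DU^T$, where $D$ is the diagonal matrix inverting the nonzero singular values of $\Sigma$ and leaving the zero block untouched; then $SAT^{-1} = D\Sigma$ is exactly the block form stated. Either route avoids any hard step.
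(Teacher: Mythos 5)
Your proof is correct. Your primary argument (an orthonormal basis of $\R^n$ adapted to $\ker(A)$ for $T$, followed by an ordinary basis completion of $\mathrm{im}(A)$ in $\R^m$ for $S$) is a more elementary route than the paper's: the paper simply takes a singular value decomposition $A = UDV^T$, sets $T=V$ and builds $S$ from $U^T$ by rescaling the columns corresponding to nonzero singular values --- which is exactly the alternative you sketch in your last few lines (up to the cosmetic question of whether one writes $T=V$ or $T=V^T$, which only shuffles where the inverse-transpose lands). Your hands-on version buys independence from the SVD existence theorem and makes transparent precisely where orthogonality of $T$ is needed (only in the choice of domain basis, since $S$ never has to be orthogonal); the SVD version buys brevity and reuses a decomposition the paper already invokes elsewhere (Prop.~\ref{prop:orth_equivalence}). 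Both are fine; one small point worth stating explicitly in your first route is that $Av_1,\dots,Av_r$ are linearly independent \emph{because} they span $\mathrm{im}(A)$, which has dimension $r$ --- you do say this, so nothing is missing.
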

\begin{proof}
Take a singular-value decomposition (SVD) $A=UDV^T$, let $T=V$ and use create $S$ from $U^T$ by rescaling the appropriate columns.
\end{proof}

\begin{proposition}[Row equivalence]
\label{prop:row_equivalence}
Two matrices $A,B \in \R^{m \times n}$ are called \emph{row equivalent} if the following equivalent conditions hold
\begin{enumerate}
\item for all $x \in \R^n$, $Ax = 0 \Leftrightarrow Bx = 0$
\item $A$ and $B$ have the same row space
\item there is an invertible matrix $S$ such that $A=SB$
\end{enumerate}
Unique representatives of row equivalence classes are matrices in \emph{reduced row echelon form}.
\end{proposition}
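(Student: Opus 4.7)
The plan is to prove the three conditions equivalent by a cyclic chain of implications, and then establish the uniqueness of the reduced row echelon form (RREF) as a separate lemma that doubles as the final sentence of the proposition.

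First I would dispatch the two easy directions. The implication $(3) \Rightarrow (1)$ is immediate: if $A = SB$ with $S$ invertible, then $Ax = SBx$ vanishes iff $Bx$ does, since $S$ is injective. For $(1) \Rightarrow (2)$, I would use the standard identity $\mathrm{null}(A) = \mathrm{rowspace}(A)^\perp$ (which follows directly from $Ax = 0$ meaning $x$ is orthogonal to every row of $A$). Equal null spaces then force equal orthogonal complements, and applying $\perp$ once more recovers equal row spaces, using only finite-dimensionality.

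The substantive step is $(2) \Rightarrow (3)$. A naive construction gives matrices $C, D$ with $A = CB$ and $B = DA$, but $C$ need not be invertible when $m$ exceeds the rank. My route is to factor each side through a common RREF. Elementary row operations over $\mathbb{R}$ are invertible, so one can find invertible $P, Q$ with $PA = R_A$ and $QB = R_B$ in RREF. Since left-multiplication by an invertible matrix preserves the row space, $R_A$ and $R_B$ share the row space of $A = B$'s row space, and the uniqueness of the RREF (below) yields $R_A = R_B$. Then $A = P^{-1}Q B$, and $S := P^{-1}Q$ is invertible.

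For the uniqueness claim, which must be proved \emph{before} closing the cycle to avoid circularity, I would argue directly from the row space $V \subseteq \mathbb{R}^n$. The pivot column indices of any RREF with row space $V$ are forced: column $j$ is a pivot exactly when $\dim\bigl(\pi_{\le j}(V)\bigr) > \dim\bigl(\pi_{\le j-1}(V)\bigr)$, where $\pi_{\le j}$ is projection onto the first $j$ coordinates; this depends only on $V$. Once the pivot positions and pivot-row shape are fixed by the RREF conditions, each non-pivot column entry is determined by writing that column as the unique linear combination of earlier pivot columns lying in $V$. Hence the RREF depends only on $V$, giving both uniqueness and a canonical representative for each row-equivalence class.

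The main obstacle is the interleaving of $(2) \Rightarrow (3)$ with the uniqueness of the RREF: the clean proof of $(2) \Rightarrow (3)$ uses uniqueness, so the pivot-position/pivot-entry argument sketched above must be carried out independently of the equivalence of conditions (1)--(3). Everything else is bookkeeping.
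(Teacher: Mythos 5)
The paper states Proposition~\ref{prop:row_equivalence} without proof: it appears in the appendix under the heading of linear-algebra facts that are merely ``recalled'' for later use, so there is no in-paper argument to compare against. Your proof is therefore supplying something the paper deliberately omits, and it is essentially correct. The cycle $(3)\Rightarrow(1)\Rightarrow(2)\Rightarrow(3)$ is complete; the $\mathrm{null}(A)=\mathrm{rowspace}(A)^{\perp}$ step and the double-$\perp$ in finite dimension are fine; and routing $(2)\Rightarrow(3)$ through a common RREF correctly avoids the trap you identify (the naive $A=CB$, $B=DA$ with possibly singular $C$), including the rank-deficient case, since both RREFs are $m\times n$ with the same $k$ nonzero rows followed by zero rows. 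Your pivot-column criterion $\dim\pi_{\le j}(V)>\dim\pi_{\le j-1}(V)$ is also correct: for rows $r_1,\dots,r_k$ with pivots $j_1<\dots<j_k$ one has $\dim\pi_{\le j}(V)=\#\{i: j_i\le j\}$. The one step I would tighten is the last sentence of the uniqueness argument: ``writing that column as the unique linear combination of earlier pivot columns lying in $V$'' does not quite parse, since the columns of the RREF live in $\R^m$, not in the row space $V\subseteq\R^n$. The clean finish is to observe that the evaluation map $V\to\R^k$, $v\mapsto(v_{j_1},\dots,v_{j_k})$, is a bijection (an element of $V$ vanishing on all pivot coordinates is zero), so the $i$-th nonzero row of the RREF is the unique $v\in V$ with $v_{j_i}=1$ and $v_{j_{i'}}=0$ for $i'\ne i$; equivalently, the non-pivot entries $R_{ij}$ are the unique coefficients with $v_j=\sum_i R_{ij}v_{j_i}$ for all $v\in V$. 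With that one-line repair the argument is complete, and your care in proving RREF uniqueness before invoking it in $(2)\Rightarrow(3)$ correctly removes any circularity.
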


\begin{corollary}
\label{prop:affine_equivalence}
Let $A,B \in \R^{m \times n}$ and let $Ax = c$ and $Bx = d$ be \emph{consistent} systems of linear equations that have the same solution space. There is an invertible matrix $S$ such that $B=SA$ and $d=Sc$.
\end{corollary}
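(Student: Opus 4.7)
The plan is to reduce the affine statement to the linear (row-equivalence) statement already recorded as Proposition~\ref{prop:row_equivalence}, and then to pin down the constant vector by evaluating at a particular solution.

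First I would observe that since both systems are consistent and share the same solution set, we may pick any particular solution $x_0$ with $Ax_0 = c$ and $Bx_0 = d$. The common solution set is then the affine subspace $x_0 + \ker(A) = x_0 + \ker(B)$, which forces $\ker(A) = \ker(B)$: indeed, $v \in \ker(A)$ iff $x_0 + v$ solves $Ax = c$ iff $x_0 + v$ solves $Bx = d$ iff $v \in \ker(B)$.

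Next I would invoke the equivalence of conditions (1) and (3) in Proposition~\ref{prop:row_equivalence}: since $Av = 0 \iff Bv = 0$ for all $v \in \R^n$, the matrices $A$ and $B$ are row equivalent, so there exists an invertible $S \in \R^{m \times m}$ with $B = SA$. It remains to check that the same $S$ also sends $c$ to $d$. But this is immediate by evaluating at $x_0$:
\[
d \;=\; B x_0 \;=\; S A x_0 \;=\; S c.
\]

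There is essentially no obstacle here: the only subtlety is the use of consistency, which is exactly what guarantees the existence of the particular solution $x_0$ needed to transfer the linear identity $B = SA$ to the affine identity $d = Sc$. (Without consistency the claim would fail, e.g. $0 \cdot x = 1$ has the same empty solution set as $0 \cdot x = 2$, but no scalar $S$ sends $1$ to $2$ while fixing $0$.)
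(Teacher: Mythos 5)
Your proof is correct and is exactly the derivation the paper intends: the corollary is stated without proof precisely because it follows from Proposition~\ref{prop:row_equivalence} via a particular solution $x_0$, as you do. One small quibble with your parenthetical aside: the example $0\cdot x=1$ versus $0\cdot x=2$ does not actually witness failure, since $S=2$ is invertible and satisfies both $0=S\cdot 0$ and $2=S\cdot 1$; a genuine counterexample without consistency needs matrices with different row spaces but both empty solution sets, e.g.\ $A=\left(\begin{smallmatrix}1&0\\1&0\end{smallmatrix}\right)$, $c=\left(\begin{smallmatrix}0\\1\end{smallmatrix}\right)$ and $B=\left(\begin{smallmatrix}0&1\\0&1\end{smallmatrix}\right)$, $d=\left(\begin{smallmatrix}0\\1\end{smallmatrix}\right)$.
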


\begin{proposition}[Column equivalence]
\label{prop:column_equivalence}
For matrices $A,B \in \R^{m \times n}$, the following are equivalent
\begin{enumerate}
\item $A$ and $B$ have the same column space
\item there is an invertible matrix $T$ such that $A=BT$.
\end{enumerate}
\end{proposition}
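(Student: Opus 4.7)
The plan is to prove this column-equivalence statement by dualizing the row equivalence proposition~\ref{prop:row_equivalence} that has already been recorded in the appendix. Concretely, I will observe that the column space of a matrix $M$ is the same as the row space of $M^T$, so condition (1) for $A,B$ translates into the statement that $A^T$ and $B^T$ have the same row space; and condition (2), namely $A = BT$ with $T$ invertible, translates into $A^T = T^T B^T$, i.e., the existence of an invertible $S := T^T$ with $A^T = SB^T$, which is exactly the row-equivalence condition for $A^T$ and $B^T$.

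First I would dispense with the easy direction (2)$\Rightarrow$(1) directly rather than through the dual: if $A = BT$ then every column of $A$ is a linear combination of columns of $B$, so $\col(A)\subseteq \col(B)$; and $B = AT^{-1}$ gives the reverse inclusion. This makes the argument self-contained even if the reader wants to avoid the detour through transposes for the forward direction.

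For (1)$\Rightarrow$(2), I would state the dictionary $\col(M) = \text{rowspace}(M^T)$, apply Proposition~\ref{prop:row_equivalence} to the pair $(A^T, B^T)$ to obtain an invertible $S$ with $A^T = SB^T$, and then transpose to obtain $A = BS^T$; setting $T = S^T$ (which is invertible because $S$ is) finishes the proof.

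There is essentially no obstacle here: the work has been done in Proposition~\ref{prop:row_equivalence}, and the only thing to check is that transposition interchanges row and column spaces and preserves invertibility, both of which are standard. The only minor care needed is to note that the row-equivalence proposition is stated for matrices of shape $m\times n$, and after transposing $A,B\in\R^{m\times n}$ we are applying it to $n\times m$ matrices, which still falls within its scope, and the resulting invertible matrix $S$ has size $n\times n$, which is exactly the size we need for $T = S^T$ to act on the right of $B$.
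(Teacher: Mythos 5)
Your proposal is correct. The paper records Proposition~\ref{prop:column_equivalence} as a standard linear-algebra fact and gives no proof at all, so there is no argument to compare against; your derivation supplies one. The dualization is sound: $\col(M)=\mathrm{rowspace}(M^T)$, so condition (1) for $A,B$ is row equivalence of $A^T,B^T$, and Proposition~\ref{prop:row_equivalence} yields an invertible $S$ with $A^T=SB^T$, whence $A=BS^T$ with $T=S^T$ invertible of size $n\times n$, exactly as needed; the direct verification of (2)$\Rightarrow$(1) via the two inclusions $\col(A)\subseteq\col(B)$ and $\col(B)\subseteq\col(A)$ is likewise fine. Your dimension bookkeeping (applying the row-equivalence statement to $n\times m$ matrices) is the only point where care is needed, and you have addressed it.
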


\begin{proposition}
\label{prop:orth_equivalence}
For matrices $A,B \in \R^{m \times n}$, the following are equivalent
\begin{enumerate}
\item $AA^T = BB^T$
\item there is an orthogonal matrix $U$ such that $A=BU$.
\end{enumerate}
\end{proposition}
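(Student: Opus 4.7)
The easy direction is (2)$\Rightarrow$(1): if $A = BU$ with $U U^T = I$, then $AA^T = B U U^T B^T = BB^T$, a one-line calculation.

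For the hard direction (1)$\Rightarrow$(2), my plan is to construct $U$ from an isometry between the row spaces. Assume $AA^T = BB^T$. For every $x \in \R^m$ we have
\[
\|A^T x\|^2 = x^T A A^T x = x^T B B^T x = \|B^T x\|^2,
\]
so the assignment $\phi(B^T x) := A^T x$ defines a map on $\mathrm{range}(B^T) \subseteq \R^n$. I would first check well-definedness: if $B^T x = B^T x'$ then $B^T(x-x') = 0$, hence $\|A^T(x-x')\|^2 = \|B^T(x-x')\|^2 = 0$ and thus $A^T x = A^T x'$. The same identity shows $\phi$ is linear and an isometry onto $\mathrm{range}(A^T)$.

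Next I would extend $\phi$ to a full orthogonal endomorphism of $\R^n$. Since $AA^T = BB^T$ gives $\mathrm{rank}(A) = \mathrm{rank}(B)$, the subspaces $\mathrm{range}(A^T)$ and $\mathrm{range}(B^T)$ have equal dimension, so their orthogonal complements do too; pick any isometry between these complements and take the direct sum with $\phi$. Call the resulting orthogonal map $\Phi : \R^n \to \R^n$, and set $U := \Phi^T$. Then $U^T B^T x = \Phi(B^T x) = A^T x$ for all $x \in \R^m$, i.e.\ $U^T B^T = A^T$, so $A = BU$ as required.

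The main (mild) obstacle is the well-definedness and isometry check for $\phi$, which hinges squarely on the hypothesis $AA^T = BB^T$; everything after that (matching ranks, extending an isometry of subspaces to an orthogonal map of the whole space) is standard linear algebra. If preferred, one could alternatively deduce (1)$\Rightarrow$(2) from simultaneous SVDs of $A$ and $B$, using that $AA^T$ and $BB^T$ share eigenspaces and singular values, but the row-space-isometry argument is cleaner and avoids bookkeeping around multiplicities of singular values.
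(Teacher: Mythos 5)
Your proof is correct, and it takes a genuinely different route from the paper's. The paper argues via the singular value decomposition: since $AA^T = UDD^TU^T$, the left factor $U$ and the diagonal $D$ of an SVD can be chosen from $AA^T$ alone, so the hypothesis $AA^T = BB^T$ yields decompositions $A = UDV^T$ and $B = UDW^T$ with a shared $U$ and $D$, whence $A = B(WV^T)$. You instead build the orthogonal matrix directly: the identity $\|A^Tx\| = \|B^Tx\|$ gives a well-defined linear isometry $\mathrm{range}(B^T) \to \mathrm{range}(A^T)$, which you extend to an orthogonal endomorphism of $\R^n$ using that the two row spaces have equal dimension (via $\mathrm{rank}(A) = \mathrm{rank}(AA^T)$). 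Your argument is more self-contained and sidesteps the one slightly delicate point in the SVD route, namely justifying that $U$ and $D$ can be chosen simultaneously for $A$ and $B$ despite the non-uniqueness of eigenvectors for repeated singular values -- exactly the ``bookkeeping around multiplicities'' you flag. The paper's version is shorter if one takes the SVD and that choice as given. Both are complete proofs; the only steps you defer (polarization to upgrade norm-preservation to an isometry, and the fact that the orthogonal direct sum of two isometries of complementary subspaces is orthogonal) are standard and unproblematic.
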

\begin{proof}
This is a known fact, but we sketch a proof for lack of reference. In the construction of the SVD $A=UDV^T$, we can choose $U$ and $D$ depending on $AA^T$ alone. It follows that the same matrices work for $B$, giving SVDs $A=UDV^T, B=UDW^T$. Then $A=B(WV^T)$ as claimed.
\end{proof}

\subsection{Normal forms}

We present a proof of the uniqueness of normal forms for conditioning morphisms. Some preliminary facts:

\begin{proposition}
\label{prop:gauss_observe}
Let $X \sim \mathcal N(\mu_X,\Sigma_X)$ and $Y \sim \mathcal N(\mu_Y,\Sigma_Y)$ be independent. Then $X|(X=Y)$ has distribution $\mathcal N(\bar \mu,\bar \Sigma)$ given by
\begin{align*}
\bar \mu &= \mu_X + \Sigma_X(\Sigma_X+\Sigma_Y)^+(\mu_Y - \mu_X) \\
\bar \Sigma &= \Sigma_X - \Sigma_X(\Sigma_X+\Sigma_Y)^+\Sigma_X
\end{align*}
\end{proposition}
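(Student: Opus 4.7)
The plan is to reduce this to a direct application of the Gaussian conditioning formula \eqref{eq:conjugacy_formula} by considering the joint distribution of $X$ together with the difference $Z = X - Y$.

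First I would form the joint random vector $(X, Z)$ where $Z := X - Y$. Since $X$ and $Y$ are independent and jointly Gaussian, $(X, Z)$ is jointly Gaussian as an affine image of $(X, Y)$. Its mean is $(\mu_X, \mu_X - \mu_Y)$ and its covariance block structure is
\begin{equation*}
\begin{pmatrix} \Sigma_{11} & \Sigma_{12} \\ \Sigma_{21} & \Sigma_{22} \end{pmatrix}
=
\begin{pmatrix} \Sigma_X & \Sigma_X \\ \Sigma_X & \Sigma_X + \Sigma_Y \end{pmatrix},
\end{equation*}
using $\mathrm{Cov}(X,Z) = \mathrm{Cov}(X,X) - \mathrm{Cov}(X,Y) = \Sigma_X$ by independence, and $\mathrm{Var}(Z) = \Sigma_X + \Sigma_Y$.

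Next I would observe that the event $\{X = Y\}$ coincides with $\{Z = 0\}$, so $X \mid (X = Y)$ is the same as $X \mid (Z = 0)$. Plugging the blocks above and $a = 0$ into formula \eqref{eq:conjugacy_formula} gives
\begin{align*}
\bar\mu &= \mu_X + \Sigma_X (\Sigma_X + \Sigma_Y)^+ (0 - (\mu_X - \mu_Y)), \\
\bar\Sigma &= \Sigma_X - \Sigma_X (\Sigma_X + \Sigma_Y)^+ \Sigma_X,
\end{align*}
which simplifies to the stated expressions.

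There is no real obstacle; the only subtlety is justifying that formula \eqref{eq:conjugacy_formula} applies even when $\Sigma_X + \Sigma_Y$ is singular, which is why the Moore--Penrose pseudoinverse $(\Sigma_X + \Sigma_Y)^+$ appears. In the singular case one should note (as in \S\ref{sec:recap_gauss}) that the conditional is only defined when $0$ lies in the support of $Z$, i.e.\ $\mu_Y - \mu_X \in \mathrm{col}(\Sigma_X + \Sigma_Y)$; under this condition the formula above is the unique Gaussian conditional, and when the support condition fails the conditioning is undefined, consistent with the $\ll$-formulation used throughout the paper.
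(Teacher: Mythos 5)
Your proof is correct and is exactly the intended derivation: the paper states this proposition without proof, but its operational rule for $(v \eq w,\psi)$ is defined precisely by forming the joint with $Z = v - w$ and conditioning on $Z=0$ via \eqref{eq:conjugacy_formula}, which is what you do. Your closing remark on the singular case and the support condition $\mu_Y - \mu_X \in \mathrm{col}(\Sigma_X+\Sigma_Y)$ is also consistent with the paper's $\ll$-based treatment of when conditioning succeeds.
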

In programming terms, this is written
\[ \letin x {N(\mu_X,\Sigma_X)} (x \eq N(\mu_Y,\Sigma_Y)); \return x \]
and corresponds to the \lstinline|observe| statement from the introduction
\[ \mlstinline{x = normal($\mu_X$,$\Sigma_X$); observe(normal($\mu_Y$,$\Sigma_Y$), x)} \]

\begin{corollary}
\label{prop:gauss_1d}
No 1-dimensional observe statement leaves the prior $\mathcal N(0,1)$ unchanged.
\end{corollary}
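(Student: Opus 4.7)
The plan is to apply the explicit formula from Proposition \ref{prop:gauss_observe} directly, specializing to the one-dimensional case with prior $X \sim \mathcal{N}(0,1)$, and show that the posterior variance is strictly less than $1$ regardless of what $Y \sim \mathcal{N}(\mu_Y, \sigma_Y^2)$ we observe against. Since the posterior then cannot equal $\mathcal{N}(0,1)$, the prior is never preserved.

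First I would fix $\mu_X = 0$, $\Sigma_X = 1$ and arbitrary $\mu_Y \in \mathbb{R}$, $\sigma_Y^2 \geq 0$. The posterior variance formula reduces to
\[ \bar{\Sigma} = 1 - (1 + \sigma_Y^2)^+. \]
Since $1 + \sigma_Y^2 \geq 1 > 0$, the pseudoinverse is just the ordinary reciprocal, giving $\bar{\Sigma} = 1 - 1/(1 + \sigma_Y^2) = \sigma_Y^2/(1 + \sigma_Y^2)$. I would then split into two cases: if $\sigma_Y^2 > 0$, then $\bar{\Sigma} = \sigma_Y^2/(1+\sigma_Y^2) < 1$; if $\sigma_Y^2 = 0$, then $\bar{\Sigma} = 0 < 1$. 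In every case the posterior variance is strictly smaller than $1$, so the posterior distribution differs from the prior $\mathcal{N}(0,1)$.

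There is no real obstacle here since everything follows from a single algebraic evaluation of the formula in Proposition \ref{prop:gauss_observe}. The only subtlety worth calling out is the degenerate case $\sigma_Y^2 = 0$, where $Y$ is deterministic; here the pseudoinverse convention ensures $(1)^+ = 1$, and the posterior collapses to a point mass rather than staying Gaussian with variance $1$. The content of the corollary is really the monotonicity: any independent Gaussian observation strictly reduces uncertainty, so $\mathcal{N}(0,1)$ is not a fixed point of any single observe statement.
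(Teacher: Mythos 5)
Your proposal is correct and follows essentially the same route as the paper: specialize the formula of Proposition \ref{prop:gauss_observe} to the prior $\mathcal N(0,1)$ and observe that the posterior variance $1-(1+\sigma_Y^2)^{-1}=\sigma_Y^2/(1+\sigma_Y^2)$ is strictly less than $1$. Your extra case split at $\sigma_Y^2=0$ is harmless but unnecessary, since $(1+\sigma_Y^2)^{-1}>0$ in all cases.
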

\begin{proof}
Conditioning decreases variance. If we observe from $\mathcal N(\mu,\sigma^2)$, the variance of the posterior is 
\[ 1 - (1+\sigma^2)^{-1} < 1. \]
\end{proof}

\begin{proposition}
\label{prop:uniqueness_proof}
Consider a morphism $\kappa : n \obsto 0$ in $\cond(\gauss)$ given by
\begin{equation}
\label{eq:cond_normalform}
\kappa(x) = (Ax \eq \mathcal \eta)
\end{equation}
where $A \in \R^{r \times n}$ is in reduced row echelon form with no zero rows, and $\eta \in \gauss(0,r)$. Then the matrix $A$ and distribution $\eta$ are uniquely determined.
\end{proposition}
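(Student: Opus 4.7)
The plan is to suppose that two normal forms $(A,\eta)$ and $(A',\eta')$, with $\eta = \mathcal{N}(\vec c, BB^T)$ and $\eta' = \mathcal{N}(\vec c', B'(B')^T)$, represent the same morphism $\kappa : n \obsto 0$ in $\cond(\gauss)$, and then to read off the triples $(A, \vec c, BB^T)$ from observable invariants of $\kappa$. The basic probe is the copy state $\psi = \cpy_n \circ \mu$ for a Gaussian prior $\mu = \mathcal{N}(\vec m, \Sigma)$; by Proposition~\ref{prop:closedterms}, $\kappa \circ \psi : I \obsto n$ is either $\bot$ or the classical Bayesian posterior, which in coordinates is the Gaussian $\mathcal{N}(\vec m', \Sigma')$ with
\begin{align*}
\vec m' &= \vec m + \Sigma A^T(A\Sigma A^T + BB^T)^+(\vec c - A\vec m),\\
\Sigma' &= \Sigma - \Sigma A^T(A\Sigma A^T + BB^T)^+ A\Sigma.
\end{align*}
Both sides are observational invariants of $\kappa$, so any two normal forms must agree on them for every choice of $(\vec m, \Sigma)$.

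First I would recover $A$. Taking $\vec m = 0$ and $\Sigma = I_n$ gives $\Sigma' = I_n - A^T(AA^T + BB^T)^+ A$. Since $A$ is in reduced row echelon form with no zero rows, it has full row rank $r$, and a pseudoinverse calculation shows that the column space of $I_n - \Sigma'$ equals the row space of $A$. As reduced row echelon matrices are uniquely determined by their row space (see Prop.~\ref{prop:row_equivalence}), this recovers $A$. Next I would recover $BB^T$ by using priors $\Sigma = \sigma^2 I_n$ and expanding $\Sigma'_\sigma$ in powers of $1/\sigma$: the leading term is $\sigma^2 P_{\mathrm{row}(A)^\perp}$, while the order-$\sigma^0$ term simplifies to $A^T(AA^T)^+ BB^T (AA^T)^+ A$, from which $BB^T$ is determined because the linear map $M \mapsto A^T(AA^T)^+ M(AA^T)^+ A$ on symmetric matrices is injective when $A$ has full row rank. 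Finally, $\vec c$ is read off the posterior mean $\vec m' = A^T(AA^T + BB^T)^+ \vec c$ at $\mu = \mathcal{N}(0, I_n)$, since, with $A$ and $BB^T$ already known, the operator $A^T(AA^T + BB^T)^+$ restricted to $\col(A)$ is injective.

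The main obstacle I anticipate is the degenerate case where $\col(A) + \col(BB^T) \subsetneq \R^r$, so that $AA^T + BB^T$ is singular and the pseudoinverse calculus above requires care. In this regime, the components of the observation $A\vec x \eq \mathcal{N}(\vec c, BB^T)$ along the orthogonal complement of $\col(A) + \col(BB^T)$ degenerate into purely deterministic equations $0 = v^T\vec c$, which by Proposition~\ref{prop:closedterms} either render $\kappa$ identically $\bot$ (reducing to the $\bot$ alternative of the statement) or are tautological and may be dropped. Once these degenerate rows are isolated and removed, we are reduced to the nondegenerate case where $AA^T + BB^T$ is genuinely invertible; the recovery arguments above then go through as sketched. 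The remaining work is linear-algebraic bookkeeping, assembled from the facts listed in Propositions~\ref{prop:basechange}--\ref{prop:orth_equivalence}.
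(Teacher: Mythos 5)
Your proof is correct, but it takes a genuinely different route from the paper's. The paper probes $\kappa$ qualitatively: it first uses \emph{deterministic} priors to recover the success set $W=\{x:Ax\in S\}$, then characterizes $\ker A$ as the direction space of the maximal ``vacuous'' affine subspaces (those along which conditioning is a no-op), the key step being the one-dimensional fact that no observe statement leaves $\mathcal N(0,1)$ unchanged (Cor.~\ref{prop:gauss_1d}); it then recovers $\eta$ by zeroing the free coordinates and letting a flat prior $\mathcal N(\lambda I)$ on the pivot coordinates tend to infinity. You instead read everything off the closed-form posterior update for isotropic Gaussian priors: $\mathrm{row}(A)$ from $\col(I_n-\Sigma')$ at $\Sigma=I_n$, then $BB^T$ from the constant term of the expansion of $\Sigma'_\sigma$, then $\vec c$ from the posterior mean. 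The computations check out: since $A$ is in reduced row echelon form with no zero rows it has full row rank, so $AA^T$ and hence $AA^T+BB^T$ and $A\Sigma A^T+BB^T$ (for invertible $\Sigma$) are positive definite, all your pseudoinverses are genuine inverses, $\col\bigl(A^T(AA^T+BB^T)^{-1}A\bigr)=\mathrm{row}(A)$, and the maps $M\mapsto A^T(AA^T)^{-1}M(AA^T)^{-1}A$ and $\vec c\mapsto A^T(AA^T+BB^T)^{-1}\vec c$ are injective. Your approach buys a shorter, single-family-of-probes verification by explicit formula; the paper's buys a more structural argument (supports, vacuous subspaces, a one-dimensional lemma) that also explains \emph{why} the invariants are recoverable. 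One remark: your anticipated obstacle never arises --- full row rank of $A$ forces $\col(A)=\R^r$, so $AA^T+BB^T$ is never singular and no degenerate case needs separate treatment (which is just as well, since ``dropping rows'' would change $A$ and leave the normal form).
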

\begin{proof}
We will probe $\kappa$ by applying the condition \ref{eq:cond_normalform} to different priors $\psi \in \gauss(0,n)$, giving either a result $\psi' \in \gauss(0,n)$ or $\bot$.

Let $S \subseteq \R^r$ be the support of $\eta$ and $W = \{ x \in R^n : Ax \in S \}$. We can recover $W$ from observational behavior, because for deterministic priors $\psi = x_0$, we have $\psi' \neq \bot$ iff $x_0 \in W$. We have $\kappa = \bot$ iff $W=\emptyset$. Assume $W$ is nonempty now.

Next, we can identify the nullspace $K$ of $A$ by considering subspaces along which no conditioning updates happens. Call an affine subspace $V \subseteq \R^n$ \emph{vacuous} if for all $\psi \ll V$ we have $\psi'=\psi$. Any such $V$ must be contained in $W$. We claim that every maximal vacuous subspace is of the form $K+x_0$ where $x_0 \in W$: 

Every space of the form $K+x_0$ is clearly vacuous: If $\psi \ll K$ then the condition \eqref{eq:cond_normalform} becomes constant as $Ax_0 \eq \eta$. Because by assumption $Ax_0 \in S$, this condition is vacuous and can be discarded without effect.

Let $V$ be any vacuous subspace and $x_0 \in V$. We show $V \subseteq x_0 + K$: Assume there is any other $x_1 \in W$ such that $x_1-x_0 \notin K$ and consider the 1-dimensional prior 
\[ t \sim \mathcal N(0,1), \quad x=x_0 + t(x_1 - x_0) \]
Let $d=A(x_1-x_0) \neq 0$ and find an invertible matrix $T$ such that $Td = (1,0,\ldots,0)^T$. The condition becomes
\[ (t,0,\ldots,0) \eq T\eta-TAx_0. \]
All but the first equations do not involve $t$. By commutativity, they can be computed independently, resulting in an updated right-hand side and a 1-dimensional condition $t =:= \eta'$ with $\eta'$ either a Gaussian or $\bot$. By \ref{prop:gauss_1d}, such a condition cannot leave the prior $\mathcal N(0,1)$ unchanged, contradicting vacuity of $V$. 

Having reconstructed $K$, the matrix $A$ in reduced row echelon form is determined uniquely by its nullspace. Group the coordinates $x_1,\ldots,x_n$ into exactly $r$ pivot coordinates $x_p$ and $n-r$ free coordinates $x_u$. Setting $x_u=0$ in \eqref{eq:cond_normalform} results in the simplified condition $x_p \eq \eta$. It remains to show that we can recover the observing distribution $\mu$ from observational behavior. Intuitively, if we put a flat enough prior on $x_p$, the posterior will resemble $\mu$ arbitrarily closely:

Let $\mu=\mathcal N(b,\Sigma)$ and consider $x_p \sim \mathcal N(\lambda I)$ for $\lambda \to \infty$. The matrix $(I + \lambda^{-1}\Sigma)$ is invertible for all large enough $\lambda$. By the formulas \ref{prop:gauss_observe}, the mean of the posterior is
\[ \bar \mu = (I+\lambda^{-1}\Sigma)^{-1}\mu \xrightarrow{\lambda \to \infty} \mu \]
For the covariance, we truncate Neumann's series 
\[ (I + \lambda^{-1}\Sigma)^{-1} = I - \lambda^{-1}\Sigma + o(\lambda^{-2}) \]
to obtain
\[ \bar \Sigma = \lambda I - \lambda(I + \lambda^{-1}\Sigma)^{-1} = \Sigma + o(\lambda^{-2}) \xrightarrow{\lambda \to \infty} \Sigma \]
\end{proof}

\subsection{Implementation}
\label{sec:impl}

It is straightforward to implement the operational semantics of \S\ref{sec:gaussian_language} in a language like Python. We have done this and we illustrate with further simple programs and results, in addition to the examples in Section~\ref{sec:introA}.

\begin{lstlisting}[caption=Gaussian regression (Fig.~\ref{fig:ridge})]
  xs = [1.0, 2.0, 2.25, 5.0, 10.0]
  ys = [-3.5, -6.4, -4.0, -8.1, -11.0]
  
  a = Gauss.N(0,10)
  b = Gauss.N(0,10)
  f = lambda x: a*x + b
  
  for (x,y) in zip(xs,ys):
    Gauss.condition(f(x), y + Gauss.N(0,0.1))
  \end{lstlisting}
  \begin{figure}[h]
    \centering
	\includegraphics[width=0.5\textwidth]{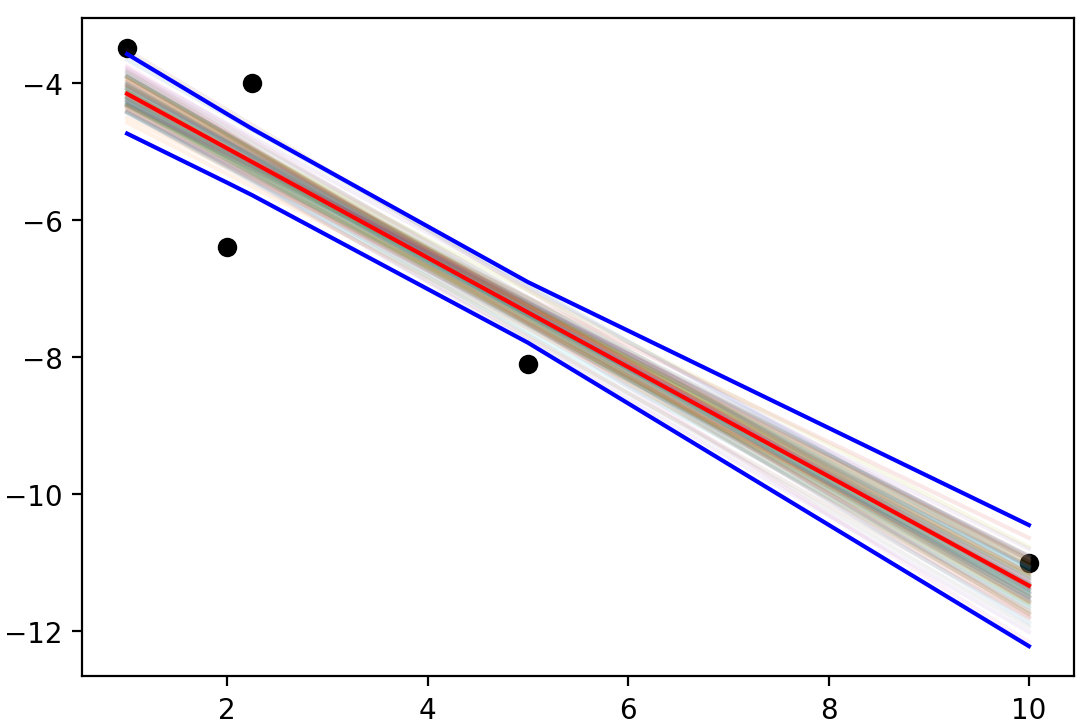}
	\caption{Gaussian regularized regression (Ridge regression), plotting 100 samples, the mean (red) and $\pm 3\sigma$ coordinatewise (blue) \label{fig:ridge}}
\end{figure}

\begin{figure}[h]
	\begin{lstlisting}[caption=1-dimensional K\'{a}lm\'{a}n filter (Fig.~\ref{fig:kalman})]
    xs = [ 1.0, 3.4, 2.7, 3.2, 5.8, 
           14.0, 18.0, 11.7, 19.5, 19.2]
    
    x = [0] * len(xs)
    v = [0] * len(xs)
    
    # Initial parameters
    x[0] = xs[0] + Gauss.N(0,1)
    v[0] = 1.0 + Gauss.N(0,10)
    
    for i in range(1,len(xs)):
      # Predict movement 
      x[i] = x[i-1] + v[i-1]
      v[i] = v[i-1] + Gauss.N(0,0.75)
      
      # Make noisy observations
      Gauss.condition(x[i] + Gauss.N(0,1),xs[i])
    \end{lstlisting}\centering
	\includegraphics[width=0.5\textwidth]{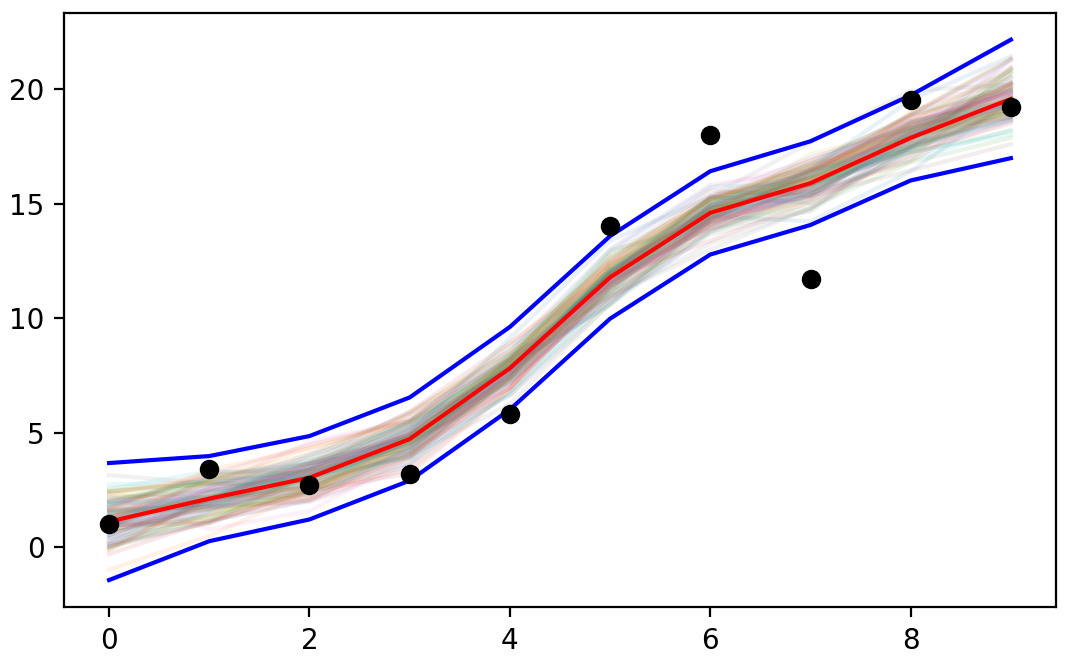}
	\caption{K\'{a}lm\'{a}n filter example\label{fig:kalman}}
\end{figure}

\end{document}